\documentclass[12pt]{article}

\usepackage[utf8]{inputenc} 

\usepackage{amsmath,amsthm,amsfonts,amssymb,mathrsfs}
\usepackage{textcomp,enumerate,bbm,latexsym, bm}
\usepackage{enumitem} 
\usepackage{graphicx} 
\usepackage{url,verbatim}
\usepackage[normalem]{ulem}
\usepackage{algpseudocode}
\usepackage{natbib}
\usepackage{multirow}
\usepackage{booktabs}
\usepackage{chngcntr, subcaption}
\usepackage{authblk}
\usepackage{thmtools} 
\usepackage{thm-restate}
\usepackage{dsfont}
\usepackage{subcaption}

\usepackage[usenames,table]{xcolor}
\usepackage[colorlinks,linktoc=all]{hyperref}
\usepackage[all]{hypcap}
\hypersetup{citecolor=cyan}
\hypersetup{linkcolor=cyan}
\hypersetup{urlcolor=cyan}
\usepackage{cleveref}

\graphicspath{{./art/}}
\usepackage[ruled,vlined,linesnumbered]{algorithm2e}


\newcommand{\blind}{1}

\addtolength{\oddsidemargin}{-.5in}%
\addtolength{\evensidemargin}{-1in}%
\addtolength{\textwidth}{1in}%
\addtolength{\textheight}{1.7in}%
\addtolength{\topmargin}{-1in}%


\usepackage[small]{titlesec}
\titlelabel{\thetitle.\quad} 
\titleformat*{\section}{\bf\large\center} 
\titlespacing*{\section}
{0pt}{3ex}{2ex}
\titlespacing*{\subsection}
{0pt}{2.5ex}{1.5ex}
\titlespacing*{\subsubsection}
{0pt}{1ex}{1ex}

\pdfminorversion=4

\allowdisplaybreaks  

\renewcommand{\Pr}{\mathbb{P}}

\def\weightedsumnoation{S_{n, \vec{\omega}}}
\def\weightedsum{\sum_{i=1}^n\omega_iX_i}
\def\kappaweight{\sum_{i = 1}^n \omega_i^{\gamma}}



\crefname{figure}{Fig.}{Fig.}
\Crefname{figure}{Figure}{Figure}



\theoremstyle{plain}
\newtheorem{theorem}{Theorem}[section]
\newtheorem{lemma}[theorem]{Lemma}

\theoremstyle{definition}
\newtheorem{definition}{Definition}[section]
\Crefname{definition}{Definition}{Definitions}

\crefname{assumption}{assumption}{assumptions}
\crefname{theorem}{theorem}{theorems}

\theoremstyle{remark}
\newtheorem{remark}{Remark}[section]

\begin{document}
\def\spacingset#1{\renewcommand{\baselinestretch}%
{#1}\small\normalsize} \spacingset{1}

 \if1\blind
{
\title{\bf Aggregating Dependent Signals with Heavy-Tailed Combination Tests}
  \author{Lin Gui\vspace{-0.35cm}\\
    Department of Statistics, The University of Chicago \hspace{.2cm}\\
  \hspace{.2cm}\\ 
    Yuchao Jiang\hspace{.2cm}\\
    Departments of Statistics and Biology, Texas A\&M University \hspace{.2cm}\\
    \hspace{.2cm}\\ 
    Jingshu Wang\hspace{.2cm}\\
    Department of Statistics, The University of Chicago
    }
    \date{\vspace{-5ex}}
  \maketitle
} \fi

\if0\blind
{
  \bigskip
  \bigskip
  \bigskip
  \begin{center}
    {\LARGE\bf  Aggregating Dependent Signals with Heavy-Tailed Combination Tests}
\end{center}
  \medskip
} \fi

\bigskip
\begin{abstract}
Combining dependent p-values poses a long-standing challenge in statistical inference, particularly when aggregating findings from multiple methods to enhance signal detection. Recently, p-value combination tests based on regularly varying-tailed distributions, such as the Cauchy combination test and harmonic mean p-value, have attracted attention for their robustness to unknown dependence. This paper provides a theoretical and empirical evaluation of these methods under an asymptotic regime where the number of p-values is fixed and the global test significance level approaches zero.
We examine two types of dependence among the p-values. First, when p-values are pairwise asymptotically independent, such as with bivariate normal test statistics with no perfect correlation, we prove that these combination tests are asymptotically valid. However, they become equivalent to the Bonferroni test as the significance level tends to zero for both one-sided and two-sided p-values. Empirical investigations suggest that this equivalence can emerge at moderately small significance levels. Second, under pairwise quasi-asymptotic dependence, such as with bivariate t-distributed test statistics, our simulations suggest that these combination tests can remain valid and exhibit notable power gains over Bonferroni, even as the significance level diminishes. These findings highlight the potential advantages of these combination tests in scenarios where p-values exhibit substantial dependence. Our simulations also examine how test performance depends on the support and tail heaviness of the underlying distributions.
\end{abstract}

\noindent
{\it Keywords:} Cauchy combination test; Dependent p-values combination; harmonic-mean p-values; Quasi-asymptotic independence; t-copula   
\vfill

\newpage
\spacingset{1.4}

\section{Introduction}
\label{sec:intro}
Combining dependent p-values to assess the global null hypothesis has long been a fundamental challenge in statistical inference. 
A common scenario arises when integrating the results of various methods on the same dataset to enhance signal detection power \citep{wu2016metacycle, Rosenbaumtestingtwice}. When individual p-values have arbitrary dependence, the Bonferroni test is the most common approach with a theoretical guarantee. However, it is often criticized for being overly conservative in practical applications.
 
Specifically, consider $n$ individual p-values $P_1,\ldots, P_n$. To test the global null hypothesis, i.e., all $n$ null hypotheses are true, the Bonferroni test calculates the combined p-value as $n\times\min\left(P_1, \ldots, P_n\right)$. Due to the scaling factor $n$, the Bonferroni combined p-value may exceed any of the individual p-values, leading to a loss of power during the combination process.

Recently, a novel approach gaining traction involves the combination of p-values through transformations based on heavy-tailed distributions \citep{liu2019acat,wilson2019harmonic}. Let $X_i$ be defined as $Q_F(1 - P_i)$, where $F(\cdot)$ represents the cumulative distribution function of a heavy-tailed distribution and $Q_F$ is its quantile function. The core idea is to compute the combined p-value based on the tail distribution of $S_n = \sum_{i = 1}^n X_i$, which under the global null is robust to dependence among the heavy-tailed variables ${X_1,\ldots, X_n}$. 
The Cauchy combination test, which sets $F$ as the standard Cauchy distribution, was first introduced in \citet{liu2019acat} for genome-wide association studies (GWAS) and has since been applied in genetic and genomic research, including spatial transcriptomics \citep{sun2020statistical}, ChIP-seq data \citep{qin2020lisa}, and single-cell genomics \citep{cai2022model}. Another popular method, the harmonic mean p-value \citep{wilson2019harmonic}, employs the Pareto distribution with shape parameter $\gamma = 1$ as $F$. 

Despite the growing popularity of these heavy-tailed combination tests in practical applications, there has been limited theoretical investigation and empirical evaluation of these methods. Existing studies \citep{liu2020cauchy,fang2021heavy} have provided asymptotic validity of these tests as the significance level $\alpha \to 0$ for pairwise bivariate normal test statistics. These results closely related to earlier findings on sums of regularly varying tail variables, showing that $\Pr\left(S_n>x\right)$ and $n\left\{1 - {F}(x)\right\}$ are asymptotically equivalent as $x \to +\infty$, provided that the variables ${X_1, \ldots, X_n}$ are pairwise quasi-asymptotically independent \citep{chen2009sums}. Intuitively, for heavy-tail distributed $X_1,\ldots, X_n$, their maximum typically dominates the sum, making the latter less sensitive to dependence among $X_1,\ldots, X_n$. Yet this same intuition raises doubts about the true benefits of these tests compared to the Bonferroni test. Additionally, the assumption of quasi-asymptotic independence, while covering any bivariate normal variables that are not perfectly correlated, remains more stringent than allowing arbitrary dependence. For example, bivariate t-distributed variables, which are frequently used as test statistics, are not quasi-asymptotically independent.  This raises questions about the robustness of these tests when faced with unknown dependence structures. 

This paper addresses these concerns through theoretical and empirical analyses. 
 Many applications employ heavy-tailed combination tests to aggregate results from different methods or studies, often in settings where the number of base hypotheses, $n$, is moderate rather than excessively large. 
Accordingly, we focus on scenarios where $n$ is fixed and analyze the asymptotic regime as the significance level $\alpha \to 0$. 
Our theoretical investigation shows that when test statistics are quasi-asymptotically independent, particularly when they follow a bivariate normal distribution with imperfect correlation, the rejection regions of heavy-tailed combination tests are asymptotically equivalent to those of the Bonferroni test as $\alpha$ approaches zero. This suggests that in the same asymptotic regime where combination tests have proven to be valid, they offer no real power advantage over Bonferroni's approach.
However, when the assumption of asymptotic independence is violated, such as when test statistics follow a multivariate t distribution, our empirical results indicate that combination tests still appear to be asymptotically valid when the tail index
$\gamma \leqslant 1$, despite the lack of a theoretical guarantee. More strikingly, they exhibit significantly greater power than the Bonferroni test, highlighting their potential advantages in settings where p-values are strongly dependent, a scenario that often arises when aggregating results from different methods applied to the same dataset. 
Furthermore, through simulations and real-world case studies,
we observe that the empirical validity and power of these tests are affected by both the heaviness and support of the heavy-tail distribution. 

\section{Model setup and theoretical results} \label{sec:setup}
\subsection{Model setup}
Consider $n$ test statistics $T_1,\ldots,T_n$, where each $T_i$ is for a base null hypothesis $H_{0,i}$. For each base hypothesis, we construct a one-sided or two-sided base p-value $P_i$ based on the distribution of $T_i$ under $H_{0,i}$.
We are interested in testing the global null hypothesis
\[
    H_0^\text{global}: H_{0,1}\cap \cdots \cap H_{0,n}.
\]
The test statistics $T_1,\ldots, T_n$ may exhibit unknown dependence structures among each other.

For the heavy-tailed combination tests, we apply a transformation of the p-values into quantiles of heavy-tailed distributions. Specifically, let $F$ denote the cumulative distribution function (CDF) of the heavy-tailed distribution and $Q_F$ represent its quantile function, defined as
\[
Q_F(t) = \inf\left\{x\in\mathbb{R}:\ t\leqslant F(x)\right\}.
\]
We define the individual transformed test statistics as $\left\{X_i = Q_F(1- P_i)\right\}_{i=1}^n$. A combination test can then be constructed based on the sum $S_n = X_1+\cdots+X_n$, the average $M_n=\left(X_1+\cdots+X_n\right)/n$, or more generally, any weighted sum $S_{n, \vec{\omega}}=\sum_{i=1}^n\omega_iX_i$ with non-random positive weights $\omega_i$s.

\subsection{Tail properties of the sum $S_n$} 
\label{sec:tail_prob}
We begin by reviewing existing theoretical results on the tail properties of $S_n$.
If $X_1, \ldots, X_n$ belong to the sub-exponential family, a major class of heavy-tailed distributions, it is well-known that the tail probability of $S_n=X_1+\cdots+X_n$
is asymptotically equivalent to the sum of individual tail probabilities under the assumption that the $X_i$s are mutually independent. That is,
\begin{equation}
\label{eq:asymp_tail_probability}
\lim_{x\to+\infty}\frac{\Pr\left(S_n>x\right)}{n\bar{F}(x)}=1
\end{equation}
where $\bar{F}=1-F$ denotes the tail probability \citep{embrechts2013modelling}. 
When the independence assumption fails, previous works \citep{chen2009sums,asmussen2011efficient,albrecher2006tail,kortschak2009asymptotic,geluk2006tail,tang2008insensitivity} have shown
that \eqref{eq:asymp_tail_probability} still holds for different subclasses of sub-exponential distributions under certain assumptions of the dependence structure.

Here, we restate several key results that form the foundation of the theoretical properties of the heavy-tailed combination tests, which will be detailed in \Cref{sec:def_combination_test}. 
For any variable $X$, we denote $X^+ = \max\left(X, 0\right)$ and $X^- = \max\left(-X, 0\right)$. 
To begin, we introduce the concepts of quasi-asymptotic independence and the consistently-varying subclass $\mathscr{C}$ of sub-exponential distributions, following \cite{chen2009sums}.

\begin{definition}[Quasi-asymptotic independence]
\label{def:quasi_indep}
Two non-negative random variables $X_{1}$ and $X_{2}$ with cumulative distribution functions $F_{1}$ and $F_{2}$, are quasi-asymptotically independent if
\begin{equation}
\label{eq:quasiasymp}
    \lim _{x \to +\infty} \frac{\Pr\left(X_{1}>x, X_{2}>x\right)}{\overline{F_{1}}(x)+\overline{F_{2}}(x)}=0
\end{equation}
More generally, two real-valued random variables, $X_{1}$ and $X_{2}$, are quasi-asymptotically independent if \eqref{eq:quasiasymp} holds with $\left(X_{1}, X_{2}\right)$ in the numerator replaced by $\left(X_{1}^{+}, X_{2}^{+}\right)$, $\left(X_{1}^{+}, X_{2}^{-}\right)$, and $\left(X_{1}^{-}, X_{2}^{+}\right)$.
\end{definition}

When $X_1$ and $X_2$ have the same marginal distribution, \eqref{eq:quasiasymp} can be rewritten as  $\Pr(X_1>x \mid X_2 > x)\overset{x \to +\infty}{\to} 0$, indicating that $X_1$ and $X_2$ are independent in the tail.

\begin{definition}[Consistently-varying class $\mathscr{C}$]
\label{def:classC}
A distribution with the cumulative distribution function $F(\cdot)$ is in class $\mathscr{C}$ if \begin{equation*}
    \lim_{y\to1^+}\liminf_{x\to+\infty}\frac{\bar{F}(xy)}{\bar{F}(x)}=1\text{ or }\lim_{y\to1^-}\limsup_{x\to+\infty}\frac{\bar{F}(xy)}{\bar{F}(x)}=1
\end{equation*}
\end{definition}

Theorem 3.1 in \citet{chen2009sums} 
established the asymptotic tail probability of $S_n$ for distributions within  $\mathscr{C}$, provided that quasi-asymptotic independence holds.

\begin{restatable}[Theorem 3.1 of \citet{chen2009sums}]{theorem}{SumTailProbEstimate}
\label{thm:sum_tail_prob_estimation}
Let $X_{1}, \ldots, X_{n}$ be $n$ pairwise quasi-asymptotically independent real-valued random variables with distributions $F_{1}, \ldots, F_{n} \in \mathscr{C}$, respectively. Denote $S_n=\sum_{i=1}^nX_i$. Then, it holds that 
\begin{equation}
\label{eq:sum_tail}
\lim_{x\to+\infty}\frac{\Pr\left(S_{n}>x\right)}{\sum_{i=1}^{n} \overline{F_{i}}(x)}=1 \ .
\end{equation}
\end{restatable}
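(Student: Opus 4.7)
The plan is to establish matching asymptotic upper and lower bounds on the ratio $\Pr(S_n > x)/\sum_{i=1}^n \bar{F}_i(x)$ via a two-sided $\pm\epsilon$ perturbation argument, sending $x \to \infty$ first and then $\epsilon \to 0^+$. The class $\mathscr{C}$ hypothesis absorbs the $\epsilon$-perturbations through the limit $\bar{F}(x(1\pm\epsilon))/\bar{F}(x) \to 1$, and it also supplies the dominated-variation bound $\bar{F}(cx) = O(\bar{F}(x))$ for every fixed $c > 0$ (since consistent variation implies dominated variation). Pairwise quasi-asymptotic independence eliminates all cross terms that arise when two distinct $X_i$'s are simultaneously far from zero.

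For the upper bound, fix $\epsilon \in (0,1)$ and use the inclusion
\[
\{S_n > x\} \subseteq \bigcup_{i=1}^n \{X_i > x(1-\epsilon)\} \cup \bigl\{S_n > x,\ \max_i X_i \leq x(1-\epsilon)\bigr\}.
\]
On the second event, letting $i^*$ be the maximizing index, $X_{i^*} \leq x(1-\epsilon)$ forces $\sum_{j \neq i^*} X_j > \epsilon x$, so some $j \neq i^*$ satisfies $X_j > \epsilon x/(n-1)$; hence this event is contained in $\bigcup_{i \neq j} \{X_i > \epsilon x/(n-1),\ X_j > \epsilon x/(n-1)\}$. Applying quasi-asymptotic independence to $(X_i^+, X_j^+)$ at the common threshold $\epsilon x/(n-1)$, each such probability is $o(\bar{F}_i(\epsilon x/(n-1)) + \bar{F}_j(\epsilon x/(n-1)))$, and dominated variation reduces this to $o(\sum_k \bar{F}_k(x))$. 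The first union contributes at most $\sum_i \bar{F}_i(x(1-\epsilon))$, so sending $x \to \infty$ and then $\epsilon \to 0^+$ with class $\mathscr{C}$ yields $\limsup_x \Pr(S_n > x)/\sum_i \bar{F}_i(x) \leq 1$.

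For the lower bound, consider the pairwise disjoint events $A_i = \{X_i > x(1+\epsilon),\ X_j \leq x(1+\epsilon)\ \forall j \neq i\}$. Since $A_i \cap \{\sum_{j \neq i} X_j > -\epsilon x\}$ implies $S_n > x$,
\[
\Pr(S_n > x) \geq \sum_{i=1}^n \Pr(A_i) - \sum_{i=1}^n \Pr\Bigl(X_i > x(1+\epsilon),\ \sum_{j \neq i} X_j \leq -\epsilon x\Bigr).
\]
Each $\Pr(A_i)$ equals $\bar{F}_i(x(1+\epsilon))$ minus a joint-exceedance correction $\Pr(X_i > x(1+\epsilon),\ \exists j \neq i\colon X_j > x(1+\epsilon))$, which is $o(\sum_k \bar{F}_k(x))$ by quasi-asymptotic independence of $(X_i^+, X_j^+)$. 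Each term in the subtractive sum is bounded by $\sum_{j \neq i} \Pr(X_i^+ > x(1+\epsilon),\ X_j^- > \epsilon x/(n-1))$; reducing to the common threshold $y = \min(x(1+\epsilon),\ \epsilon x/(n-1))$ and combining quasi-asymptotic independence of $(X_i^+, X_j^-)$ with dominated variation makes this $o(\sum_k \bar{F}_k(x))$ as well. Letting $\epsilon \to 0^+$ and invoking class $\mathscr{C}$ once more gives $\liminf_x \Pr(S_n > x)/\sum_i \bar{F}_i(x) \geq 1$.

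The main obstacle is the asymmetry between the threshold scales prescribed by quasi-asymptotic independence (a single common $x$ in both coordinates) and those that arise naturally in the cross terms (for instance $x(1+\epsilon)$ on one coordinate versus $\epsilon x/(n-1)$ on the other, or positive versus negative sign). One must systematically reduce asymmetric events to common-threshold events via monotonicity and then trade the resulting constants back using class $\mathscr{C}$ together with the implication from consistent to dominated variation. A related subtlety is that $\Pr(X_j < -\epsilon x/(n-1))$ is not directly comparable to any $\bar{F}_k(x)$ and is only accessible through the $(X_i^+, X_j^-)$ clause of the two-sided quasi-asymptotic independence definition.
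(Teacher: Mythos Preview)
The paper does not supply a proof of this statement: it is quoted verbatim as Theorem~3.1 of \citet{chen2009sums} and used as a black box. Your argument is the standard route to this result and is correct. In fact, the upper-bound half of your decomposition---splitting $\{S_n>x\}$ according to whether some $X_i>(1-\epsilon)x$, then on the complementary event pigeonholing to find two distinct indices each exceeding a constant multiple of $x$ and killing the cross term via quasi-asymptotic independence together with the dominated-variation consequence of $F_i\in\mathscr{C}$---is exactly what the paper reproduces (in weighted, uniform-in-$\rho$ form) inside the proof of \Cref{lemma:tail_prob_max_vs_sum_uniform_version}. Your lower bound via the disjoint events $A_i=\{X_i>x(1+\epsilon),\ X_j\le x(1+\epsilon)\ \forall j\neq i\}$, with the $(X_i^+,X_j^-)$ clause of \Cref{def:quasi_indep} handling the negative drag from $\sum_{j\neq i}X_j$, is likewise the textbook maneuver and is not spelled out anywhere in the paper.
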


The asymptotic equivalence \eqref{eq:sum_tail} can hold for broader subclasses of heavy-tailed distributions beyond $\mathscr{C}$ under stronger dependence assumptions. For instance, \citet{geluk2009asymptotic} provided the necessary 
dependence structure requirements for this equivalence to hold for dominated-varying tailed and long-tailed random variables. 
Additionally, \citet{asmussen2011efficient} verified this for log-normal distributions when coupled with a Gaussian copula. However, \citet{botev2017accurate} showed that convergence in \eqref{eq:sum_tail} can be extremely slow for log-normal distributions, requiring the tail probability to be as small as $10^{-233}$ to achieve reasonable approximations. 

Moreover, researchers have observed asymptotic equivalence between the tail probability of the $S_n$ and that of $\max(X_1, \ldots, X_n)$.
\begin{restatable}{corollary}{MaxEquivSum}
\label{cor:tail_prob_max_vs_sum}
With the same setting as in \Cref{thm:sum_tail_prob_estimation}, the tail probability of the sum and the maximum has the following relationship
\begin{equation*}
\label{eq:tail_prob_max_vs_sum}
\lim_{x\to+\infty}\frac{\Pr\left(\max_{i=1,\ldots,n}X_i>x\right)}{\sum_{i=1}^n\overline{F_{i}}(x)}=\lim_{x\to+\infty}\frac{\Pr\left(S_n>x\right)}{\sum_{i=1}^n\overline{F_{i}}(x)}=1.
\end{equation*}
\end{restatable}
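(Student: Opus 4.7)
The second equality in the statement is precisely the conclusion of \Cref{thm:sum_tail_prob_estimation}, so the only thing to prove is
\[
\lim_{x\to+\infty}\frac{\Pr\left(\max_{i=1,\ldots,n}X_i>x\right)}{\sum_{i=1}^n\overline{F_i}(x)} = 1.
\]
The plan is a classical sandwich argument using Bonferroni's inequalities applied to the events $A_i(x)=\{X_i>x\}$, together with pairwise quasi-asymptotic independence to kill the remainder.

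First, I would observe that the union bound gives the upper bound of the correct asymptotic order,
\[
\Pr\!\left(\max_i X_i>x\right) \;=\; \Pr\!\left(\bigcup_{i=1}^n A_i(x)\right) \;\leq\; \sum_{i=1}^n \overline{F_i}(x).
\]
For the matching lower bound, Bonferroni's second inequality yields
\[
\Pr\!\left(\max_i X_i>x\right) \;\geq\; \sum_{i=1}^n \overline{F_i}(x) \;-\; \sum_{1\leq i<j\leq n}\Pr\!\left(X_i>x,\,X_j>x\right).
\]

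The key step is to show that the pairwise remainder is $o\!\left(\sum_i \overline{F_i}(x)\right)$. For $x>0$ one has $\{X_i>x\}=\{X_i^+>x\}$, so $\Pr(X_i>x,X_j>x)=\Pr(X_i^+>x,X_j^+>x)$, and pairwise quasi-asymptotic independence (\Cref{def:quasi_indep}) gives
\[
\Pr\!\left(X_i>x,\,X_j>x\right) = o\!\left(\overline{F_i}(x)+\overline{F_j}(x)\right),\qquad x\to+\infty.
\]
Since $n$ is fixed, summing over the $\binom{n}{2}$ pairs preserves the little-$o$, so
\[
\sum_{1\leq i<j\leq n}\Pr\!\left(X_i>x,\,X_j>x\right) \;=\; o\!\left(\sum_{i=1}^n \overline{F_i}(x)\right).
\]
Dividing the upper and lower bounds by $\sum_i \overline{F_i}(x)$ (which is positive since each $F_i\in\mathscr{C}$) and letting $x\to+\infty$ sandwiches the ratio between quantities both tending to $1$, giving the claim. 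There is no serious obstacle here; the entire argument is mechanical, and all the work is hidden in \Cref{def:quasi_indep}, which was formulated precisely to render the Bonferroni remainder negligible relative to the linear leading term.
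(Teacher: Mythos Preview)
Your proposal is correct and follows essentially the same approach as the paper's proof: both use the union bound for the upper estimate, Bonferroni's second inequality for the lower estimate, and quasi-asymptotic independence to show the pairwise intersection terms are negligible, concluding via a squeeze argument. Your version is in fact slightly more explicit in noting that $\{X_i>x\}=\{X_i^+>x\}$ for $x>0$, which is precisely what connects the joint tail to \Cref{def:quasi_indep}.
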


\begin{remark}
We provide a proof of \Cref{cor:tail_prob_max_vs_sum} in Supplementary \Cref{sec:pf_sum_max_same_tail}, 
which essentially restates earlier results \citep{geluk2006tail,tang2008insensitivity,ko2008sums}, to facilitate understanding for interested readers.
\end{remark}

\begin{table}[t]
\centering
\caption{Regularly varying tailed distributions and their tail indices. $\Phi$ is the cumulative distribution function of a standard normal distribution. $\Gamma$ is the gamma function.  $J(s,x)=\int_x^\infty t^{s-1}e^{-t}dt$ is the incomplete gamma function and $I_x(a,b)=\int_0^xt^{a-1}(1-t)^{b-1}dt/\int_0^1t^{a-1}(1-t)^{b-1}dt$ is the regularized incomplete eta function, $\bar F_t(c)$ is the survival function at $c$ of the corresponding t distribution with the same degree of freedom $\gamma$}
\begin{tabular}{@{}lll@{}}
\hline
Distributions: Survival Function & Tail index & Support \\ \hline
Cauchy: $\arctan\left(1/x\right)/\pi$                & 1 & $\mathbb{R}$ \\
Log Cauchy: $\arctan\left(1/\log x\right)/\pi$         & 0 & $\mathbb{R}^+$ \\
Levy: $2\Phi\left(x^{-1/2}\right)-1$               & $1/2$ & $\mathbb{R}^+$ \\
Pareto: $\left(1/x\right)^\gamma$, $\gamma>0$      & $\gamma$ & $[1, +\infty)$\\
Fr\'echet: $1-e^{-x^{-\gamma}}$, $\gamma>0$ & $\gamma$ & $\mathbb{R}^+$ \\
Inverse Gamma: $1-J(\gamma,1/x)/\Gamma(\gamma)$, $\gamma>0$ & $\gamma$ & $\mathbb{R}^+$\\ 
Log Gamma: $1-J(1,\gamma\log x)$, $\gamma>0$ & $\gamma$ & $\mathbb{R}^+$\\ 
Student's t: $I_{\gamma/(x^2+\gamma)}\left(\gamma/2,1/2\right)/2$, $\gamma>0$  & $\gamma$ & $\mathbb{R}$\\ 
Left-truncated t: $I_{\gamma/(x^2+\gamma)}\left(\gamma/2,1/2\right)/(2\bar F_t(c))$, $\gamma>0$  & $\gamma$ & $[c, +\infty)$\\ 
\hline
\end{tabular}
\label{tab:regular_distributions}
\end{table}

\Cref{tab:regular_distributions} presents a list of common distributions in $\mathscr{C}$. All of these distributions also belong to a smaller subclass, the regularly varying tailed distributions $\mathscr{R}$, defined as follows:

\begin{definition}[Regularly varying tailed class $\mathscr{R}_{-\gamma}$]
\label{def:classR}
A distribution $F$ is in class $\mathscr{R}_{-\gamma}$ if for some $\gamma\geqslant0$ and any $y>0$
\[
\lim_{x\to+\infty}\frac{\bar{F}(xy)}{\bar{F}(x)}=y^{-\gamma}.
\]
\end{definition}

Following \citet{cline1983infinite}, the parameter $\gamma$ is referred to as the tail index, characterizing the tail heaviness \citep{teugels1987regular} of a distribution. Distributions
with a smaller $\gamma$ exhibit heavier tails.
For example, for the Student's t distribution, $\gamma$ is the same as the degree of freedom, with the Cauchy distribution being a special case with $\gamma = 1$.

In \Cref{tab:regular_distributions}, all distributions, except for the Student's t distributions that includes the Cauchy distribution, have a lower bound in their support. In contrast, the Student's t distributions have symmetric densities around the origin, and their supports cover the entire real line. As a consequence, when $p_i$ approaches $1$, the transformed test statistics $X_i$ can become substantially negative, which may affect both the power and type-I error control in the associated combination tests. To address this issue, we introduce the left-truncated Student's t distribution in \Cref{tab:regular_distributions}, defined as a conditional Student's t distribution with a left-bounded support interval of $[c, +\infty)$. Specifically, we define $F_{t, \gamma}(x)=\Pr(X \leqslant x)$ with $X$ following a Student's t distribution with degree of freedom $\gamma$. The cumulative distribution function of the left-truncated t distribution is
\[
F(x) = \Pr(X\leqslant x \mid X\geqslant c)=\frac{F_{t, \gamma}(x) - F_{t, \gamma}(c)}{1-F_{t, \gamma}(c)},\quad x\geqslant c.
\]
With this definition, the left-truncated t distribution remains a regularly varying tailed distribution with the same tail index $\gamma$, as proved in \Cref{prop:left_truncated_cauchy_cdf}. In our experiments, we vary the truncation level $c$ by setting $c$ as the $1-p_0$ quantile of the t distribution with the same tail index $\gamma$, and we refer to $p_0$ as the truncation threshold. This approach allows us to explore the effects of different levels of truncation on the performance of combination tests in practice.

\subsection{Asymptotic validity of the heavy-tailed combination tests}
\label{sec:def_combination_test}
The asymptotic validity of heavy-tailed transformation-based combination tests can be established based on \Cref{thm:sum_tail_prob_estimation}. 
 In particular, \citet{liu2020cauchy} demonstrated the asymptotic validity of the Cauchy combination test. Extending this work, \citet{fang2021heavy} expanded these results to cover regularly varying distributions under additional constraints. However, both results are only limited to two-sided p-values, which are always positively dependent.
 In this section, we present a unified theory for the asymptotic validity of the heavy-tailed combination tests that accommodates both one-sided and two-sided p-values. 

We first define combination tests applying the sum $S_n$, directly inspired by \Cref{thm:sum_tail_prob_estimation}.

\begin{definition}[Combination test]
\label{def:standard_combination_test}
Let $F$ be the cumulative distribution function of a distribution in $\mathscr{R}_{-\gamma}$. The combination test approximates the tail probability $\Pr(S_n > x)$ by $n\bar F(x)$. Specifically, the combined p-value is defined as $n\bar F(S_n)$, and the corresponding decision function at the significance level $\alpha$ is 
\begin{equation}
\label{eq:combination_test_standard}
\phi_{\text{std}}^F = 1_{\left\{S_n> Q_F\left(1-\alpha/n\right)\right\}}.
\end{equation}
\end{definition}

In addition to the sum $S_n$, the widely accepted Cauchy and harmonic combination tests, as introduced by \citet{liu2019acat} and \cite{wilson2019harmonic}, utilize the average $M_n$ and directly approximate the tail probability $\Pr(M_n > x)$ using $\bar F(x)$. Indeed, any regularly varying tailed distribution with tail index $\gamma = 1$ can be used to define a similar average-based combination test:

\begin{definition}[Average-based combination test]
\label{def:average_combination_test}
Let $F$ be the cumulative distribution function of a distribution in $\mathscr{R}_{-1}$. The average-based combination test approximates the tail probability $\Pr(M_n > x)$ by $\bar F(x)$. Specifically, the combined p-value is defined as $\bar F(M_n)$ and the corresponding decision function at the significance level $\alpha$ is 
\begin{equation}
\label{eq:combination_test_mean}
\phi_{\text{avg}}^F = 1_{\{M_n> Q_F(1-\alpha)\}}.
\end{equation}
\end{definition}

More generally, one can define a weighted combination test, which includes both the tests defined in \Cref{def:standard_combination_test,def:average_combination_test} as special cases. As noted in \cite{liu2020cauchy} and \cite{fang2021heavy}, the weighted test can incorporate prior information on the importance of each base hypothesis to enhance power. 

\begin{definition}[Weighted combination test] 
\label{def:weighted_combination_test}
Let $F$ be the cumulative distribution function of a distribution in $\mathscr{R}_{-\gamma}$ and let $\vec{\omega}=\left(\omega_1,\ldots,\omega_n\right)\in\mathbb{R}_+^n$ be a non-random weight vector associated with each hypothesis.
Define the weighted sum as $S_{n, \vec{\omega}}=\sum_{i=1}^n\omega_iX_i$ and let $\kappa = \sum_{i = 1}^n \omega_i^{\gamma}$ where $\omega_i^{\gamma}$ is the $\gamma$th power of $\omega_i$. 
Then the weighted combination test approximates the tail probability $\Pr(S_{n, \vec{\omega}} > x)$ by $\kappa \bar F(x)$. Specifically, the combined p-value is defined as $\kappa \bar F(S_{n, \vec{\omega}})$ and the corresponding decision function at the significance level $\alpha$ is 
\begin{equation}
\label{eq:combination_test_weighted}
 \phi_{\text{wgt}}^{F, \vec{\omega}} = 1_{\left\{S_{n, \vec{\omega}}> Q_F\left(1-\alpha/\kappa\right)\right\}}.
\end{equation}
 
\end{definition}

\begin{remark}
   The sum-based and average-based combination test in \Cref{def:standard_combination_test,def:average_combination_test} are special cases of the weighted combination tests with uniform weights $\omega_i = 1$ or $\omega_i = 1/n$. Although the weighted combination test is not scale-free regarding the weights, empirical simulations suggest that the weight scaling has minimal practical impact.
\end{remark}

The asymptotic validity of the combination tests in \citet{liu2020cauchy} and \citet{fang2021heavy} relies on pairwise bivariate normality of the test statistics $\{T_i\}_{i=1}^n$, ensuring pairwise quasi-asymptotic independence as required by \Cref{thm:sum_tail_prob_estimation}. Under the same assumption, we can establish the asymptotic validity for the combination tests defined in \Cref{def:standard_combination_test,def:average_combination_test,def:weighted_combination_test}. Additionally, the asymptotic result is uniform in the nuisance parameters, particularly pairwise correlation $\rho_{ij}$s, if we impose mild constraints on them.

\begin{restatable}{theorem}{AsympErrorControl}
\label{thm:asymp_type_I_error_control}
Assume that the test statistics $\{T_i\}_{i=1}^n$ are pairwise normal with correlations $\rho_{ij} \in [-\rho_0, \rho_0]$ ($\rho_0>0$) and are marginally following standard normal distributions under the global null. 
Then, the type-I error of the tests defined in \Cref{def:standard_combination_test,def:average_combination_test,def:weighted_combination_test} using two-sided p-values $\left\{P_i=2-2\Phi(|T_i|)\right\}_{i=1}^n$ satisfies
\begin{equation}
\label{eq:asymp_type_I_error_control}
\lim_{\alpha\to0^+}\sup_{\forall~i\neq j,~\rho_{ij}\in[-\rho_0,\rho_0]}\frac{\Pr_{H_0^\text{global}}\left(\phi_{\text{comb}}^F=1\right)}{\alpha}=1,
\end{equation}
where $\phi_{\text{comb}}^{F}$ is the test's decision function defined in \eqref{eq:combination_test_standard} to \eqref{eq:combination_test_weighted}. 
For the combination tests with one-sided p-values $\left\{P_i=1-\Phi(T_i)\right\}_{i=1}^n$, the relationship \eqref{eq:asymp_type_I_error_control} still holds with an additional assumption that the cumulative distribution function $F(\cdot)$ satisfies that $\bar F(x)\geqslant F(-x)$ for sufficiently large $x$. 
\end{restatable}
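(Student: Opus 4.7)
The plan is to verify the hypotheses of \Cref{thm:sum_tail_prob_estimation} for the weighted transformed statistics $\{\omega_i X_i\}_{i=1}^n$, thereby extract $\Pr(S_{n,\vec{\omega}} > x) \sim \kappa \bar F(x)$, and then substitute the quantile $x_\alpha = Q_F(1 - \alpha/\kappa)$, for which $\bar F(x_\alpha) = \alpha/\kappa$ by construction. Since each null p-value is uniformly distributed, $X_i \sim F$ marginally, so $\omega_i X_i$ has tail $\bar F(x/\omega_i) \sim \omega_i^\gamma \bar F(x)$ by the regular variation of $F \in \mathscr{R}_{-\gamma}$, putting it in $\mathscr{R}_{-\gamma} \subset \mathscr{C}$. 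Summing across $i$ then yields $\sum_{i=1}^n \bar F_{\omega_i X_i}(x) \sim \kappa \bar F(x)$, and the sum- and average-based tests of \Cref{def:standard_combination_test,def:average_combination_test} fall out by specializing to $\omega_i = 1$ or $\omega_i = 1/n$.

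The technical core is pairwise quasi-asymptotic independence of $(\omega_i X_i, \omega_j X_j)$, which by positive homogeneity and regular variation reduces to QAI of $(X_i, X_j)$. For two-sided p-values, $X_i = Q_F(2\Phi(|T_i|) - 1)$ is a non-decreasing function of $|T_i|$, so $\{X_i > x\} = \{|T_i| > c_x\}$ with $c_x = \Phi^{-1}(1 - \bar F(x)/2) \to \infty$. I would then invoke the classical bivariate-normal tail-independence estimate: for $(T_i, T_j)$ bivariate normal with $|\rho_{ij}| \leq \rho_0 < 1$,
\[
\Pr(|T_i| > c, |T_j| > c) = O\!\left(c^{-2} \exp\!\left(-\tfrac{c^2}{1+\rho_0}\right)\right) = o(\bar\Phi(c)),
\]
after handling the opposite-sign quadrants via the sign-flip $T_j \mapsto -T_j$. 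Combined with $\bar\Phi(c_x) \sim \bar F(x)/2$, this gives $\Pr(X_i > x, X_j > x) = o(\bar F(x))$; the other three QAI conditions in \Cref{def:quasi_indep} are automatic because $X_i$ is bounded below.

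For one-sided p-values, $X_i = Q_F(\Phi(T_i))$ is monotone in $T_i$, and the $(X_i^+, X_j^+)$ bound follows identically from upper-tail bivariate normal independence. The mixed terms $(X_i^+, X_j^-)$ and $(X_i^-, X_j^+)$ are where the auxiliary assumption $\bar F(x) \geq F(-x)$ enters: it guarantees $\bar F_{X_i^-}(x) = F(-x) \leq \bar F(x)$, keeping the denominator in \eqref{eq:quasiasymp} of order $\bar F(x)$. The numerator $\Pr(T_i > a_x, T_j < -b_x)$, with $a_x = \Phi^{-1}(F(x))$ and $b_x = -\Phi^{-1}(F(-x))$ both tending to $+\infty$, is handled by conditioning on $T_j$ and applying Mills-ratio bounds to the conditional normal, yielding $O\!\left(\bar\Phi(a_x/\sqrt{1-\rho_0^2}) \vee \bar\Phi(a_x) \Phi(-b_x)\right)$, which is $o(\bar F(x))$ since $F(-x) \leq \bar F(x) \to 0$.

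The main obstacle I anticipate is the uniformity in $\rho_{ij} \in [-\rho_0, \rho_0]$ in \eqref{eq:asymp_type_I_error_control}, since \Cref{thm:sum_tail_prob_estimation} is stated only pointwise. My plan is to retrace the Chen--Yuen argument while tracking the dependence on $\rho_0$: the Mills-ratio bounds above are uniform over $|\rho| \leq \rho_0$ because the exponential rate $1/(1+\rho)$ is bounded below by $1/(1+\rho_0) > 1/2$, and the Karamata uniform convergence theorem promotes $\bar F(x/\omega_i)/\bar F(x) \to \omega_i^\gamma$ to uniform convergence on compacta. Together these should upgrade the asymptotic equivalence $\Pr(S_{n,\vec{\omega}} > x) \sim \kappa \bar F(x)$ to one that holds uniformly over the correlation matrix, from which \eqref{eq:asymp_type_I_error_control} follows. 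As a fallback, compactness of $[-\rho_0, \rho_0]^{n(n-1)/2}$ and continuity of the joint null distribution in the correlations allow any sequence of alleged violators to be extracted to a limit point, contradicting the pointwise conclusion.
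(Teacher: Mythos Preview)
Your overall architecture matches the paper's: show each $\omega_i X_i$ has a regularly varying tail, establish pairwise quasi-asymptotic independence uniformly over $\rho_{ij}\in[-\rho_0,\rho_0]$, re-run the Chen--Yuen estimate with that uniformity tracked, and substitute $x_\alpha=Q_F(1-\alpha/\kappa)$. The paper packages the middle two steps as separate lemmas and proves them essentially as you outline, including the explicit retracing of the sum-tail bound with $\sup_\rho$ inside.

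There is, however, a genuine gap in your two-sided argument. You claim the mixed QAI conditions are ``automatic because $X_i$ is bounded below,'' but this fails precisely for the headline distributions: when $F$ is Cauchy or Student's~$t$, the support is all of $\mathbb{R}$, and $X_i=Q_F(2\Phi(|T_i|)-1)\to-\infty$ as $|T_i|\to 0$. So $\Pr(X_i^+>x,\,X_j^->x)=o(\bar F(x))$ requires its own argument. The paper handles it by observing that $\{X_j\le -x\}=\{|T_j|\le t_1(x)\}$ with $t_1(x)=\Phi^{-1}\!\bigl(\tfrac{1+F(-x)}{2}\bigr)\to 0$, then bounding $\Pr(|T_j|\le t_1\mid |T_i|>t_2)\le\sqrt{2/\pi}\,t_1/\sqrt{1-\rho_0^2}$ via the mean value theorem on the conditional normal density. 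This is a small-ball estimate rather than a tail--tail one, and nothing in your sketch covers it. A smaller point on the one-sided case: the hypothesis $\bar F(x)\ge F(-x)$ is used to control the \emph{numerator}---it lets you enlarge $\{X_j\le -x\}$ to $\{-T_j\ge\Phi^{-1}(F(x))\}$ and then invoke bivariate-normal tail independence for $(T_i,-T_j)$---not the denominator, which is $\sim 2\bar F(x)$ under the null regardless of that assumption.
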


\begin{remark}
Our analysis considers fixed $n$. Prior work \citep{liu2020cauchy,long2023cauchy} established the asymptotic validity of the Cauchy combination test as $n\to\infty$, assuming $n$ grows at a slower rate than the decay of $\alpha\to0$. In addition, \citet{vovk2020combining} introduced an adjusted rejection threshold for the harmonic mean p-value to ensure validity as $n\to \infty$, even under arbitrary dependence among the p-values.
\end{remark}

The asymptotic validity of combination tests hinges on proving the pairwise asymptotic independence of the transformed statistics $\{X_i\}_{i = 1}^n$. \Cref{thm:asymp_type_I_error_control} provides a stronger asymptotic validity than previous studies \citep{liu2020cauchy,fang2021heavy} as uniform convergence is guaranteed over the set of correlation matrices. It also imposes minimal distributional requirements on $F$ and further addresses one-sided p-values. 
Unlike two-sided p-values, which are always non-negatively correlated under bivariate normality as stated in \Cref{prop:two_sided_p_vals_pos_dep}, one-sided p-values can exhibit negative correlations. To establish the test's asymptotic validity, an additional constraint is required that $\bar F(x)\geqslant F(-x)$ for sufficiently large $x$. 
This condition, met by all distributions in \Cref{tab:regular_distributions}, ensures that the left tail is either absent or lighter than the right tail.

\Cref{thm:asymp_type_I_error_control} requires no $(i,j)$ pair has perfect correlaion. When $\rho_{ij} = \pm 1$, though the transformed statistics $X_i$ and $X_j$ are no longer quasi-asymptotically independent, a weaker form of asymptotic validity still holds when the tail index $\gamma \leqslant 1$, as stated below. 
\begin{restatable}{corollary}{ValidityPerfectCorrelation}
\label{cor:weak_validity}
Under assumptions of \Cref{thm:asymp_type_I_error_control} while allowing $\rho_{ij}=\pm 1$ for any $(i,j)$ pairs, if additionally the tail index $\gamma \leqslant 1$, then the combination tests defined in \Cref{def:standard_combination_test,def:average_combination_test,def:weighted_combination_test} using two-sided p-values are still asymptotically valid satisfying
\begin{equation}
\label{eq:weaker_validity}
\limsup_{\alpha\to0^+}\sup_{\Sigma\in B_{\rho_0}}\frac{\Pr_{H_0^\text{global}}\left(\phi_{\text{comb}}^F=1\right)}{\alpha}\leqslant 1,
\end{equation}
where $\Sigma=(\rho_{ij})_{n\times n}$ is the correlation matrix of test statistics $T_i$s, and $B_{\rho_0}=\{\Sigma\in[0,1]^{n\times n}: \forall~i\neq j~\mathrm{either}~\rho_{ij}\in[-\rho_0,\rho_0]~\mathrm{or}~|\rho_{ij}|=1\}$.
For combination tests with one-sided p-values, \eqref{eq:weaker_validity} still holds
if $F(\cdot)$ has a lower bounded support or satisfies $\bar F(x) = F(-x)$ for all $x\in\mathbb R$. 
\end{restatable}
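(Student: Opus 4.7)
The plan is to collapse perfectly correlated coordinates into equivalence classes and reduce the problem to a weighted combination test on the class representatives, to which \Cref{thm:asymp_type_I_error_control} applies directly. For any fixed $\Sigma \in B_{\rho_0}$, declare $i \sim j$ whenever $|\rho_{ij}| = 1$. Bivariate normality with unit variance forces $T_i = \pm T_j$ almost surely in that case, so the relation is transitive and partitions $\{1, \ldots, n\}$ into classes $C_1, \ldots, C_K$. Every cross-class pair must then have $\rho_{ij} \in [-\rho_0, \rho_0]$; otherwise transitivity would merge the classes.

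For two-sided p-values, $|T_i|$ is constant a.s.\ within each class, so $P_i$ and hence $X_i = Q_F(1-P_i)$ collapse to a common value $Y_k$. Setting $\tilde\omega_k := \sum_{i \in C_k} \omega_i$ and $\tilde\kappa := \sum_{k=1}^K \tilde\omega_k^\gamma$, the weighted sum reduces to $S_{n,\vec{\omega}} = \sum_{k=1}^K \tilde\omega_k Y_k$. The $K$ representatives arise from pairwise bivariate normal statistics with correlations in $[-\rho_0, \rho_0]$, so \Cref{thm:asymp_type_I_error_control} applied to the $K$-dimensional weighted combination test with normalizer $\tilde\kappa$ gives, uniformly over such correlations,
\begin{equation*}
\Pr\bigl(S_{n,\vec{\omega}} > Q_F(1 - \beta/\tilde\kappa)\bigr) = (1+o(1))\,\beta, \qquad \beta \to 0^+ .
\end{equation*}
The actual cutoff $Q_F(1-\alpha/\kappa)$ used by $\phi_{\text{comb}}^F$ corresponds to $\beta = \alpha \tilde\kappa/\kappa$, so $\Pr(\phi_{\text{comb}}^F = 1) = (1+o(1))\,\alpha\,\tilde\kappa/\kappa$.

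The key inequality $\tilde\kappa \leqslant \kappa$ is precisely where the hypothesis $\gamma \leqslant 1$ enters: subadditivity of $u \mapsto u^\gamma$ on $[0,\infty)$ gives $\tilde\omega_k^\gamma \leqslant \sum_{i\in C_k} \omega_i^\gamma$, and summing over $k$ yields $\tilde\kappa \leqslant \sum_{i=1}^n \omega_i^\gamma = \kappa$. Because $\{1,\ldots,n\}$ admits only finitely many partitions and \Cref{thm:asymp_type_I_error_control} supplies uniform convergence on each, taking the supremum over $\Sigma \in B_{\rho_0}$ preserves $\limsup \leqslant 1$, proving the claim for two-sided p-values.

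The main obstacle will be the one-sided case, since $\rho_{ij} = -1$ gives $P_j = 1 - P_i$ and $X_j = Q_F(P_i) \neq X_i$. I would split each class $C_k = C_k^+ \cup C_k^-$ by the sign $s_i$ of $T_i$ relative to a reference $i_0 \in C_k$, so the class sum becomes $\tilde\omega_k^+ X_{i_0} + \tilde\omega_k^- Q_F(P_{i_0})$ with $\tilde\omega_k^\pm := \sum_{i \in C_k^\pm} \omega_i$. Under the symmetry assumption $\bar F(x) = F(-x)$, one has $Q_F(P_{i_0}) = -X_{i_0}$ and the class sum collapses to $(\tilde\omega_k^+ - \tilde\omega_k^-) X_{i_0}$, whose right tail is $|\tilde\omega_k^+ - \tilde\omega_k^-|^\gamma \bar F(x)$ by symmetry of $F$, itself bounded by $(\tilde\omega_k^+)^\gamma + (\tilde\omega_k^-)^\gamma$ via subadditivity. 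Under the lower-bounded support assumption, $X_{i_0}$ and $Q_F(P_{i_0})$ are comonotonically opposed and both lie above $c$, so only one of them can be in the tail at a time, yielding class tail $\bigl((\tilde\omega_k^+)^\gamma + (\tilde\omega_k^-)^\gamma\bigr) \bar F(x)$. In either sub-case, cross-class quasi-asymptotic independence still holds by the bivariate normal argument underlying \Cref{thm:asymp_type_I_error_control}, so \Cref{thm:sum_tail_prob_estimation} assembles the contributions into $\Pr(S_{n,\vec{\omega}} > x) \leqslant (1+o(1))\, \kappa \bar F(x)$, and the argument concludes as in the two-sided case.
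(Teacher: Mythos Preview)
Your proposal is correct and follows essentially the same route as the paper: collapse perfectly correlated coordinates, apply the weighted version of \Cref{thm:asymp_type_I_error_control} (or directly \Cref{thm:sum_tail_prob_estimation}) to the reduced sum, and close with the subadditivity inequality $(\sum_{i\in C_k}\omega_i)^\gamma\leqslant\sum_{i\in C_k}\omega_i^\gamma$ for $\gamma\leqslant 1$; the one-sided sub-cases (symmetric $F$ versus lower-bounded support) are handled exactly as in the paper. Your equivalence-class framing is in fact slightly more complete than the paper's write-up, which only spells out a single perfectly-correlated block, and your finitely-many-partitions argument makes the uniformity over $\Sigma\in B_{\rho_0}$ explicit where the paper leaves it implicit.
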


As a special case of \Cref{cor:weak_validity}, when $\rho_{ij}\equiv 1$ for all pairs of test statistics, it holds that
\begin{restatable}{corollary}{ValidityPerfectCorrelationII}
\label{cor:weak_validity2}
Under conditions of \Cref{cor:weak_validity}, if $\rho_{ij}\equiv 1$ for all $(i,j)$ pairs, then the combination tests defined in \Cref{def:standard_combination_test,def:average_combination_test,def:weighted_combination_test} using either one-sided or two-sided p-values satisfies 
\begin{equation*}
\label{eq:weaker_validity2}
\lim_{\alpha\to0^+}\frac{\Pr_{H_0^{\text{global}}}\left(\phi_{\text{comb}}^F=1\right)}{\alpha}=\frac{\left(\sum_{i=1}^n\omega_i\right)^\gamma}{\sum_{i=1}^n\omega_i^\gamma}.
\end{equation*}
In particular, when all weights are 1, the limit is $n^{\gamma-1}$. 
\end{restatable}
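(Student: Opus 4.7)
The plan is to observe that perfect pairwise correlation collapses the problem to a one-dimensional calculation, and then exploit the regular variation of $F$ directly.

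First I would note that if $\rho_{ij}=1$ for every pair and each $T_i \sim N(0,1)$ marginally, then $T_1 = \cdots = T_n$ almost surely under the global null. Consequently, regardless of whether one uses one-sided p-values $P_i = 1 - \Phi(T_i)$ or two-sided p-values $P_i = 2 - 2\Phi(|T_i|)$, we have $P_1 = \cdots = P_n =: P$ with $P \sim \Unif(0,1)$. Thus the transformed statistics satisfy $X_1 = \cdots = X_n = Q_F(1 - P) =: X$, and $X$ has cumulative distribution function $F$. Therefore
\[
S_{n,\vec{\omega}} \;=\; \sum_{i=1}^n \omega_i X_i \;=\; W \cdot X, \qquad \text{where } W = \sum_{i=1}^n \omega_i.
\]

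Next, for the weighted test with threshold $Q_F(1 - \alpha/\kappa)$ where $\kappa = \sum_{i=1}^n \omega_i^\gamma$, I would write the type-I error as
\[
\Pr_{H_0^\text{global}}\!\left(\phi_{\text{wgt}}^{F,\vec{\omega}}=1\right) \;=\; \Pr\!\left(W X > Q_F(1 - \alpha/\kappa)\right) \;=\; \bar F\!\left(Q_F(1 - \alpha/\kappa)/W\right),
\]
and invoke the identity $\bar F(Q_F(1-\alpha/\kappa)) = \alpha/\kappa$ (valid since regularly varying tailed distributions are continuous in the tail, so $\alpha/\kappa$ is small enough) to rewrite
\[
\frac{\Pr_{H_0^\text{global}}\!\left(\phi_{\text{wgt}}^{F,\vec{\omega}}=1\right)}{\alpha} \;=\; \frac{\bar F\!\left(Q_F(1-\alpha/\kappa)/W\right)}{\bar F\!\left(Q_F(1-\alpha/\kappa)\right)} \cdot \frac{1}{\kappa}.
\]

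Finally, since $Q_F(1-\alpha/\kappa) \to +\infty$ as $\alpha \to 0^+$ and $F \in \mathscr{R}_{-\gamma}$, the defining regular-variation property (\Cref{def:classR}) with $y = 1/W$ yields
\[
\lim_{\alpha \to 0^+} \frac{\bar F\!\left(Q_F(1-\alpha/\kappa)/W\right)}{\bar F\!\left(Q_F(1-\alpha/\kappa)\right)} \;=\; (1/W)^{-\gamma} \;=\; W^\gamma,
\]
giving the claimed limit $W^\gamma/\kappa = (\sum_i \omega_i)^\gamma / \sum_i \omega_i^\gamma$. The sum-based test ($\omega_i \equiv 1$) and average-based test ($\omega_i \equiv 1/n$) then follow as special cases, both yielding $n^{\gamma-1}$. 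There is really no major obstacle here; the only mild care needed is in justifying the equality $\bar F(Q_F(1-\alpha/\kappa)) = \alpha/\kappa$ at discontinuities of $F$, but this is non-issue in the asymptotic regime since membership in $\mathscr{R}_{-\gamma}$ forces $\bar F$ to be continuous (in fact smoothly varying) on the relevant tail.
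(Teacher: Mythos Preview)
Your proof is correct and follows essentially the same approach as the paper: both observe that $\rho_{ij}\equiv 1$ forces all $X_i$ to coincide, reduce $S_{n,\vec\omega}$ to $(\sum_i\omega_i)X_1$, and then apply the regular-variation property of $\bar F$ to obtain the limit. One minor quibble: membership in $\mathscr{R}_{-\gamma}$ does not literally force $\bar F$ to be continuous, but it does imply $\bar F(x-)/\bar F(x)\to 1$ as $x\to\infty$, which is all you need for $\bar F(Q_F(1-\alpha/\kappa))/(\alpha/\kappa)\to 1$; the paper makes the same implicit use of this fact.
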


\subsection{Asymptotic equivalence to the Bonferroni test}
\label{sec:compared_to_bon}

In this subsection, we explore the relationship between the heavy-tailed combination tests and the Bonferroni test. We begin by defining the weighted Bonferroni test, a generalization of the standard Bonferroni test that incorporates pre-chosen weights.
\begin{definition}[Weighted Bonferroni test]
\label{def:bonferroni_test}
Let $P_1,\ldots,P_n$ be the p-values and $\vec{\omega}=\left(\omega_1,\ldots,\omega_n\right)\in\mathbb{R}_+^n$ be a non-random weight vector satisfying $\sum_{i = 1}^n \omega_i = 1$, then the weighted Bonferroni test at the significance level $\alpha$ has the decision function
\begin{equation}
\label{eq:bonferroni_test}
\phi_{\text{bon}}^{\vec{\omega}} = 1_{\left\{\min_{i=1,\ldots,n}P_i/\omega_i<\alpha\right\}}.
\end{equation}
\end{definition}
It has been shown that the weighted Bonferroni test controls type-I error under any dependence structure \citep{genovese2006false}. The standard Bonferroni test is a special case where $\omega_i = 1/n$.

Given that the set 
$\left\{X_i=Q_F(1-P_i)>Q_F\left(1 - \alpha/n\right)\right\}=\left\{P_i<\alpha/n\right\}$, 
 the decision function of the Bonferroni test can be rewritten as 
 \[
 \phi_{\text{bon}} = 1_{\{n\min_{i=1,\ldots,n}P_i<\alpha\}} =1_{\left\{\max_{i=1,\ldots,n}X_i>Q_F\left(1- \alpha/n\right)\right\}}.
 \]
Thus, \Cref{cor:tail_prob_max_vs_sum} implies that the type-I error of the Bonferroni test and the standard combination tests are asymptotically the same. Given this, we investigate whether the combination tests are indeed asymptotically equivalent to the Bonferroni test. We find out that the rejection regions of the weighted combination tests converge to those of a weighted Bonferroni test as $\alpha\to0$.

\begin{restatable}{theorem}{bonferroni}
\label{thm:same_as_bonferroni}
Assume that the test statistics $\{T_i\}_{i=1}^n$ are pairwise normal with correlations $\rho_{ij} \in [-\rho_0, \rho_0]$ and have a common marginal variance $1$. Means of marginal normals are all finite. Then for two-sided p-values, when $\alpha\to 0$, any weighted heavy-tailed combination test defined in 
\Cref{def:weighted_combination_test} is asymptotically equivalent to a weighted Bonferroni test. Namely,
\begin{equation*}
\label{eq:asymptotic_equialence_to_bonferroni}
\lim_{\alpha\to0^+}\sup_{\forall~i\neq j,~\rho_{ij}\in[-\rho_0,\rho_0]}\frac{\Pr\left(\phi_{\text{wgt}}^{F,\vec{\omega}}\neq\phi_{\text{bon}}^{ {\vec{\omega_*}}}\right)}{\min\left\{\Pr\left(\phi_{\text{wgt}}^{F,\vec{\omega}}=1\right),\Pr\left(\phi_{\text{bon}}^{ {\vec{\omega_*}}}=1\right)\right\}}=0 \ ,
\end{equation*}
where $\phi_{\text{wgt}}^{F,\vec{\omega}}$ is defined in \eqref{eq:combination_test_weighted}, $\phi_{\text{bon}}^{\vec {\omega_*}}$ is defined in \eqref{eq:bonferroni_test}, and $ \vec{\omega_*} = (\omega_{*,1}, \ldots, \omega_{*,n})$ with $\omega_{*,i} = \omega_i^\gamma/\sum_{i=1}^n\omega_i^\gamma$. For one-sided p-values, the conclusion retains when further assuming that the CDF $F(\cdot)$ satisfies that $\bar F(x)\geqslant F(-x)$ for sufficiently large $x$.
\end{restatable}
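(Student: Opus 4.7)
The plan is to reduce the theorem to a coordinate-wise comparison at a common threshold and then eliminate the symmetric difference via pairwise quasi-asymptotic independence of the transformed statistics. Write $t_\alpha = Q_F(1-\alpha/\kappa)$ and $t_{\alpha,i} = Q_F(1-\alpha\omega_i^\gamma/\kappa)$, so the weighted combination test rejects on $A_{\text{comb}} = \{S_{n,\vec{\omega}} > t_\alpha\}$ while the weighted Bonferroni test rejects on $A_{\text{bon}} = \bigcup_i \{\omega_i X_i > \omega_i t_{\alpha,i}\}$. Because $F\in\mathscr{R}_{-\gamma}$, the identity $\bar F(t_{\alpha,i}) = \omega_i^\gamma\bar F(t_\alpha)$ and inversion of regular variation give $\omega_i t_{\alpha,i}/t_\alpha\to 1$ as $\alpha\to 0$, so on each coordinate both tests compare $\omega_i X_i$ to the same threshold $t_\alpha(1+o(1))$. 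Applying \Cref{thm:sum_tail_prob_estimation} to $\{\omega_i X_i\}$, which inherits pairwise quasi-asymptotic independence from bivariate normality of $(T_i,T_j)$ with $|\rho_{ij}|\le\rho_0<1$ (the verification is the same as in the proof of \Cref{thm:asymp_type_I_error_control}), gives $\Pr(A_{\text{comb}})\sim\kappa\bar F(t_\alpha)=\alpha$; inclusion--exclusion with the same pairwise independence gives $\Pr(A_{\text{bon}})\sim\alpha$. Both equivalences are uniform in $\rho_{ij}\in[-\rho_0,\rho_0]$ because the rates depend on $\rho_{ij}$ only through $1-\rho_{ij}^2\ge 1-\rho_0^2>0$. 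The denominator in the target ratio is therefore $\alpha(1+o(1))$, and it suffices to prove $\Pr(A_{\text{comb}}\triangle A_{\text{bon}})=o(\alpha)$ uniformly.

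Fix $\delta\in(0,1)$ and split each half of the symmetric difference into a thin-shell piece and a multi-index piece. On $A_{\text{comb}}\setminus A_{\text{bon}}$, every $\omega_i X_i\le\omega_i t_{\alpha,i}=t_\alpha(1+o(1))$ while $S_{n,\vec{\omega}}>t_\alpha$; either some $\omega_i X_i$ lies in the shell $(t_\alpha(1-\delta),\omega_i t_{\alpha,i}]$, whose aggregate probability is $((1-\delta)^{-\gamma}-1)\alpha+o(\alpha)=O(\delta\alpha)$ by regular variation, or all $\omega_i X_i\le t_\alpha(1-\delta)$, in which case a pigeonhole argument on the positive part of the sum forces two indices $i\ne j$ with $\omega_i X_i,\omega_j X_j\ge\delta t_\alpha/(n-1)$, an event of probability $o(\alpha)$ by the positive--positive form of \Cref{def:quasi_indep}. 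A symmetric split of $A_{\text{bon}}\setminus A_{\text{comb}}$ either isolates the shell $\omega_i X_i\in(\omega_i t_{\alpha,i},t_\alpha(1+\delta)]$, again of aggregate probability $O(\delta\alpha)$, or forces $\omega_i X_i>t_\alpha(1+\delta)$ together with $\sum_{j\ne i}\omega_j X_j<-\delta t_\alpha$; the latter requires some $\omega_j X_j<-\delta t_\alpha/(n-1)$, and the joint event $\{\omega_i X_i>t_\alpha(1+\delta),\,\omega_j X_j<-\delta t_\alpha/(n-1)\}$ is $o(\alpha)$ by the positive--negative form of quasi-asymptotic independence. Hence $\limsup_{\alpha\to 0^+}\Pr(A_{\text{comb}}\triangle A_{\text{bon}})/\alpha\le C\delta$ uniformly in $\rho$, and letting $\delta\to 0$ concludes the two-sided case.

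The one-sided case is handled identically: the only new risk is that the left tail of $X_j$ could be heavier than its right tail when $F$ is supported on $\mathbb{R}$, but the hypothesis $\bar F(x)\ge F(-x)$ for sufficiently large $x$ ensures $\Pr(\omega_j X_j<-s)\le\Pr(\omega_j X_j>s)$ for large $s$, so the $(+,-)$ bound still delivers $o(\alpha)$. Uniformity over $\rho$ follows because $1-\rho_{ij}^2$ is bounded away from $0$ on the compact range $[-\rho_0,\rho_0]$, making all Gaussian-tail constants uniform. The main obstacle I expect is the negative-tail subcase in $A_{\text{bon}}\setminus A_{\text{comb}}$ for two-sided-support $F$ (Cauchy, Student's $t$): a sharply negative $X_j$ could in principle cancel a very large $\omega_i X_i$, and handling this is exactly what forces the use of the full, sign-symmetric quasi-asymptotic independence of \Cref{def:quasi_indep} rather than only its positive--positive version.
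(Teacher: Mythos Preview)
Your overall architecture is sound and close to the paper's, but there is one expositional slip and one genuine methodological difference worth noting.

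\medskip
\textbf{The slip.} You write $\Pr(A_{\text{comb}})\sim\kappa\bar F(t_\alpha)=\alpha$ and then use $\alpha$ as the reference scale throughout. That is only correct under the global null, whereas the theorem is stated for arbitrary finite means $\mu_i$. The right reference is $D_\alpha:=\sum_{i=1}^n\Pr(\omega_i X_i>t_\alpha)$: \Cref{thm:sum_tail_prob_estimation} applied to the marginals of the $\omega_i X_i$ (which by \Cref{lemma:normal_to_regularly_varying} are still in $\mathscr R_{-\gamma}$ for nonzero means) gives $\Pr(A_{\text{comb}})\sim D_\alpha$, and inclusion--exclusion gives $\Pr(A_{\text{bon}})\sim D_\alpha$. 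Your shell and multi-index bounds then become $O(\delta\,D_\alpha)$ and $o(D_\alpha)$ respectively, and the argument closes exactly as you wrote. Nothing structural changes; just replace $\alpha$ by $D_\alpha$.

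\medskip
\textbf{The methodological difference.} The paper also splits into a shell piece and a ``one large, one negative'' piece (its Lemma~\ref{lemma:diff_combination_bon_prob}), but it uses a \emph{vanishing} $\delta_\alpha=(n-1)Q_F(1-\alpha^{c_0})/Q_F(1-\alpha)$ with $c_0=\tfrac32-\tfrac1{1+\rho_0}$, rather than your fixed $\delta$ followed by $\delta\to0$. This choice makes the shell automatically $o(D_\alpha)$, but the price is that the negative threshold $-\delta_\alpha t_\alpha/(n-1)$ grows much more slowly than $t_\alpha$, so the abstract $(+,-)$ quasi-asymptotic independence bound would only give $o(\alpha^{c_0})$, not $o(D_\alpha)$. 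The paper therefore has to prove a bespoke Gaussian estimate (Lemma~\ref{lemma:neg_cor_pair_prob}, part (iv)) to handle the one-sided, negatively-correlated case. Your fixed-$\delta$ route keeps both thresholds comparable to $t_\alpha$, so the uniform $(+,-)$ clause of \Cref{lemma:quasiindep} applies directly and Lemma~\ref{lemma:neg_cor_pair_prob} is never needed. In exchange, you carry the $O(\delta)$ shell to the end and remove it by $\delta\downarrow0$. Both strategies work; yours is shorter and more conceptual once the uniform quasi-asymptotic independence of \Cref{lemma:quasiindep} is available. For the uniformity over $\rho$, rather than the heuristic ``$1-\rho_{ij}^2$ bounded away from $0$'', cite \Cref{lemma:quasiindep} and \Cref{lemma:tail_prob_max_vs_sum_uniform_version} directly, since those are where the uniform-in-$\rho$ limits are actually established.
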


\Cref{thm:same_as_bonferroni} establishes the asymptotic equivalence between the combination tests and the Bonferroni test under any hypothesis configuration, provided that the test statistics are pairwise normal and not perfectly correlated. As the significance level $\alpha$ approaches zero, the rejection regions of both the combination tests and the Bonferroni test shrink, and the differences between these rejection regions diminish at a higher order. This equivalence does require that the test statistics are not perfectly correlated, so that they are quasi-asymptotically independent. 

\begin{figure}[t!]
    \centering
    \includegraphics[width=0.7\textwidth]{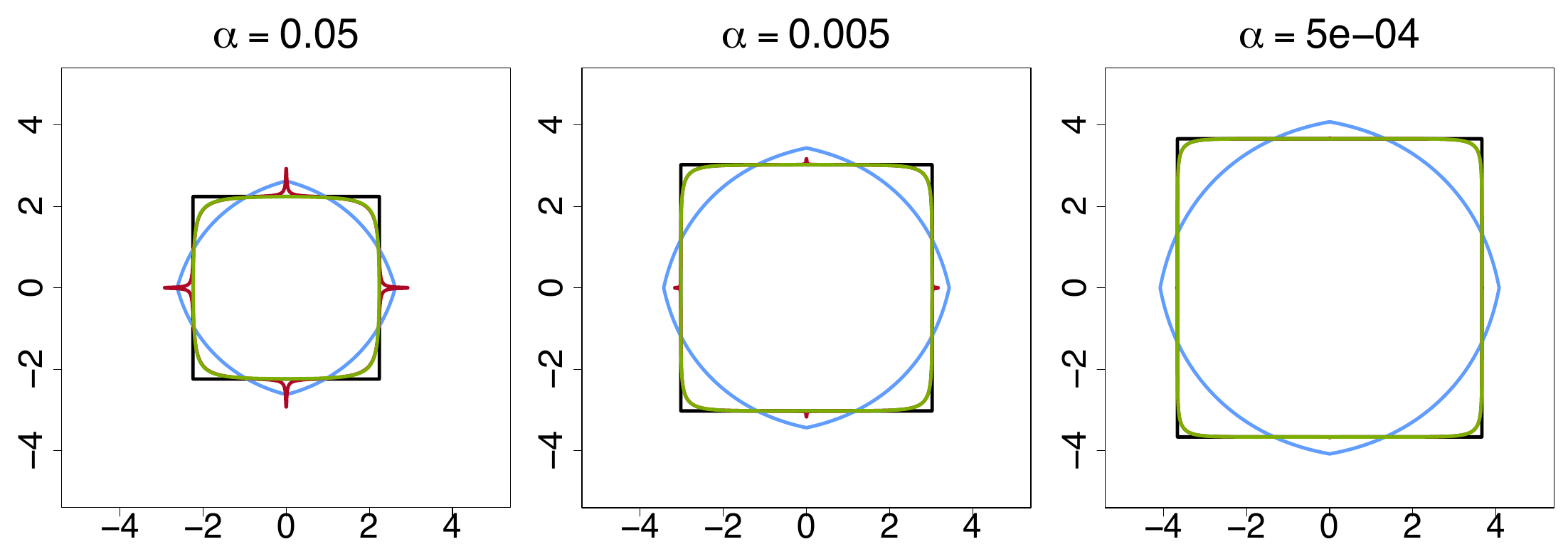} 
    \caption{Rejection regions for Bonferroni (black), Fisher (blue), Cauchy (red), and Fr\'echet $\gamma=1$ (green) combination tests for a two-sided test in test statistics space when the number of base hypotheses $n=2$ and at different significance level $\alpha$. The boundaries of the rejection regions are shown with different colored lines, and the rejection regions are the areas outside of these boundaries that do not include the origin.}
\label{fig:rejection_region_comp}
\end{figure}

To provide an intuitive understanding of \Cref{thm:same_as_bonferroni}, \cref{fig:rejection_region_comp} compares the rejection regions of various tests in the test statistics space for two-sided p-values with $n = 2$. 
The key takeaway is that the heavy-tailed nature of the transformation distribution yields nearly square rejection regions, which closely resemble those of the Bonferroni test as $\alpha$ decreases. In contrast, for combination tests relying on light-tailed distributions, such as Fisher's combination method, different rejection region shapes persist regardless of how small $\alpha$ becomes. Thus, in the asymptotic regime where these heavy-tailed combination tests are proven valid and when the individual test statistics are not perfectly correlated, there is no power gain over the Bonferroni test.

\section{Empirical evaluations of the heavy-tailed combination tests under asymptotic independence}
\label{sec:effect_of_heaviness}
\subsection{Empirical validity of the combination tests}
\label{subsec:validity_simulations}
The theoretical results in \Cref{sec:setup} provide valuable insight into the heavy-tailed combination tests. However, it is unclear to what extent these asymptotic results align with their practical performance at finite significance levels. We aim to conduct an empirical evaluation of the tests' validity, focusing on commonly used finite significance levels. 

For a comprehensive study, we vary the significance level $\alpha$, number of hypotheses, tail heaviness and support of the distribution, and the level of dependence among the p-values. 
Specifically, we generate test statistics as $z$-values sampled from a multivariate normal distribution with mean $\vec{\mu}=\vec{0}_n$ and covariance matrix $\Sigma_\rho$. The covariance matrix $\Sigma_\rho\in\mathbb{R}^{n\times n}$ has 1s on the diagonal and a common value $\rho$ off the diagonal, representing varying degrees of dependence. We assess performance at three values of $\rho$, $0,0.5,$ and $0.99$, in line with no, moderate, and strong dependence. We calculate two-sided p-values from the $z$-values and conduct the combination tests based on different heavy-tailed distributions from four distribution families, the Student's t, Fr\'echet, Pareto, and inverse Gamma distributions. Each family has have a tunable tail index $\gamma$ quantifying the tail heaviness, with a larger $\gamma$ corresponding to a lighter tail. We vary this $\gamma$ from $0.7$ to $1.5$ by $0.01$ for all four distribution families. We also include the Bonferroni test and the Cauchy combination test as baselines. For significance levels, we adopt $\alpha=0.05$ and $5\times10^{-4}$ to account for different testing scenarios. The standard $0.05$ is commonly used for a single global null hypothesis, while $5\times10^{-4}$ reflects the stricter threshold needed in genetic applications, where multiple testing adjustments lower the effective significance level for individual p-values. For the number of hypotheses, we consider $n = 5$ and $100$. Each scenario is replicated $10^6$ times to calculate the empirical type-I errors of the tests. 

\begin{figure}[t!]
     \centering
     \begin{subfigure}[h]{0.8\textwidth}
         \centering
         \includegraphics[width=\textwidth]{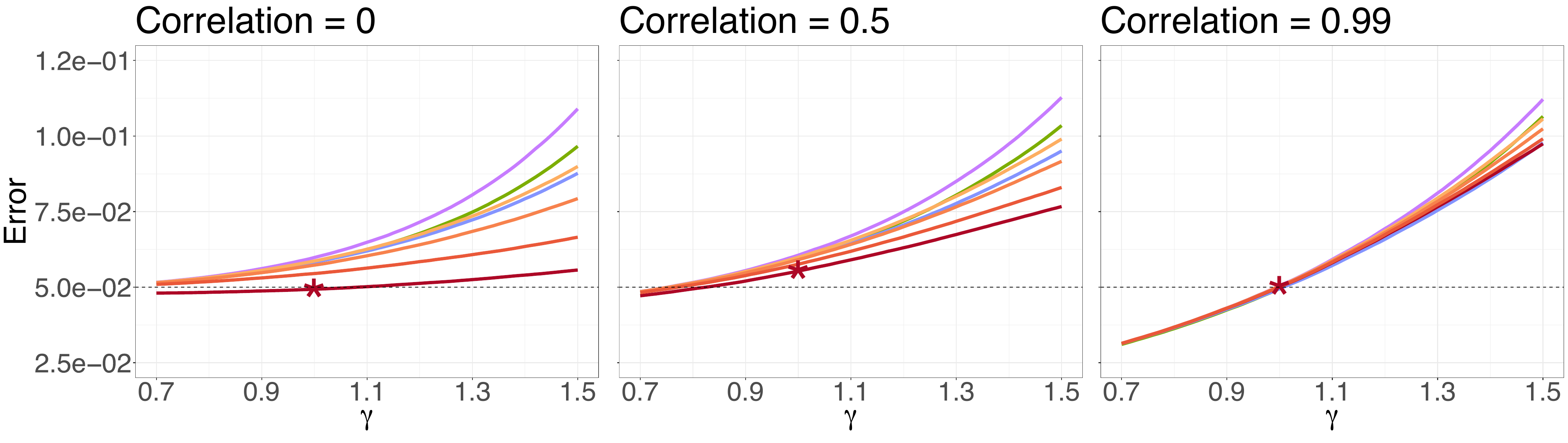}
         \caption{$\alpha=0.05$}
         \label{fig:error_5_0.05}
     \end{subfigure}
     \hfill
     \begin{subfigure}[h]{0.8\textwidth}
         \centering
         \includegraphics[width=\textwidth]{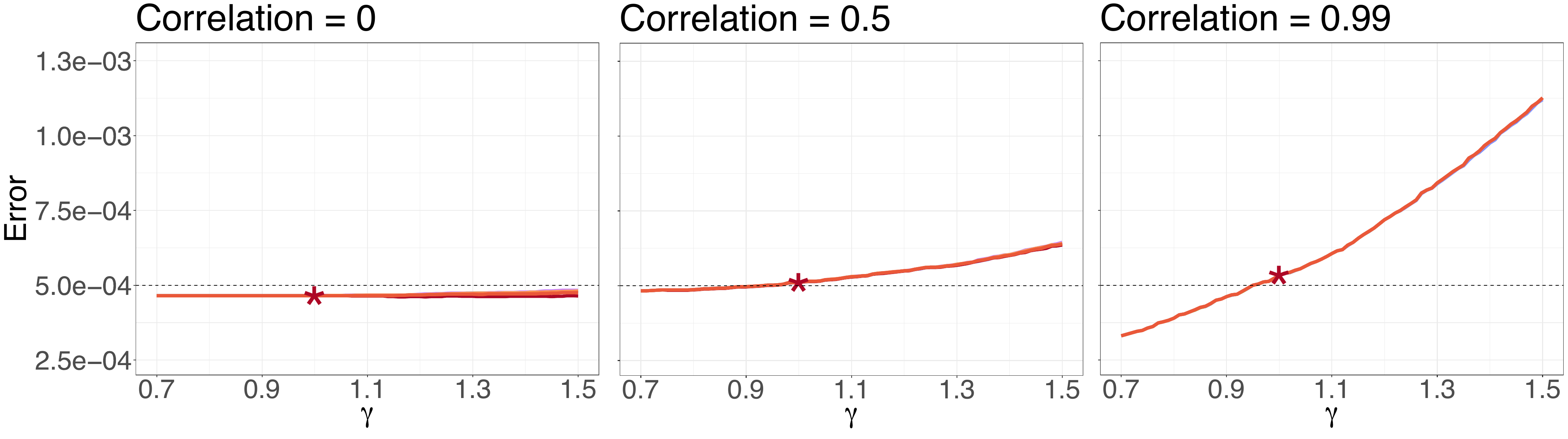}
         \caption{$\alpha=5\times10^{-4}$}
         \label{fig:error_5_5e-4}
     \end{subfigure}
\caption{The type-I error of the combination test when $n = 5$ with different distributions: Cauchy (star point), inverse Gamma (blue), Fr\'echet (green), Pareto (purple), student t (red), left-truncated t with truncation threshold $p_0=0.9$ (dark orange), left-truncated t with truncation threshold $p_0=0.7$ (orange), left-truncated t with truncation treshold $p_0=0.5$ (light orange). The vertical axis represents the empirical type-I error, and the horizontal axis stands for the tail index $\gamma$.}
\label{fig:n_5_gamma_vs_error}
\end{figure}

\Cref{fig:n_5_gamma_vs_error} and \ref{fig:n_100_gamma_vs_error} present the results for $n = 5$ and $100$. When $\alpha=0.05$ and $\gamma = 1$, only the Cauchy combination test can strictly control empirical type-I error under independence, and no method achieves strict control when correlation $\rho_{ij} = 0.5$. Smaller $\alpha$ improves error control and leads to a flatter curve across $\gamma$, consistent with the theoretical limit. Regarding the impact of tail heaviness on validity, differences between various distribution families diminish as $\alpha$ decreases, making the empirical validity of the tests primarily dependent on the tail index $\gamma$.  A larger $\gamma$ corresponds to a lighter tail, which results in poorer type-I error control for any finite $\alpha$. Empirically, a type-I error control is approximately achieved when $\gamma \leqslant 1$.  
Distribution support also plays a role in type-I error control. Tests based on t distributions, which allow negative transformed statistics, outperform those using distributions with only positive support at $\alpha = 0.05$.
To examine this further, left-truncated t-distributions with different truncation thresholds $p_0 = 0.5, 0.7$ and $0.9$ are adopted.
As shown in \cref{fig:n_5_gamma_vs_error} and \ref{fig:n_100_gamma_vs_error}, their empirical type-I errors fall between those of the original t distributions and other distribution families. This suggests that a wider support to the left of the real line tends to reduce the type-I error of the combination tests.

\begin{figure}[t!]
     \centering
     \begin{subfigure}[h]{0.8\textwidth}
         \centering
         \includegraphics[width=\textwidth]{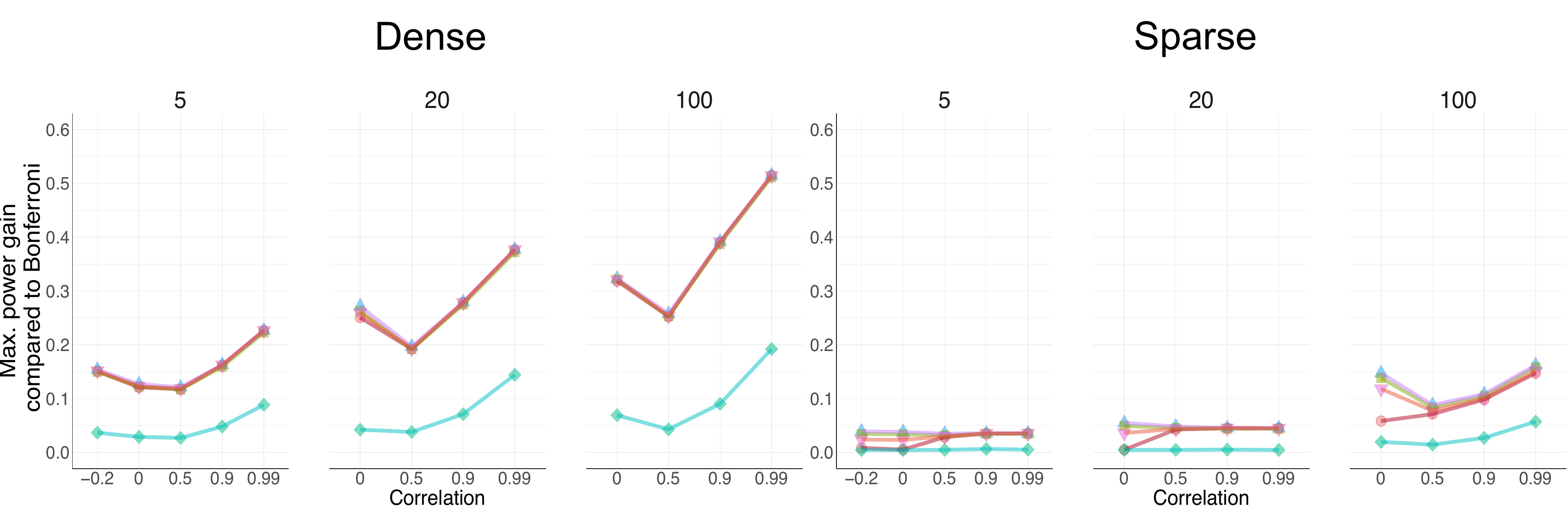}
         \caption{$\alpha=0.05$}
         \label{fig:power_0.05}
     \end{subfigure}
     \hfill
     \begin{subfigure}[h]{0.8\textwidth}
         \centering
         \includegraphics[width=\textwidth]{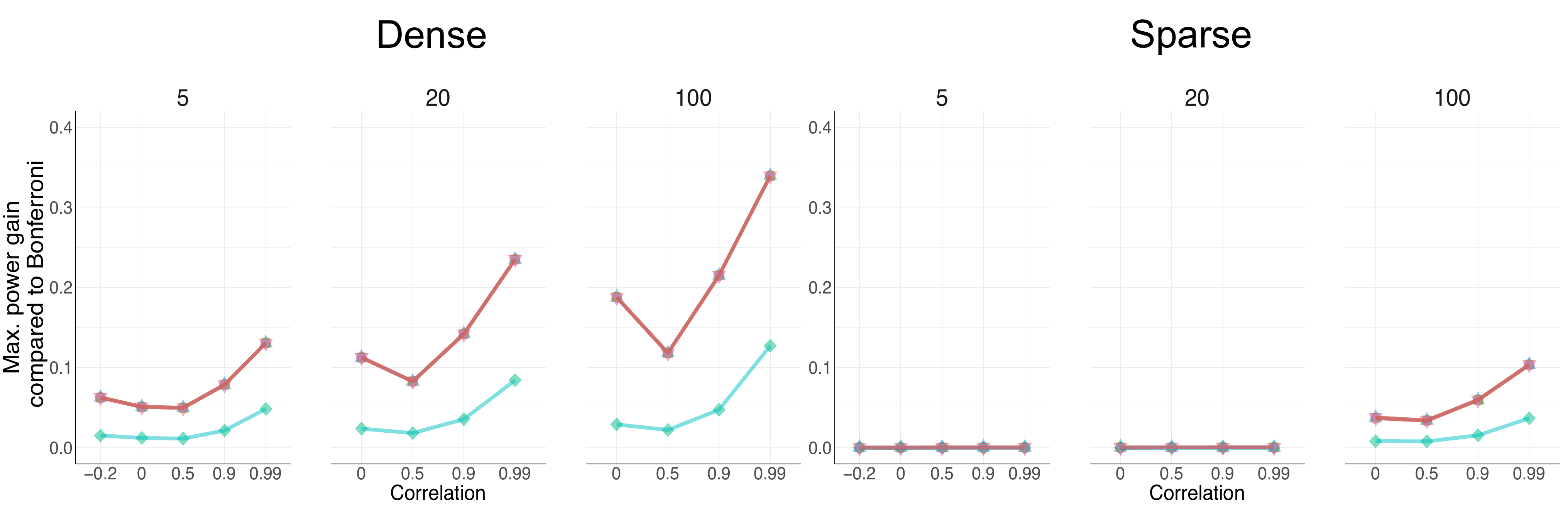}
         \caption{$\alpha=5\times10^{-4}$}
         \label{fig:power_5e-4}
     \end{subfigure}
\caption{Power comparison with the Bonferroni test of the combination test with different distributions: Levy (turquoise with diamond dot), Cauchy (red with round dot), Fr\'echet $\gamma=1$ (green with square dot), Pareto $\gamma=1$ (purple with triangular dot), left-truncated $t_1$ with truncation threshold $p_0=0.9$ (dark orange with inverted-triangle dot). Left plots correspond to dense signals and right ones correspond to sparse signals. }
\label{fig:power_gain}
\end{figure}

Additionally, we have investigated the type-I error control of the combination tests when the base p-values are negatively correlated by generating one-sided p-values. Results are shown in \Cref{tab:comp_neg_dep}. We observe that when the p-values are negatively correlated, the Cauchy combination test can be even more conservative than the Bonferroni test due to its unbounded support. This undesired conservativeness can be mitigated by using a left-truncated t-distribution with a moderate truncation threshold. For more details, see Supplementary \Cref{sec:negative_p_values}.

\subsection{Empirical comparison with the Bonferroni test}
\label{sec:power_with_bonferroni}
Theoretically, we have shown that the combination tests are asymptotically equivalent to the Bonferroni test for pairwise normal test statistics. Empirically, we aim to compare their power at finite significance levels and determine how small $\alpha$ needs to be for the asymptotic results to appear. Specifically, we evaluate significance levels $\alpha = 0.05$ and $5\times 10^{-4}$ while also approximating the asymptotic setting by letting $\alpha$ approach 0.

We start with assessing the power of the combination tests and the Bonferroni test at finite $\alpha$s. Specifically, we define power as $\Pr_{H_{1,\text{global}}}(\text{reject global null})$. We adopt the same simulation settings as in \Cref{subsec:validity_simulations}, generating one-sided p-values to obtain both positive and negative correlated p-values. We introduce both sparse and dense signals in the mean vector $\vec{\mu}$ and consider three different numbers of hypotheses $n=5,20,100$. The dense signals are generated as $\vec{\mu}=\vec{\mu}_n=(\mu,\mu,\ldots,\mu)\in\mathbb{R}^{n}$. For sparse signals, we employ $\vec{\mu}=({\vec{0}_4},\mu)\in\mathbb{R}^{5}$, $\vec{\mu}=({\vec{0}_{19}},\mu)\in\mathbb{R}^{20}$, and $\vec{\mu}=({\vec{0}_{95}},\vec{\mu}_5)\in\mathbb{R}^{100}$ as signal vectors. 
The parameter $\mu$ ranges from 0 to 6, ensuring that all testing methods can reach a power of $1$, in increments of $0.5$. For the covariance matrix $\Sigma_\rho$, we select $\rho=0,0.5,0.9,0.99$ and also consider the negative correlation $\rho=-0.2$, to ensure the covariance matrix is positive definite, for $n=5$. Each scenario is replicated $10^6$ times to calculate the empirical power of the tests. 

\Cref{fig:power_gain} displays the maximum power difference between the combination tests using the Cauchy, truncated $t_1$, Pareto, Fr\'echet, and Levy distributions, compared to the Bonferroni test when allowing $\mu$ to increase until all methods reach a power of $1$. The truncation threshold for the $t_1$ distribution is set at $p_0 = 0.9$. The Cauchy, truncated $t_1$, Fr\'echet, and Pareto distributions share a tail index $\gamma = 1$, whereas the Levy distribution has a tail index of $0.5$, resulting in a smaller power difference compared to the Bonferroni test. 

Our findings reveal that combination tests can achieve higher power at finite significance levels, particularly in situations where signals are dense. This remains the case for the Cauchy combination test even when p-values are negatively correlated, a setting in which it tends to be overly conservative. This likely stems from the nature of the combination test, which synthesizes signals from multiple sources rather than relying on a single dominant signal. These results suggest that the onset of asymptotic equivalence may occur at much smaller values of $\alpha$ compared to that for asymptotic validity, especially when signals are dense.

\begin{figure}[t!]
\centering
\includegraphics[width=0.8\textwidth]{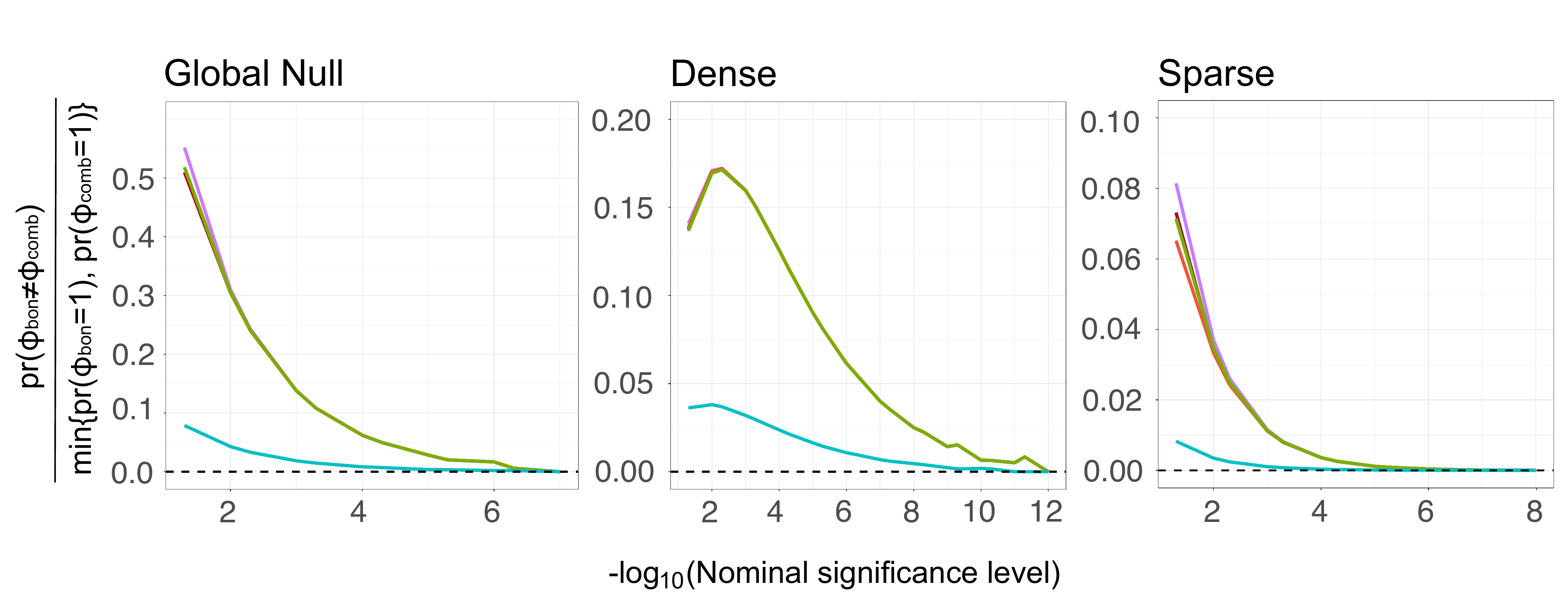}
\caption{The difference between the combination test and Bonferroni test diminishes as the significance level converges to 0. The left plot simulates the ratio in \Cref{thm:same_as_bonferroni} with fixed $\rho_{ij}=0.5$ under the global null. Right plots simulate the same ratio under global alternative with dense and sparse signals. The combination tests are with different distributions: Levy (turquoise), Cauchy (red), Fr\'echet $\gamma=1$ (green), Pareto $\gamma=1$ (purple), left-truncated $t_1$ with truncation threshold $p_0=0.9$ (dark orange). The number of repeated simulations is $10^8$.}
\label{fig:test_diff}
\end{figure}

To further investigate the asymptotic equivalence between the combination tests and the Bonferroni test, we examine how the size of their non-overlapping rejection regions evolves as $\alpha$ approaches $0$. Using the same settings as earlier in this section with $n=5$ and $\rho=0.5$, we fix the signal level $\mu=2$ to ensure the power difference between the two tests is not negligible.  We consider three mean vectors: $\vec{\mu}={\vec{0}_5}$ (global null),$({\vec{0}_4},2)$ (sparse signal), ${\vec{2}_5}$ (dense signal), allowing us to compare their performance under different scenarios. 
As shown in \cref{fig:test_diff}, the difference, quantified by the probability ratio between the overlapping rejection region and individual rejection regions, converges to zero as $\alpha$ decreases, being consistent with the asymptotic equivalence established in \Cref{thm:same_as_bonferroni}.

Since the Bonferroni test is known to suffer under strong dependence, we also compare the combination tests against the adjusted Bonferroni method, minP. Specifically, we calibrate the cutoff for $\min(p_1,\ldots,p_n)$ using Monte Carlo sampling from the true data-generating model to ensure the actual type I error matches the nominal level $\alpha$ (\Cref{tab:corrected_error}). We replicate the simulation settings from \cref{fig:power_gain}, replacing Bonferroni with minP as the baseline. As shown in \cref{fig:power_gain_with_correction}, combination tests outperform minP when signals are dense and test statistics are weakly correlated, consistent with findings in \citet{liu2020cauchy}. However, minP relies on knowledge of the dependence structure among p-values, limiting its practicality in many applications and making it computationally intensive.

\section{The combination test under asymptotic dependence}
\label{sec:dependence}
Although heavy-tailed combination tests are typically employed when p-values have unknown dependence, they do not guarantee control of the type-I error under arbitrary dependence structures, even asymptotically. One key assumption for ensuring asymptotic type-I error control in \Cref{sec:def_combination_test} is the requirement of quasi-asymptotic independence, which can be restrictive in practice. For instance, when the sample size is small, test statistics are likely to follow a t-distribution rather than a normal distribution. Additionally, even when the sample size is large, it can still be challenging to ensure that two dependent test statistics are pairwise normal.

The strength of asymptotic dependence between any two variables $(X_1, X_2)$ with the same marginal distribution $F$ can be quantified by the upper tail dependence coefficient \citep{joe1997multivariate}
\[
\lambda=\lim_{x\to+\infty}\Pr(X_1>x\mid X_2>x).
\label{eq:tail_dep_coef}
\]
As discussed earlier, if $X_1$ and $X_2$ are bivariate normal and are not perfectly correlated, they are quasi-asymptotic independent, and hence $\lambda = 0$. 
However, many dependent variables do not satisfy quasi-asymptotic independence. 
For instance, for bivariate t-distributed variables $(T_1, T_2)$ with degree of freedom $\nu$, variances $1$ and correlation $\rho$, their tail dependent coefficient \citep{demarta2005t} is
\[
\lambda_{\nu,\rho}=2t_{\nu+1}\left(-\left(\nu+1\times\frac{1-\rho}{1+\rho}\right)^{\frac{1}{2}}\right),
\label{eq:t_copula_tail_dep_coef}
\]
where $t_{\nu}(\cdot)$ is the cumulative distribution function of the t distribution. 
As a result,  $T_1$ and $T_2$ are never quasi-asymptotically independent, even when $\rho = 0$, due to shared covariance estimation.

To understand the sensitivity of the combination tests to violations of quasi-asymptotic independence, we generate test statistics $(T_1, \ldots, T_n)$ from a multivariate t distribution $t_{\nu}(\vec{0}_n, \Sigma_\rho)$, where $\Sigma_\rho$ is defined in \Cref{sec:effect_of_heaviness}. 
We choose an extreme degree of freedom $\nu = 2$ and set the correlation $\rho$ to $0$, $0.5$, $0.9$, and $0.99$, resulting in tail dependence indices ranging from $0.18$ to $0.91$. All base p-values are one-sided and derived from the test statistics.

\begin{table}[!ht]
\caption{Type-I error control of the combination tests when test statistics follow a multivariate t-distribution when $n = 5$. Values in parentheses
are the corresponding standard errors. For the Fr\'echet
and Pareto distributions, the tail index $\gamma = 1$. For truncated $t_1$, the truncation threshold $p_0=0.9$}
\begin{subtable}{\linewidth}
\centering
\caption{$\alpha=0.05$}
\scalebox{0.75}{
\begin{tabular}{@{}ll|rrrrrrr@{}}
\hline
$\rho$ & $\lambda_{2,\rho}$ & \multicolumn{1}{c}{Cauchy} & \multicolumn{1}{c}{Pareto} & \multicolumn{1}{c}{Truncated $t_1$} & \multicolumn{1}{c}{Fr\'echet} & \multicolumn{1}{c}{Levy} & \multicolumn{1}{c}{Bonferroni} & \multicolumn{1}{c}{Fisher} \\ \hline
& & \multicolumn{1}{c}{} & \multicolumn{1}{c}{} 
& \multicolumn{1}{c}{}  & \multicolumn{1}{c}{}  
& \multicolumn{1}{c}{}  & \multicolumn{1}{c}{}           
& \multicolumn{1}{c}{}  \\
0
& 0.18 & 2.90E-02 & 5.30E-02 & 4.73E-02 & 5.17E-02 & 3.89E-02 & 3.56E-02 & 6.26E-02\\
& & (1.68E-04) & (2.24E-04) & (2.21E-04) & (1.93E-04)  & (2.12E-04)  & (1.85E-04) & (2.42E-04)\\
& & & & & & & & \\
0.5 & 0.39 & 4.48E-02 & 5.24E-02 & 5.01E-02 & 5.13E-02 & 3.19E-02 & 2.65E-02 & 1.16E-01 \\
& & (2.07E-04) & (2.23E-04) & (2.18E-04) & (2.21E-02) & (1.76E-04) & (1.61E-04) & (3.20E-04)\\
& & & & & & & & \\
0.9 & 0.72 & 5.00E-02 & 5.09E-02 & 5.06E-02 & 4.99E-02 & 2.50E-02 & 1.67E-02 & 1.51E-01 \\
& & (2.18E-04) & (2.20E-04) & (2.19E-04) & (2.18E-04) & (1.56E-04) & (1.28E-04) & (3.58E-04)\\
& & & & & & & & \\
0.99 & 0.91 & 5.02E-02 & 5.03E-02 & 5.02E-02 & 4.92E-02 & 2.27E-02 & 1.19E-02 & 1.59E-01\\
& & (2.18E-04) & (2.18E-04) & (2.18E-04) & (2.16E-04) & (1.49E-04) & (1.09E-04) & (3.66E-04)\\ \hline
\end{tabular}
}
\end{subtable}
\hfill
\begin{subtable}{\linewidth}
\centering
\caption{$\alpha=5\times10^{-4}$}
\scalebox{0.75}{
\begin{tabular}{@{}ll|rrrrrrr@{}}
\hline
$\rho$ & $\lambda_{2,\rho}$ & Cauchy & Pareto & Truncated $t_1$ & Fr\'echet   & Levy & Bonferroni & Fisher \\ \hline
& & & & & & & & \\
0 & 0.18 & 2.48E-04 & 4.57E-04 & 4.57E-04 & 4.57E-04 & 3.49E-04 & 3.18E-04 & 2.17E-02 \\
& & (1.57E-05) & (2.14E-05) & (2.14E-05) & (2.14E-05) & (1.88E-05) & (1.78E-05) & (1.46E-04) \\
& & & & & & & & \\
0.5 & 0.39 & 3.94E-04 & 4.65E-04 & 4.65E-04 & 4.65E-04 & 3.08E-04 & 2.67E-04 & 2.63E-02 \\
& & (1.98E-05) & (2.16E-05) & (2.16E-05) & (2.16E-05) & (1.75E-05) & (1.63E-05) & (1.60E-04) \\
& & & & & & & & \\
0.9 & 0.72 & 5.20E-04 & 5.28E-04 & 5.28E-04 & 5.28E-04 & 2.37E-04 & 1.65E-04 & 3.82E-02   \\
& & (2.28E-05) & (2.30E-05) & (2.30E-05) & (2.30E-05) & (1.54E-05) & (1.28E-05) & (1.92E-04) \\
& & & & & & & & \\
0.99 & 0.91 & 5.24E-04 & 5.24E-04 & 5.24E-04 & 5.24E-04  & 2.22E-04 & 1.16E-04 & 4.25E-02  \\
& & (2.29E-05) & (2.29E-05) & (2.29E-05) & (2.29E-05) & (1.49E-05) & (1.08E-05) & (2.02E-04) \\ \hline
\end{tabular}
}
\end{subtable}
\label{tab:t_copula_5}
\end{table}

\Cref{tab:t_copula_5} and \ref{tab:t_copula_100} compare the empirical type-I errors of different combination tests at the significance level $\alpha=0.05$ and $5\times10^{-4}$, and for $n = 5$ and $n = 100$. Surprisingly, the results indicate that type-I errors remain well-controlled regardless of the tail dependence coefficient, demonstrating the robustness of the combination tests to violations of the pairwise normal assumption for the test statistics. 

Furthermore, \Cref{tab:t_copula_5} and \ref{tab:t_copula_100} suggest that the Bonferroni test tends to be exceedingly conservative when the dependence coefficient $\lambda > 0$, especially when both $n$ and $\lambda$ are large. In contrast, the combination tests based on heavy-tailed distributions with $\gamma=1$ consistently maintain a type-I error rate close to the specified significance level. Thus, we hypothesize that when test statistics are quasi-asymptotically dependent, the combination tests with a tail index $\gamma \leqslant 1$ are still asymptotically valid when $\alpha \to 0$, but they will not be asymptotically equivalent to the Bonferroni test. While the Bonferroni test can exhibit excessive conservatism, the combination tests with $\gamma = 1$ display neither conservatism nor inflation in their type-I error rates. 
For example, as discussed in \Cref{cor:weak_validity}, in situations where test statistics are perfectly correlated with $\rho = 1$, resulting in a tail dependence coefficient of $\lambda = 1$, the combination tests with $\gamma=1$ maintain an asymptotic type-I error of $\alpha$, whereas the true type-I error of the Bonferroni test is only $\alpha/n$. 

We further investigate the power gain of the combination test over the Bonferroni test when test statistics follow a multivariate t-distribution. Compared to the power comparison in \Cref{sec:power_with_bonferroni}, we replace the distribution of the test statistics from a multivariate normal distribution to a multivariate t distribution with $\nu = 2$, while keeping all other settings the same. \Cref{fig:power_gain_mvt} displays the maximum power gain of each combination test over the Bonferroni test as the power of both tests grows from $0$ to $1$ as signal strength increases. Compared to the subtle power improvement we observed for multivariate normally distributed test statistics in \cref{fig:power_gain}, the maximum power difference for multivariate t-distributed test statistics can be as large as $1$ even when signals are sparse. The power difference does not diminish even when the significance level decreases from 0.05 to $5\times10^{-4}$. 

These findings indicate a potential power advantage of the combination tests over the Bonferroni test, even in the asymptotic regime where $\alpha \to 0$, when test statistics are pairwise asymptotically dependent. 
Our empirical results indicate that, unlike in the case of asymptotic independence, combination tests can remain asymptotically valid while achieving a nontrivial power improvement over the Bonferroni test under asymptotic dependence. This highlights the potential of asymptotic dependence as a valuable framework for advancing both the theoretical and practical understanding of combination tests.

\begin{figure}[t]
     \centering
     \begin{subfigure}[h]{0.83\textwidth}
         \centering
         \includegraphics[width=\textwidth]{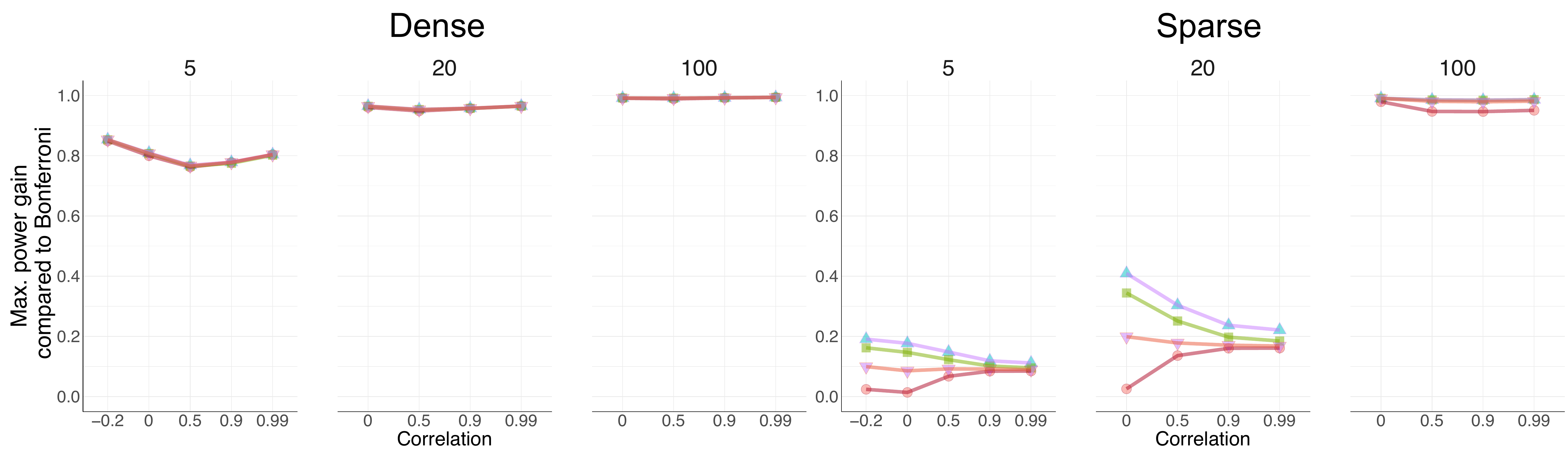}
         \caption{$\alpha=0.05$}
         \label{fig:power_0.05_mvt}
     \end{subfigure}
     \hfill
     \begin{subfigure}[h]{0.83\textwidth}
         \centering
         \includegraphics[width=\textwidth]{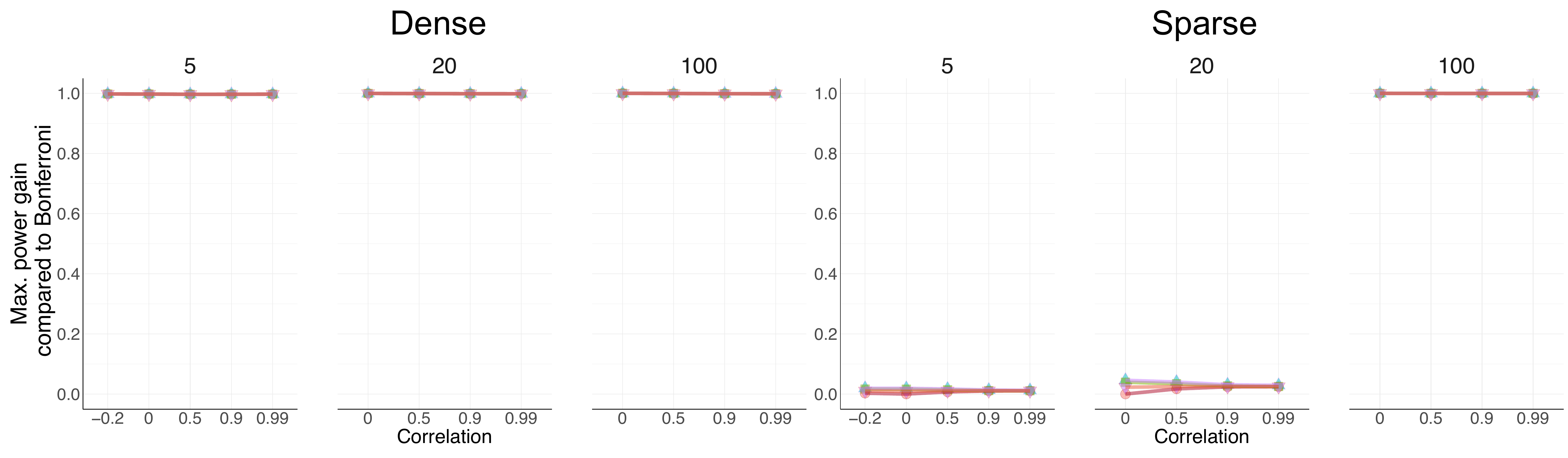}
         \caption{$\alpha=5\times10^{-4}$}
         \label{fig:power_5e-4_mvt}
     \end{subfigure}
\caption{Power comparison with the Bonferroni test when the asymptotic independence is violated of the combination test with different distributions: Cauchy (red with round dot), Fr\'echet $\gamma=1$ (green with square dot), Pareto $\gamma=1$ (purple with triangular dot), left-truncated $t_1$ with truncation threshold $p_0=0.9$ (dark orange with inverted-triangle dot). Left plots correspond to dense signals, and right plots correspond to sparse signals. The maximum power gain is defined as the maximum of the empirical power difference between the proposed test and the Bonferroni test over all possible values of $\mu$.}
\label{fig:power_gain_mvt}
\end{figure}

\section{Real Data Examples}
\label{sec:experiments}

\subsection{Circadian rhythm detection} 

Circadian rhythms, which are oscillations of behavior, physiology, and metabolism, are observed in almost all living organisms \citep{pittendrigh1960circadian}. Recent advances in omics technologies, such as microarray and next-generation sequencing, provide powerful platforms for identifying circadian genes that encode molecular clocks crucial for health and diseases \citep{rijo2019genomics}. In this case study, we focus on a gene expression dataset obtained from mouse liver samples, collected every hour across 48 different circadian time points, denoted as CT points, ranging from CT18 to CT65, under complete darkness conditions \citep{hughes2009harmonics}. At each time point, the expression levels of approximately 13,000 mouse genes were profiled by microarray. 
The objective of this case study is to identify genes that exhibit significant oscillatory behavior by aggregating results across all measured time points. 

One of the most widely used methods is JTK\_CYCLE \citep{hughes2010jtk_cycle}. JTK\_CYCLE determines whether a gene exhibits significant cyclic behavior by performing a Kendall's tau test. It compares the observed gene expression measurements across 48 time points to expected patterns with specific phases and periods using a rank-based correlation test. This process involves testing 216 combinations of phase and period, resulting in 216 correlated base p-values for each gene.  By default, JTK\_CYCLE combines these p-values using the Bonferroni test, though this approach has been shown to lack power in benchmarking studies \citep{mei2021genome}. 

In place of the Bonferroni test, we use the heavy-tailed combination tests to aggregate the $216$ correlated p-values for each gene. For comparison, we also include Fisher's method. To assess the performance of different tests, we utilize a set of the 60 positive control, i.e., cyclic genes, and 61 negative control, i.e., non-cyclic genes, from \cite{wu2014evaluation} as ground truth. \Cref{fig:circadian} displays the box plots of the combined p-values for the positive and negative controls. Compared to the Bonferroni method, the combined p-values from heavy-tailed combination tests have higher detection power of the true signals, while avoiding false positives in negative controls compared to Fisher's method. 

\begin{figure}[t]
    \centering
    \includegraphics[width=0.7\textwidth]{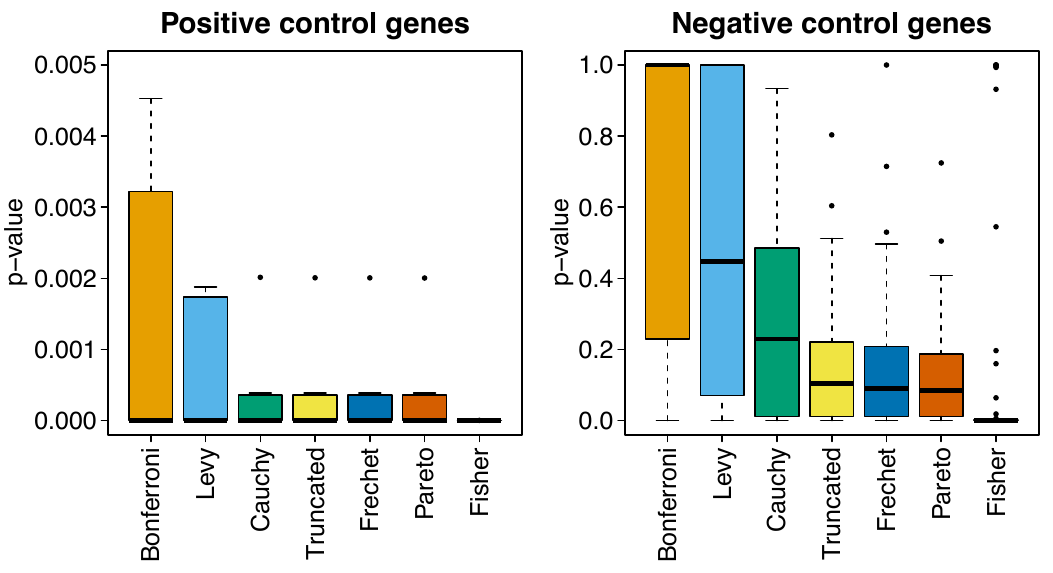} 
    \caption{P-values of positive control and negative control genes from circadian rhythm detection.
   The left plot shows box plots of the combined p-values of 60 positive controls and the right plot shows box plots of the combined p-values of 61 negative controls. ``Truncated'' refers to using the $t_1$ distribution with truncation threshold $p_0 = 0.9$. For Fr\'echet and Pareto distributions, the tail index is set to $\gamma = 1$.}
\label{fig:circadian}
\end{figure}

\subsection{SNP-based gene level association testing in GWAS}

In the second real data analysis, similar to \cite{liu2019acat}, we combine correlated p-values to identify genes that are significantly associated with diseases in genome-wide association studies, referred to as GWAS for brevity. 
A gene of interest may contain multiple single-nucleotide polymorphisms, referred to as SNPs, each tested individually against the trait, e.g., disease status, using a simple regression framework, resulting in SNP-level p-values. Then, p-values from the SNPs within the same gene region are further combined via a gene-level test. SNPs that are close to each other on the genome are highly correlated due to linkage disequilibrium, leading to highly correlated SNP-level p-values for the same gene. Several methods have been developed for gene-level association testing, such as EPIC \citep{wang2022epic} and MAGMA \citep{de2015magma}, which account for SNP-SNP correlations within the same gene. However, these methods can be computationally intensive. For example, 
deriving gene-level test statistics in these methods often requires inverting large covariance matrices.

\begin{figure}[t!]
\centering
\includegraphics[width=0.9\textwidth]{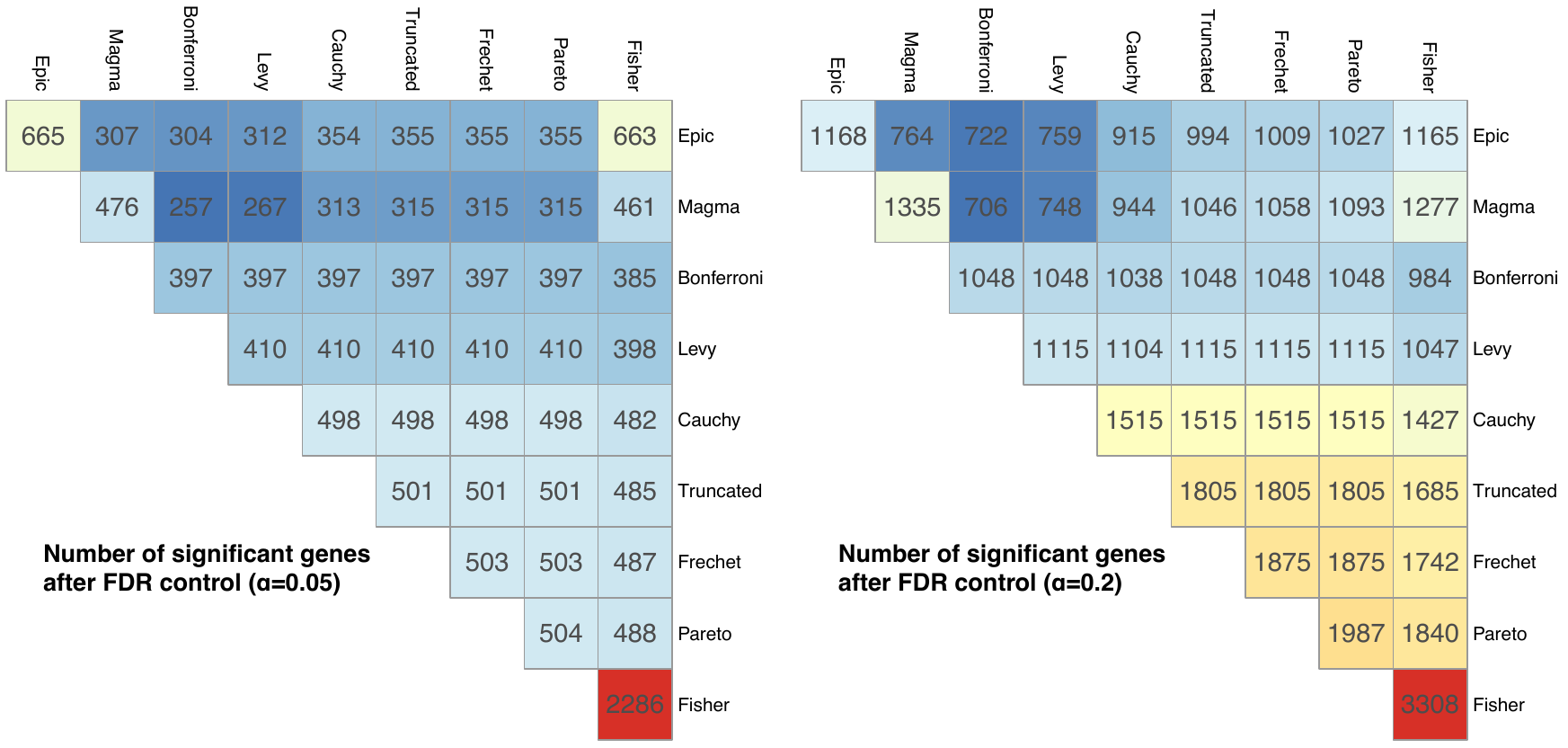}  
\caption{Number of significant genes for gene-level association testing combining SNP-level p-values when considering all genes. Diagonal values indicate the number of significant genes identified by each method; upper-triangular values indicate the number of overlapping discoveries between each pair of methods. Background colors correspond to the logarithms of the numbers. ``Truncated'' refers to the truncated $t_1$ distribution with truncation threshold $p_0 = 0.9$. For Fr\'echet and Pareto distributions, the tail index is set to $\gamma = 1$.}
\label{fig:gwas_sig_gene_all}
\end{figure}

In this analysis, we apply heavy-tailed combination tests to test for each gene's association with schizophrenia, referred to as SCZ \citep{ripke2013genome}. To adjust for multiple testing errors, we apply the Benjamini-Hochberg procedure \citep{benjamini1995controlling} on the gene-level combined p-values to control the false discovery rate, referred to as FDR for simplicity. \Cref{fig:gwas_sig_gene_all} shows the number of overlapping genes rejected by each method compared when FDR is controlled at $0.05$ and $0.2$. As illustrated, the number of genes detected by the combination tests is comparable to or even higher than those identified by Epic and Magma. Notably, the combination tests are highly computationally efficient, completing analyses almost instantly compared to domain-specific methods that require modeling the correlation structure. 

Compared to the Bonferroni test, the combination tests identify  $25\%$ more significant genes, even at a low nominal FDR level of $\alpha = 0.05$. \Cref{fig:snps_number} summarizes the number of SNPs for each gene, showing that most genes have fewer than 100 SNPs, suggesting that the significant power gain is not due to combining an excessively large number of p-values, which could lead to inflation of type-I errors. \Cref{fig:gwas_sig_gene_less_snps} displays that even when focusing solely on genes with $50$ or fewer SNPs, the combination tests still identify substantially more genes than the Bonferroni test. Compared to the simulation results, the substantial power gain in this real data analysis likely results from the violations of quasi-asymptotic independence of the SNP-level p-values. 

To evaluate whether the additional genes detected by the heavy-tailed
combination tests are biologically meaningful, we analyze the set of $939$ genes detected at the FDR level $\alpha = 0.2$ by the Cauchy, truncated $t_1$, Fr\'echet, or Pareto combination tests but not by the Bonferroni test. We conduct a gene-set enrichment analysis using DAVID \citep{sherman2022david}. Results are shown in \cref{tab:heavy_tail_class_with_dep_condition}. The top two significantly enriched gene ontology terms are ``regulation of ion transmembrane transport'' and ``chemical synaptic transmission'', both of which have been reported and confirmed by independent studies \citep{favalli2012role, liu2022impact}. 
These findings underscore the enhanced statistical power of transformation tests compared to the Bonferroni test in practical genetic applications.

\section{Discussion}

In this section, we examine the extensions and limitations of our results and discuss related literature. The asymptotic validity of the heavy-tailed combination tests can be generalized to cases where p-values are only valid, i.e., they satisfy $\Pr(p \leqslant \alpha)\leqslant \alpha$, as long as the p-values are pairwise quasi-asymptotically independent. Though the transformed test statistics $X_i$ derived from these valid p-values may lack a regularly varying tailed distribution, the combination tests should maintain control over type-I errors. Intuitively, this is because we can always construct uniformly distributed variables that are stochastically smaller than these valid p-values. 

In the context of multiple testing, the combination tests can be applied within a closed testing procedure to identify individual non-null hypotheses. In Supplementary \Cref{sec:closed_testing}, we provide a shortcut algorithm for applying closed testing with combination tests. \cite{goeman2019simultaneous} demonstrated that, as $n\to+\infty$, the closed testing procedure using harmonic mean p-values is significantly more powerful than the one based on Bonferroni corrections. However, for finite $n$ and when the family-wise error rate approaches zero, the equivalence between combination tests and the Bonferroni test may extend to their respective closed testing procedures.

To balance validity and power, we recommend using a truncated $t_1$ distribution with truncation threshold $p_0 = 0.9$, based on the empirical results. This definition differs slightly from the truncated Cauchy distribution proposed by \citet{fang2021heavy}, which assigns a point mass at the truncation threshold rather than rescaling the distribution. Notably, the half-Cauchy distribution in \citet{long2023cauchy} is a special case of our definition with $p_0=0.5$. While our focus is on establishing the asymptotic validity of combination tests using the truncated $t_1$
distribution under an unknown dependence structure, both \citet{fang2021heavy} and \citet{long2023cauchy} have also provided adjustments that ensure exact validity when p-values are independent.

While our results establish the asymptotic validity of the heavy-tailed combination tests under quasi-asymptotically independent test statistics, the combination tests can exhibit noticeable inflation in type-I error rates under arbitrary dependence and finite $\alpha$. Exact control over type-I errors may be achieved with additional adjustments. For the harmonic mean p-values, \citet{vovk2020combining} demonstrated that it is valid under arbitrary dependence when scaled by a factor $a_n=(y_n+n)^2/(ny_n+n)$ where $y_n$ is the unique solution to the equation $y_n^2=n\left\{(y_n+1)\log(y_n+1)-y_n\right\}$. This factor asymptotically approaches $\log n$ when $n$ increases and the test can be further improved through randomization techniques \citep{gasparin2024combining}. Other studies, such as  \citet{wilson2019harmonic} and subsequent works \citep{doi:10.1073/pnas.1900671116,doi:10.1073/pnas.1902157116,doi:10.1073/pnas.1909339116} have provided empirically calibrated thresholds for harmonic mean p-value. Additionally, \citet{chen2024sub} establish an adjustment of the harmonic mean p-value to guarantee its validity when the individual p-values follow a Clayton copula.

Numerous alternative methods for combining dependent p-values exist, each with distinct trade-offs.  Some approaches, such as those by \citet{goeman2004global} and \citet{edelmann2020global}, model specific dependence structures, which can be powerful but require strong model assumptions and can be computationally intensive. Other methods, like those by \citet{hommel1983tests} and \citet{vovk2020combining}, guarantee type-I error control under arbitrary dependence. However, as discussed in earlier studies \citep{fang2021heavy,chen2023trade}, combination methods with proven validity guarantees under arbitrary dependence may have limited power in practical applications. 

\section*{Data and Code Availability}
The R package facilitating the implementation of heavy-tailed combination tests is accessible at \url{https://github.com/gl-ybnbxb/heavytailcombtest}. The code to reproduce figures and tables is at: \url{https://github.com/gl-ybnbxb/combination-test-reproduce-code}. Time-series circadian gene expression data of mouse liver is downloaded from the Gene Expression Omnibus (GEO) database with accession number GSE11923. GWAS summary statistics of schizophrenia (SCZ) is downloaded from the Psychiatric Genomics Consortium at \url{https://pgc.unc.edu/for-researchers/download-results/}.

\section*{Acknowledgement}
   This work was supported by the National Science Foundation under grants DMS-2113646 and DMS-2238656, and by the National Institute of Health under grant R35  GM138342. We thank Nancy R. Zhang and Jian Ding for their helpful discussions. We acknowledge the University of Chicago’s Research Computing Center and Texas A\&M University's High Performance Research Computing for their support of this work.

\section*{Supplementary material}
The Supplementary Material includes more discussion about the test, theoretical proofs, and additional experimental results.

\bibliographystyle{abbrvnat}
\bibliography{ref.bib}

\newpage

\renewcommand{\thesection}{S\arabic{section}}   
\renewcommand{\thetable}{S\arabic{table}}   
\renewcommand{\thefigure}{S\arabic{figure}}

\setcounter{section}{0}
\setcounter{figure}{0}
\setcounter{table}{0}

\renewcommand{\thetheorem}{S\arabic{theorem}}
\renewcommand{\theassumption}{S\arabic{assumption}}
\renewcommand{\thecorollary}{S\arabic{corollary}}
\renewcommand{\theremark}{S\arabic{remark}}
\renewcommand{\thelemma}{S\arabic{lemma}}
\renewcommand{\theequation}{S\arabic{equation}}
\renewcommand{\theproposition}{S\arabic{proposition}}
\renewcommand{\thedefinition}{S\arabic{definition}}

\setcounter{equation}{0}
\setcounter{theorem}{0}
\setcounter{assumption}{0}
\setcounter{corollary}{0}
\setcounter{remark}{0}
\setcounter{lemma}{0}
\setcounter{proposition}{0}
\setcounter{definition}{0}

\bigskip
\begin{center}
{\large\bf SUPPLEMENTARY MATERIAL}
\end{center}

\section{Type-I error of the combination test with negatively correlated p-values}
\label{sec:negative_p_values}
We investigate the type-I error control of the combination tests when base p-values are negatively correlated. As shown in  \Cref{prop:two_sided_p_vals_pos_dep}, two-sided p-values with pairwise Gaussian test statistics are always pairwise non-negatively correlated, thus we generate negatively correlated one-sided p-values using the same experimental setting as in \Cref{subsec:validity_simulations} but with $\rho < 0$. To make $\Sigma_\rho$ positive definite, we require $\rho > -1/(n-1)$. We focus on $n = 2$ so that $\rho$ can take any negative values greater than $-1$. We consider three values of $\rho$: $-0.5,-0.9,-0.99$, and compare the type-I error of different combination tests.

\Cref{tab:comp_neg_dep} presents the empirical type-I errors of various methods, where each scenario is replicated $5\times10^4$ times in our experiments. Among all combination tests, only the Cauchy combination test is conservative, particularly when the p-values have strong negative correlations. This conservativeness arises from the fact that the support of the Cauchy distribution is $\mathbb{R}$, and the transformed test statistics $X_i$ can cancel each other when p-values are negatively correlated. In contrast, if we truncated the Cauchy distribution to be left bounded, the test is no longer conservative even with a modest truncation threshold $p_0 = 0.9$. 

As a confirmation of the asymptotic validity result, we further let $\alpha$ drop to $5 \times 10^{-8}$ and as shown in \Cref{tab:error_ratio}, the ratio between the empirical type-I error and $\alpha$ for the Cauchy combination test does slowly increases to $1$, which is consistent with its asymptotic validity. However, the Cauchy combination test is conservative for any moderately small $\alpha$.

\begin{table}[ht]
\centering
\caption{Empirical type-I errors of different heavy-tailed combination tests when $n = 2$ and p-values are negatively correlated. Values inside the parentheses are standard errors. The significance level is $0.05$. The Fr\'echet and Pareto distributions are with tail index $\gamma = 1$. The left-truncated t distribution with $\gamma =1$ has the truncation threshold $p_0 = 0.9$}
\scalebox{0.75}{
\begin{tabular}{@{}l|rrrrrrr@{}}
\hline
$\rho$    & Cauchy    & Pareto    & Truncated $t_1$ & Fr\'echet   & Levy      & Bonferroni & Fisher \\\hline
& & & & & & & \\
-0.5 & 0.039 & 0.054 & 0.049 & 0.053 & 0.052 & 0.052 & 0.027 \\
& ($8.69\times10^{-4}$) & ($1.01\times10^{-3}$) & ($9.62\times10^{-4}$)  & ($1.00\times10^{-3}$) & ($9.92\times10^{-4}$) & ($9.92\times10^{-4}$)  & ($7.94\times10^{-4}$) \\
& & & & & & & \\
-0.9 & 0.021 & 0.053 & 0.045 & 0.052 & 0.051 & 0.051 & 0.020 \\
& ($6.42\times10^{-4}$) & ($9.99\times10^{-4}$) & ($9.28\times10^{-4}$)  & ($9.90\times10^{-4}$) & ($9.88\times10^{-4}$) & ($9.88\times10^{-4}$)  & ($6.21\times10^{-4}$) \\
& & & & & & & \\
-0.99 & 0.008 & 0.054 & 0.045 & 0.053 & 0.052 & 0.052 & 0.019 \\
& ($3.95\times10^{-4}$) & ($1.01\times10^{-3}$) & ($9.28\times10^{-4}$)  & ($9.98\times10^{-4}$) & ($9.96\times10^{-4}$) & ($9.96\times10^{-4}$)  & ($6.03\times10^{-4}$) \\ \hline
\end{tabular}
}
\label{tab:comp_neg_dep}
\end{table}

\begin{table}[h]
\centering
\caption{The empirical type-I error and the 95\% confidence interval of the ratio empirical error/error bound derived from the 95\% Wilson binomial confidence interval when the number of base hypotheses is 2. Base p-values are one-sided p-values converted from $Z$ statistics distributed from bivariate normal with correlation $-0.9$}
\scalebox{0.85}{\begin{tabular}{ccccc}
\hline
\multicolumn{5}{c}{95\% confidence interval of $\frac{\text{empirical type-I error}}{\alpha}$}\\
\multicolumn{1}{c}{$\alpha$} & Cauchy & Pareto & Fr\'echet & Bonferroni\\ \hline
\multicolumn{1}{c|}{$5\times10^{-2}$} & $0.413\pm.000$ & $1.054\pm.000$ & $1.030\pm.000$ & $1.023\pm.000$\\
\multicolumn{1}{c|}{$5\times10^{-3}$} & $0.509\pm.000$ & $1.003\pm.000$ & $1.001\pm.000$ & $1.000\pm.000$\\
\multicolumn{1}{c|}{$5\times10^{-4}$} & $0.592\pm.000$ & $1.001\pm.001$ & $1.001\pm.001$ & $1.001\pm.001$\\
\multicolumn{1}{c|}{$5\times10^{-5}$} & $0.658\pm.002$ & $1.001\pm.002$ & $1.001\pm.002$ & $1.001\pm.002$\\
\multicolumn{1}{c|}{$5\times10^{-6}$} & $0.717\pm.007$ & $1.006\pm.008$ & $1.006\pm.008$ & $1.006\pm.008$\\
\multicolumn{1}{c|}{$5\times10^{-7}$} & $0.758\pm.021$ & $1.002\pm.024$ & $1.002\pm.024$ & $1.002\pm.024$\\
\multicolumn{1}{c|}{$5\times10^{-8}$} & $0.788\pm.068$ & $0.984\pm.076$ & $0.984\pm.076$ & $0.984\pm.076$\\ 
\hline
\end{tabular}}
\label{tab:error_ratio}
\end{table}

\newpage
\section{Closed Testing of Combination Test}
\label{sec:closed_testing}
\subsection{Describing the closed testing procedures}
The closed testing procedure, introduced by \citet{marcus1976closed}, is a multiple testing method designed to control the family-wise error rate (FWER). The definition of the closed testing procedure for a global test $\psi$ is given as follows
\begin{definition}[Closed Testing Procedure of $\psi$]
\label{def:closed_testing}
Suppose $H_1,H_2,\cdots,H_n$ are null hypotheses. The closed testing procedure rejects $H_i$ if all set $I$s containing $i$ can be rejected by $\psi$ on $I$. That is, the decision function of $H_i$ is
\begin{equation*}
\phi_i=1_{\left\{\min_{i\in I}\psi_I=1\right\}}
\end{equation*}
\end{definition}

Now let us formalize the closed testing procedure of the heavy-tailed combination test step by step. The standard heavy-tailed combination test based on the heavy-tailed distribution $F$ for a set $I$ has the test statistics
$$S_{I}=\sum_{i\in I}H(P_i),$$
where $H(P)=Q_F(1-P)$ and $Q_F$ is $F$'s quantile function. The corresponding p-value is 
$$P_I=|I|\bar{F}(S_I)$$
with $\bar{F}(x)=1-F(x)$.
According to the closure principle, when the threshold for family-wise error rate is $\alpha$, the decision function for the hypothesis $H_i$ is
\begin{equation*}
\phi_i=\min_{i\in I}1_{\{P_I\leqslant \alpha\}}=1_{\max_{i\in I}P_I\leqslant \alpha}=1_{\max_{1\leqslant k\leqslant n}\max_{i\in I,|I|=k}P_I\leqslant \alpha}.   
\end{equation*}
Therefore, the p-value for each hypothesis $H_i$ is 
\begin{equation}
\label{eq:closed_testing_indivdual_p_value}
P_i^*:=\max_{i\in I}P_I=\max_k P_{i,k}^*,
\end{equation}
where $P_{i,k}^*=\max_{i\in I,|I|=k}P_I$. For a fixed $k$, $P_{i,k}^*$ can be further rewritten as
\begin{equation*}
P_{i,k}^*=\max_{i\in I,|I|=k}P_I=k\bar{F}\left(\min_{i\in I,|I|=k}S_I\right)=k\bar{F}\left(\min_{i\in I,|I|=k}\sum_{i\in I}H(P_i)\right).
\end{equation*}
Since $H(\cdot)$ is a decreasing function, to reach the minimum, we should consider those set $I$'s with largest p-values to construct $P_{i,k}^*$. Specifically, when $k=1$, the only set in consideration is $\{i\}$ and 
\begin{equation*}
    P_{i,k}^*=P_i.
\end{equation*}
When $k\geqslant2$, there are two cases. If $P_i$ are one of the largest $k$ p-values,
\begin{equation*}
    \min_{i\in I,|I|=k}\sum_{j\in I}H(P_j) = \sum_{j=n-k+1}^nH\left(P_{(j)}\right).
\end{equation*}
Otherwise, we should combine $P_i$ with the largest $k-1$ p-values, i.e.,
\begin{equation*}
    \min_{i\in I,|I|=k}\sum_{j\in I}H(P_j) = H(P_i)+\sum_{j=n-k+2}^nH\left(P_{(j)}\right).
\end{equation*}
Summarize all three scenarios, 
\begin{equation}
\label{eq:more_detailed_individual_p_value}
 P_{i,k}^*=\max_{i\in I,|I|=k}P_I= 
 \begin{cases}
k\bar{F}\left(\max\big\{H\left(P_i\right),H\left(P_{(n-k+1)}\right)\}+\sum_{j=n-k+2}^n H\left(P_{(j)}\right)\right) & k\geqslant 2\\
P_i & k=1
\end{cases}
\end{equation}
\Cref{eq:closed_testing_indivdual_p_value,eq:more_detailed_individual_p_value} indicates that there are at most $n^2$ $ P_{i,k}^*$ to compute for the closed testing procedure once p-values are ordered. Since the summation and monotonicity of $H$ creates hierarchy for subset $I$'s, there is no need to consider all $2^n$ subsets.

\subsection{A Shortcut Algorithm for fixed $\alpha$}
For a given family-wise error rate threshold $\alpha$, we can develop a shortcut algorithm to further reduce computation. Without loss of generality, we assume observed p-values $p_1\leqslant\dots\leqslant p_n$. 

If an individual null $H_{i}$ is rejected,
\begin{align}
\label{eq:reject_i_p}
& p_i^*\leqslant\alpha \Leftrightarrow \max_k p_{i,k}^*\leqslant\alpha\\
\label{eq:reject_i_X}
\Leftrightarrow&
\begin{cases}
\quad H\left(p_i\right)\geqslant H(\alpha)\ \mathrm{for\ }k=1\\
\quad H\left(p_i\right)+\sum_{j=n-k+2}^n H\left(p_j\right)\geqslant H(\alpha/k)\ \mathrm{for\ }k=2,\cdots,n-i+1\\
\quad \sum_{j=n-k+1}^n H\left(p_j\right)\geqslant H(\alpha/k)\ \mathrm{for\ }k=n-i+2,\cdots,n\\
\end{cases},
\end{align}
where $H(p)=Q_F(1-p)$. 
Accordingly, we define threshold $c_k$'s as follows
\begin{equation*}
c_k=
\begin{cases}
H(\frac{\alpha}{k})-\sum_{j=n-k+2}^n H(p_j) & k\ge 2\\
H(\alpha) & k=1
\end{cases}
.
\end{equation*}
Then \eqref{eq:reject_i_X} can be further rewritten as
\begin{equation}
\label{eq:reject_i_final}
H(p_i)\ge \max\left(c_1,\cdots,c_{n-i+1}\right),\ H(p_{i-1})\ge c_{n-i+2},\cdots,H(p_1)\ge c_n.
\end{equation}
That is, the individual null $H_{i}$ is rejected if and only if \eqref{eq:reject_i_final} holds. Furthermore, we observe that if nulls $H_{1},\cdots,H_{i-1}$ are rejected, 
\begin{equation*}
    H(p_{i-1})\ge \max\left(c_1,\cdots,c_{n-i+2}\right),\cdots,H(p_1)\ge \max\left(c_1,\cdots,c_{n}\right),
\end{equation*}
and it natually holds that 
\begin{equation*}
 H(p_{i-1})\ge c_{n-i+2},\cdots,H(p_1)\ge c_n. 
\end{equation*}
Hence, the closed testing procedure of the heavy-tailed combination test can be formalized as a step-down procedure described as follows

\begin{algorithm}[ht]
\caption{Shortcut for the Closed Testing Procedure}
\label{alg:cap}
\KwIn{$p_1 \leq p_2 \leq \cdots \leq p_n$, threshold $\alpha$}
\For{$i \gets 1$ \KwTo $n$}{
    $x_i \gets H(p_i)$\;
}
$c_1 \gets H(\alpha)$\;
\For{$k \gets 2$ \KwTo $n$}{
    $c_k \gets H\left(\frac{\alpha}{k}\right) - \sum_{j=n-k+2}^n x_j$\;
}
$J \gets \mathrm{argmin}_i \{ x_i < \max(c_1, \ldots, c_{n-i+1}) \}$\;
\For{$i \gets 1$ \KwTo $n$}{
    Decision function $\phi_i \gets 1_{\{i < J\}}$\;
}
\KwOut{$\phi_1, \phi_2, \ldots, \phi_n$}
\end{algorithm}

\section{Proofs of theoretical results}
\label{sec:a_new_proof}
\subsection{Notations}
For the sake of simplicity, we use $\Pr_0(\cdot)$ to represent the probability measure under the global null. Besides, we use $A\overset{P}{=}B$ to stand for $\Pr(A=B)=1$. We use $\Phi(\cdot)$ and $\phi(\cdot)$ to denote the cumulative distribution function and density of the standard normal distribution. Without loss of generality, we assume all weights $\omega_i>0$ in the following proofs. Since when one $\omega_i=0$, both $\omega_iX_i$ and its tail probability are 0, and hence we can ignore the term $i$ in the multiple testing procedure. For all theorems and lemmas, we assume $F$ is the cumulative function of a distribution in $\mathscr{R}$ and corresponding quantile function is $Q_F$. The proof of all lemmas is in \Cref{sec:proof_of_lemmas}.

\subsection{Proof of \Cref{cor:tail_prob_max_vs_sum}}
\label{sec:pf_sum_max_same_tail}

\begin{proof}
The tail probability of the maximum of $X_i$'s is
\begin{equation}
\label{eq:max_smaller}
\begin{split}
\Pr\left(\max_{i=1,\ldots,n}X_i>x\right)=&\ \Pr\left(\cup_{i=1}^n\{X_i>x\}\right)\\
\leqslant& \ \sum_{i=1}^n\Pr\left(X_i>x\right)\\
=& \ \sum_{i=1}^n\overline{F_{i}}(x).
\end{split}
\end{equation}
We also have
\begin{equation}
\label{eq:max_larger}
\begin{split}
\Pr\left(\max_{i=1,\ldots,n}X_i>x\right)=&\ \Pr\left(\cup_{i=1}^n\{X_i>x\}\right)\\
\geqslant&\ \sum_{i=1}^n\Pr(X_i>x)-\sum_{i\neq j}\Pr(X_i>x,X_j>x)\\
=&\ \sum_{i=1}^n\overline{F_{i}}(x)-o(\max_{i=1,\ldots,n}\overline{F_{i}}(x)) \ ,
\end{split}
\end{equation}
where the first inequality utilize the Boole's inequality, and the last equation follows from the definition of quasi-asymptotic independence between $X_i$ and $X_j$.
Due to \Cref{eq:max_smaller,eq:max_larger}, by the Squeeze Theorem, we have 
\begin{equation*}
\lim _{x \rightarrow \infty} \frac{\Pr\left(\max_{i=1,\ldots,n}X_i>x\right)}{\sum_{i=1}^n\overline{F_{i}}(x)}=1 \ .
\end{equation*}

\end{proof}

\subsection{Proof of \Cref{thm:asymp_type_I_error_control}}
\label{sec:proof_error_control}

\begin{lemma}
\label{lemma:normal_to_regularly_varying}
Suppose random variable $Z\sim N(\mu,1)$. Then, both $X_1=Q_F\left(\Phi(Z)\right)$ and $X_2=Q_F\left(2\Phi(|Z|)-1\right)$ have distributions in class $\mathscr R$. Moreover, if $\mu=0$, the distributions of both $X_1$ and $X_2$ follow $F$.
\end{lemma}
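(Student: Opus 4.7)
When $\mu = 0$, the ``moreover'' assertion follows from the probability integral transform: $\Phi(Z) \sim \mathrm{Unif}(0,1)$ when $Z \sim N(0,1)$, and the CDF of $|Z|$ is $t \mapsto 2\Phi(t)-1$, so $2\Phi(|Z|)-1 \sim \mathrm{Unif}(0,1)$ as well. Applying $Q_F$ to a uniform variable yields a variable with CDF $F$, so both $X_1$ and $X_2$ are distributed as $F$.

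For general $\mu$, my first step is to write down the survival functions of $X_1$ and $X_2$ explicitly. Using monotonicity of $Q_F$ and (assuming continuity of $F$; the general case follows from right-continuity of $Q_F$) the identity $\{Q_F(t) > x\} = \{t > F(x)\}$, one obtains
\begin{equation*}
\bar{G}_1(x) = \bar\Phi\bigl(u(x) - \mu\bigr), \qquad
\bar{G}_2(x) = \bar\Phi\bigl(v(x) - \mu\bigr) + \bar\Phi\bigl(v(x) + \mu\bigr),
\end{equation*}
where $\bar\Phi = 1 - \Phi$, $u(x) = \Phi^{-1}(F(x))$, $v(x) = \Phi^{-1}\bigl((1+F(x))/2\bigr)$, and both $u(x), v(x) \to +\infty$ as $x \to +\infty$. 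The task reduces to verifying that $\bar G_i(xy)/\bar G_i(x) \to y^{-\gamma}$ for every fixed $y>0$.

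The main asymptotic ingredient is Mills' ratio $\bar\Phi(t) \sim \phi(t)/t$ as $t \to +\infty$. Combined with $F \in \mathscr{R}_{-\gamma}$, which forces $\bar F(x) = L(x) x^{-\gamma}$ for a slowly varying $L$ and hence $-\log \bar F(x) \sim \gamma \log x$, the relation $\bar\Phi(u(x)) = \bar F(x)$ yields the key estimate $u(x) \sim \sqrt{2\gamma \log x}$, and analogously $v(x) \sim \sqrt{2\gamma \log x}$. From this I obtain the crucial consequence
\begin{equation*}
u(xy) - u(x) \;=\; \frac{u(xy)^2 - u(x)^2}{u(xy) + u(x)} \;=\; O\!\left(\frac{1}{\sqrt{\log x}}\right) \;\longrightarrow\; 0.
\end{equation*}
Applying Mills' ratio to $\bar\Phi(u(x) - \mu)$ and using $\phi(u-\mu)/\phi(u) = e^{\mu u - \mu^2/2}$, the ratio $\bar G_1(xy)/\bar G_1(x)$ factors (up to $1+o(1)$) as
\begin{equation*}
\frac{\phi(u(xy))}{\phi(u(x))} \cdot e^{\mu (u(xy) - u(x))} \cdot \frac{(u(x)-\mu)\, u(xy)}{(u(xy)-\mu)\, u(x)} \;\sim\; \frac{\bar F(xy)}{\bar F(x)} \;\longrightarrow\; y^{-\gamma},
\end{equation*}
since the rational factor tends to $1$, the exponential factor tends to $e^0 = 1$, and $\bar F \in \mathscr{R}_{-\gamma}$. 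Hence $G_1 \in \mathscr{R}_{-\gamma}$.

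For $G_2$, the same argument applies to each summand, but I first identify the dominant one: a direct Mills-ratio computation gives $\bar\Phi(v+\mu)/\bar\Phi(v-\mu) \sim e^{-2\mu v}$, so $\bar\Phi(v - \mu)$ dominates when $\mu \geq 0$ and $\bar\Phi(v + \mu)$ dominates when $\mu < 0$. In either case, replicating the calculation above with $v$ in place of $u$ gives $\bar G_2(xy)/\bar G_2(x) \to y^{-\gamma}$, so $G_2 \in \mathscr{R}_{-\gamma}$. The main technical hurdle is the Gaussian shift: the factor $\phi(u-\mu)/\phi(u) = \exp(\mu u - \mu^2/2)$ is exponentially large in $u$, and only the cancellation $u(xy) - u(x) \to 0$, which rests on the sharp estimate $u(x) \sim \sqrt{2\gamma \log x}$, makes the shift asymptotically invisible after taking the ratio; everything else is routine Mills' ratio manipulation.
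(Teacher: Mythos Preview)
Your proposal is correct and follows essentially the same route as the paper: write the survival function via $u(x)=\Phi^{-1}(F(x))$ (resp.\ $v(x)$), apply Mills' ratio, factor out the Gaussian shift $e^{\mu(u(xy)-u(x))}$, and show it tends to $1$ while the remaining factor reduces to $\bar F(xy)/\bar F(x)\to y^{-\gamma}$. The paper executes this via the explicit expansion $\Phi^{-1}(1-p)=\sqrt{-2\log p-\log\log(1/p)-\log(4\pi)+o(1)}$; your difference-of-squares formulation is the same idea packaged differently, and your treatment of $X_2$ (dominant summand according to the sign of $\mu$) matches the paper's.

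One small imprecision worth tightening: the first-order estimate $u(x)\sim\sqrt{2\gamma\log x}$ by itself does \emph{not} yield $u(xy)^2-u(x)^2=O(1)$; for instance $u(x)=\sqrt{2\gamma\log x}+\sin(\log x)$ satisfies the asymptotic but $u(xy)-u(x)\not\to0$. What actually gives the bound is the sharper relation $u(x)^2=-2\log\bar F(x)-2\log u(x)+O(1)$ coming directly from Mills' ratio and $\bar\Phi(u(x))=\bar F(x)$; then
\[
u(xy)^2-u(x)^2 \;=\; -2\log\frac{\bar F(xy)}{\bar F(x)}-2\log\frac{u(xy)}{u(x)}+o(1)\;\longrightarrow\;2\gamma\log y,
\]
which is bounded, and your division by $u(xy)+u(x)\to\infty$ finishes the step. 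This is exactly what the paper's expansion encodes, so once you invoke the sharper form your argument is complete. (The same refinement also covers the boundary case $\gamma=0$, where $u(x)\sim\sqrt{2\gamma\log x}$ is vacuous.)
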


\begin{restatable}{lemma}{NormalQuasiIndep}
\label{lemma:quasiindep}
Suppose that for all $i< j$, random variable $(T_i,T_j)$ has the bivariate normal distribution with finite means, marginal variance 1, and correlation $\rho_{ij}=\mathrm{Corr}(T_i,T_j)\in[-\rho_0,\rho_0]$. Then, for any fixed $\omega_i>0$, $i=1,\ldots,n$, we have that $\omega_i X_i$ where $X_i=Q_F(2\Phi(|T_i|)-1)$ are pairwise quasi-asymptotically independent random variables with any choice of $\rho_{ij}\in[-\rho_0,\rho_0]$:
\begin{equation}
\label{eq:uniform_asymp_indep}
\begin{split}
    &\lim_{x \rightarrow +\infty} \sup_{\rho_{ij}\in[-\rho_0,\rho_0]}\frac{\Pr\left(\omega_iX_{i}^+>x,\omega_jX_{j}^+>x\right)}{\Pr(\omega_iX_i>x)+\Pr(\omega_jX_j>x)}=0,\\
    &\lim_{x \rightarrow +\infty} \sup_{\rho_{ij}\in[-\rho_0,\rho_0]}\frac{\Pr\left(\omega_iX_{i}^+>x,\omega_jX_{j}^->x\right)}{\Pr(\omega_iX_i>x)+\Pr(\omega_jX_j>x)}=0,\\
    &\lim_{x \rightarrow +\infty} \sup_{\rho_{ij}\in[-\rho_0,\rho_0]}\frac{\Pr\left(\omega_iX_{i}^->x,\omega_jX_{j}^+>x\right)}{\Pr(\omega_iX_i>x)+\Pr(\omega_jX_j>x)}=0. 
\end{split}
\end{equation}
Moreover, the same results also hold for $X_i=Q_F(\Phi(T_i))$ if further assume $\bar{F}(x)\geqslant F(-x)$ for sufficiently large $x$.
\end{restatable}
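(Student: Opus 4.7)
The plan is to translate the tail events on the transformed statistics $\omega_i X_i$ into tail events on $(T_i, T_j)$, then bound the resulting bivariate normal probabilities uniformly in $\rho_{ij} \in [-\rho_0, \rho_0]$. By \Cref{lemma:normal_to_regularly_varying} each marginal of $\omega_i X_i$ lies in $\mathscr{R}$, so the denominator in \eqref{eq:uniform_asymp_indep} is of order $\bar F(x/\omega_i) + \bar F(x/\omega_j)$ and it suffices to obtain a uniform $o(\bar F(x/\omega_i) + \bar F(x/\omega_j))$ bound on the numerator. For the two-sided construction $X_i = Q_F(2\Phi(|T_i|)-1)$, set $t_x^{(i)} = \Phi^{-1}(1/2 + F(x/\omega_i)/2)$ and $s_x^{(i)} = \Phi^{-1}(1/2 + F(-x/\omega_i)/2)$, so $\{\omega_i X_i^+ > x\} = \{|T_i| > t_x^{(i)}\}$ with $t_x^{(i)} \to +\infty$ and $\{\omega_i X_i^- > x\} = \{|T_i| < s_x^{(i)}\}$ with $s_x^{(i)} \to 0^+$. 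For the one-sided case $X_i = Q_F(\Phi(T_i))$, let $\tau_x^{(i)} = \Phi^{-1}(F(x/\omega_i))$ and $\sigma_x^{(i)} = \Phi^{-1}(F(-x/\omega_i))$; the assumption $\bar F(y) \geqslant F(-y)$ for large $y$ then yields the key inequality $\sigma_x^{(i)} \leqslant -\tau_x^{(i)}$.

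For the ``both large'' numerator $\Pr(|T_i| > t_x^{(i)}, |T_j| > t_x^{(j)})$, I split over the four sign choices $(\pm T_i, \pm T_j)$. Each piece is a bivariate normal upper tail with correlation in $[-\rho_0, \rho_0]$ (after an overall sign flip), which I bound by $\Pr(Z_1 + Z_2 > t_x^{(i)} + t_x^{(j)})$ using $\mathrm{Var}(Z_1+Z_2) = 2(1+\rho)$:
\begin{equation*}
\Pr\bigl(Z_1 > t_x^{(i)}, Z_2 > t_x^{(j)}\bigr) \;\leqslant\; \bar\Phi\!\left(\frac{t_x^{(i)} + t_x^{(j)}}{\sqrt{2(1+\rho_0)}}\right).
\end{equation*}
With $a = \sqrt{2/(1+\rho_0)} > 1$, Mills' ratio yields $\bar\Phi(a\,t)/\bar\Phi(t) \to 0$, and since $2\bar\Phi(t_x^{(i)}) = \bar F(x/\omega_i)$, this contribution is $o\bigl(\bar F(x/\omega_i) + \bar F(x/\omega_j)\bigr)$ uniformly in $\rho \in [-\rho_0, \rho_0]$. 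The same argument handles $\Pr(T_i > \tau_x^{(i)}, T_j > \tau_x^{(j)})$ in the one-sided case.

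The main obstacle is the two-sided mixed tail $\Pr(|T_i| > t_x^{(i)}, |T_j| < s_x^{(j)})$: the marginal probability of the small-$|T_j|$ event is only of order $F(-x/\omega_j)$, which for symmetric $F$ (e.g., Cauchy) matches $\bar F(x/\omega_j)$ and is therefore not $o(\bar F)$ by itself. I would resolve this by conditioning on $T_j = t$ with $|t| \leqslant s_x^{(j)}$: since $T_i \mid T_j = t \sim N(\rho t, 1-\rho^2)$ and $|\rho t| \leqslant \rho_0 s_x^{(j)}$,
\begin{equation*}
\Pr\bigl(|T_i| > t_x^{(i)} \mid T_j = t\bigr) \;\leqslant\; 2\bar\Phi\!\left(\frac{t_x^{(i)} - \rho_0 s_x^{(j)}}{\sqrt{1-\rho^2}}\right) \;\leqslant\; 2\bar\Phi\bigl(t_x^{(i)} - \rho_0 s_x^{(j)}\bigr).
\end{equation*}
Integrating over $|t| \leqslant s_x^{(j)}$ contributes an extra $2s_x^{(j)} \phi(0)$ factor, and since $s_x^{(j)} \to 0$ while $\bar\Phi(t_x^{(i)} - \rho_0 s_x^{(j)}) \sim \bar\Phi(t_x^{(i)})$, the resulting bound is $o(\bar\Phi(t_x^{(i)})) = o(\bar F(x/\omega_i))$ uniformly in $\rho$. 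For the one-sided mixed case, $\sigma_x^{(j)} \leqslant -\tau_x^{(j)}$ reduces $\Pr(T_i > \tau_x^{(i)}, T_j < \sigma_x^{(j)})$ to a both-large probability on $(T_i, -T_j)$, whose correlation is again in $[-\rho_0, \rho_0]$, so the previous estimate applies. Combining the three pieces yields all three limits in \eqref{eq:uniform_asymp_indep} uniformly in $\rho_{ij}$.
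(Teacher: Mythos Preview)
Your ``both large'' estimate via the sum bound $\Pr(Z_1>a,\,Z_2>b)\le\Pr(Z_1+Z_2>a+b)$ and your one-sided mixed reduction via $\sigma_x^{(j)}\le-\tau_x^{(j)}$ are both valid; the paper handles the former by conditioning on one coordinate and showing $\sup_\rho\Pr(T_j>t\mid T_i>t)\to0$, but your route is an equally clean alternative (and the constant-mean shifts you suppress are harmless there since the ratio of thresholds exceeds $1$).

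The gap is in the two-sided mixed tail $\Pr(|T_i|>t_x^{(i)},\,|T_j|<s_x^{(j)})$. You condition on $T_j=t$ and bound
\[
\Pr\bigl(|T_i|>t_x^{(i)}\mid T_j=t\bigr)\;\le\;2\bar\Phi\bigl(t_x^{(i)}-\rho_0 s_x^{(j)}\bigr),
\]
by separately replacing $|\rho t|$ with $\rho_0 s_x^{(j)}$ and $\sqrt{1-\rho^2}$ with $1$. These two relaxations are worst-case at \emph{opposite} values of $\rho$ (large shift at $|\rho|=\rho_0$, large conditional variance at $\rho=0$), so combining them is too pessimistic. Your next step, $\bar\Phi(t_x^{(i)}-\rho_0 s_x^{(j)})\sim\bar\Phi(t_x^{(i)})$, requires $t_x^{(i)}\,s_x^{(j)}\to0$; since $t_x^{(i)}\sim\sqrt{2\gamma\log x}$ while $s_x^{(j)}\asymp F(-x/\omega_j)$, this fails whenever the left tail of $F$ decays more slowly than $(\log x)^{-1/2}$, which membership in $\mathscr R_{-\gamma}$ does not rule out. (You also write $T_i\mid T_j=t\sim N(\rho t,1-\rho^2)$, implicitly taking $\mu_i=\mu_j=0$, whereas the lemma allows arbitrary finite means; this compounds the issue because the conditional mean then shifts by a fixed amount $\approx\mu_i-\rho\mu_j$.)

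The paper sidesteps all of this by conditioning in the opposite direction. Writing $T_j=\mu_j+\rho(T_i-\mu_i)+\sqrt{1-\rho^2}\,Z$ with $Z\perp T_i$ and applying the mean value theorem to the difference of $\Phi$'s gives, uniformly in $\rho$, $T_i$, and the means,
\[
\Pr\bigl(|T_j|<s_x^{(j)}\mid T_i\bigr)\;\le\;\frac{2s_x^{(j)}}{\sqrt{1-\rho^2}}\max_u\phi(u)\;\le\;\sqrt{\tfrac{2}{\pi}}\,\frac{s_x^{(j)}}{\sqrt{1-\rho_0^2}}.
\]
Hence $\Pr(|T_i|>t_x^{(i)},\,|T_j|<s_x^{(j)})\le C\,s_x^{(j)}\Pr(|T_i|>t_x^{(i)})$ and the ratio is at most $C s_x^{(j)}\to0$, with no rate assumption on $F(-x)$. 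Your conditioning direction can in fact be rescued by keeping the joint $\rho$-dependence and optimizing $(a-|\rho|s)/\sqrt{1-\rho^2}$ over $|\rho|\le\rho_0$ (the minimum is $a\sqrt{1-s^2/a^2}\sim a$), but reversing the conditioning as the paper does is both shorter and immediately robust to nonzero means.
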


\begin{lemma}
\label{lemma:tail_prob_max_vs_sum_uniform_version}
With the same assumptions as \Cref{lemma:quasiindep}, the following holds:
\begin{equation*}
    \lim _{x \rightarrow +\infty} \sup_{\rho_{ij}\in[-\rho_0,\rho_0]}\frac{\Pr\left(S_{n, \vec{\omega}}>x\right)}{\sum_{i=1}^n\Pr(\omega_iX_i>x)}
    =\lim _{x \rightarrow +\infty} \sup_{\rho_{ij}\in[-\rho_0,\rho_0]}\frac{\Pr\left(\max_{i=1,\ldots,n}\omega_iX_i>x\right)}{\sum_{i=1}^n\Pr(\omega_iX_i>x)}=1,
\end{equation*}
where $S_{n, \vec{\omega}}=\sum_{i=1}^n\omega_iX_i$.
\end{lemma}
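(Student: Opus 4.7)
The plan is to lift \cite{chen2009sums}'s argument into a version uniform in $\rho$, using \Cref{lemma:quasiindep} as the source of uniformity and the regular variation of $\bar F$ (inherited via \Cref{lemma:normal_to_regularly_varying}) as the source of the correct asymptotic constant. The simplifying observation is that, under the global null, the marginal distribution of each $\omega_i X_i$ does not depend on $\rho$, so the denominator $G(x):=\sum_i \Pr(\omega_i X_i > x)$ is a deterministic function of $x$; all uniformity in $\rho$ is therefore a statement about the numerators. I would first handle the maximum: the union bound gives $\Pr(\max_i \omega_i X_i > x) \leq G(x)$ pointwise, and Bonferroni combined with the inclusion $\{\omega_i X_i > x, \omega_j X_j > x\} \subset \{\omega_i X_i^+ > x, \omega_j X_j^+ > x\}$ (valid for $x>0$) gives
\begin{equation*}
\Pr\left(\max_i \omega_i X_i > x\right) \;\geq\; G(x) - \sum_{i<j}\Pr\left(\omega_i X_i^+ > x,\; \omega_j X_j^+ > x\right).
\end{equation*}
The first display of \eqref{eq:uniform_asymp_indep} makes each summand on the right $o\bigl(\Pr(\omega_i X_i > x)+\Pr(\omega_j X_j > x)\bigr)$ uniformly in $\rho$, so the correction is $o(G(x))$ uniformly, delivering the max equality.

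Next I bridge max to sum. For the upper bound, fix $\epsilon>0$ and decompose $\{S_{n,\vec\omega}>x\} \subset \{\max_i \omega_i X_i > (1-\epsilon)x\} \cup E_\epsilon$, where $E_\epsilon = \{S_{n,\vec\omega}>x,\,\max_i \omega_i X_i \leq (1-\epsilon)x\}$. A pigeonhole argument on $E_\epsilon$ forces at least two summands to exceed $c\epsilon x$ for some constant $c=c(n)>0$, so $E_\epsilon \subset \bigcup_{i<j}\{\omega_i X_i^+ > c\epsilon x,\;\omega_j X_j^+ > c\epsilon x\}$; \eqref{eq:uniform_asymp_indep} at scale $c\epsilon x$ plus the regular variation of $\bar F$ make $\Pr(E_\epsilon) = o(G(x))$ uniformly in $\rho$. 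Combined with the max equality and $\bar F((1-\epsilon)x)/\bar F(x) \to (1-\epsilon)^{-\gamma}$, this gives $\limsup_x \sup_\rho \Pr(S_{n,\vec\omega}>x)/G(x) \leq (1-\epsilon)^{-\gamma}$, and $\epsilon\to 0$ closes the upper bound. For the matching lower bound, I would observe that on $\{\max > (1+\epsilon)x,\,S\leq x\}$ some other summand must be at most $-\epsilon x/(n-1)$, hence
\begin{equation*}
\Pr(S_{n,\vec\omega} > x) \;\geq\; \Pr\left(\max_i \omega_i X_i > (1+\epsilon) x\right) - \sum_{i\neq j}\Pr\left(\omega_i X_i^+ > (1+\epsilon)x,\; \omega_j X_j^- > \tfrac{\epsilon x}{n-1}\right).
\end{equation*}
The subtracted probabilities are $o(G(x))$ uniformly in $\rho$ by the second and third displays of \eqref{eq:uniform_asymp_indep} plus regular variation, and the leading term is $(1+\epsilon)^{-\gamma}\,G(x)(1+o(1))$; sending $\epsilon\to 0$ yields the matching $\liminf\geq 1$.

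The main obstacle is the $\epsilon$-bookkeeping needed to interchange $\sup_\rho$ with the limits in $x$, but this is essentially automatic here: every appearance of $\rho$ enters through one of the three ratios in \eqref{eq:uniform_asymp_indep}, each already controlled uniformly by \Cref{lemma:quasiindep}, while the regular-variation factors $(1\pm\epsilon)^{-\gamma}$ are $\rho$-free. The only technical care is to keep the pigeonhole constant $c(n)$ explicit so that the rescaling $\bar F(c\epsilon x)/\bar F(x)$ remains a bounded constant for each fixed $\epsilon$ before taking $\epsilon\to 0$, and to note that the same argument applies verbatim in the one-sided p-value case once the additional assumption $\bar F(x)\geq F(-x)$ in \Cref{lemma:quasiindep} is in force.
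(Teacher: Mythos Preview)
Your proposal is correct and follows essentially the same route as the paper: the max identity via union/Bonferroni bounds with the pairwise correction controlled by \eqref{eq:uniform_asymp_indep}, and the sum upper bound via the same $(1-\epsilon)$-decomposition and pigeonhole into pairs. The one substantive difference is the sum lower bound: the paper simply invokes \Cref{thm:sum_tail_prob_estimation} pointwise in $\rho$ and uses $\lim_{x}\sup_{\rho}\geq\sup_{\rho}\lim_{x}=1$, whereas your self-contained argument through the max at scale $(1+\epsilon)x$ and the $(X_i^+,X_j^-)$ displays of \Cref{lemma:quasiindep} also works and avoids that external citation.
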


Now we prove the theorem.
\begin{proof}[Proof of \Cref{thm:asymp_type_I_error_control}]
First, by \Cref{lemma:normal_to_regularly_varying}, under the global null, the cumulative distribution function of $\omega_iX_i$ is
\begin{equation*}
\Pr_0(\omega_iX_i\leqslant x)=\Pr_0(X_i\leqslant x/\omega_i)=F(x/\omega_i)   
\end{equation*}
Therefore, $\omega_iX_i$ belongs to $\mathscr{R}$. Denote $\gamma$ the tail index of $F$. Then, by \Cref{lemma:tail_prob_max_vs_sum_uniform_version}, the right tail probability of the distribution of $S_{n, \vec{\omega}}=\sum_{i=1}^n\omega_iX_i$ under the global null has the following property:
\begin{equation*}
    1=\lim _{x \rightarrow +\infty}\sup_{\rho_{ij}\in[-\rho_0,\rho_0]}\frac{\Pr_0\left(S_{n, \vec{\omega}}>x\right)}{\sum_{i=1}^n\bar{F}(x/\omega_i)}
    \overset{(\square)}{=}\lim _{x \rightarrow +\infty}\sup_{\rho_{ij}\in[-\rho_0,\rho_0]}\frac{\Pr_0\left(S_{n, \vec{\omega}}>x\right)}{\sum_{i=1}^n\omega_i^\gamma\bar{F}(x)},   
\end{equation*}
where $(\square)$ uses the fact that the tail index is $\gamma$. 

In the following, we will only prove the asymptotic validity of the
weighted version of the combination test, i.e., \Cref{def:weighted_combination_test}. Since the standard and average version of the combination test, \Cref{def:standard_combination_test,def:average_combination_test}, are the special cases of the weighted one.
\begin{align*}
&\lim_{\alpha\to0^+}\sup_{\rho_{ij}\in[-\rho_0,\rho_0]}\frac{\Pr_0\left(\phi_{\text{wgt.}}^{F,\omega}=1\right)}{\alpha} 
= \lim_{\alpha\to0^+}\sup_{\rho_{ij}\in[-\rho_0,\rho_0]}\frac{\Pr_0\left(\sum_{i=1}^n\omega_iX_i>Q_F\left(1-\alpha/\sum_{i=1}^n\omega_i^\gamma\right)\right)}{\alpha}\\
=&\lim_{\alpha\to0^+}\sup_{\rho_{ij}\in[-\rho_0,\rho_0]}\frac{\Pr_0\left(\sum_{i=1}^n\omega_iX_i>Q_F\left(1-\alpha/\sum_{i=1}^n\omega_i^\gamma\right)\right)}{\sum_{i=1}^n\omega_i^\gamma\bar{F}\left(Q_F\left(1-\alpha/\sum_{i=1}^n\omega_i^\gamma\right)\right)}
=1.
\end{align*}

Accordingly, we prove when $\alpha\to0$, the type-I error of all three versions of the combination test can be controlled at nominated level $\alpha$.
\end{proof}

\subsection{Proof of \Cref{cor:weak_validity}}
\label{sec:pf_validity_with_perfect_correlation}
\begin{proof}
We check the asymptotic validity for two-sided p-values and one-sided p-values separately.

\paragraph{Part I. Two-sided p-values.} For the two-sided p-values, both $\rho_{ij}=1$ and $-1$ will lead to equal p-values and hence equal transformed statistics $X_i$ and $X_j$. We might as well assume there are $n_0$ out of $n$ base test statistics $T_i$'s that are perfectly correlated and are $T_1,\dots, T_{n_0}$. Then, we have
\begin{equation*}
  \weightedsumnoation=\weightedsum\overset{P}{=}\sum_{i\leqslant n_0}\omega_iX_1+\sum_{i>n_0}\omega_iX_i,  
\end{equation*}
and the tail probability of $\weightedsumnoation$ should be estimated as 
\begin{equation*}
    \bar{F}(x/\sum_{i\leqslant n_0}\omega_i)+\sum_{i>n_0}\bar{F}(x/\omega_i).
\end{equation*}
This tail probability can be further estimated as 
\begin{equation*}
    \left\{\left(\sum_{i\leqslant n_0}\omega_i\right)^\gamma+\sum_{i>n_0}\omega_i^\gamma\right\}\bar{F}(x).
\end{equation*}
Hence, with \Cref{thm:asymp_type_I_error_control}, the actual rejection threshold of the combination test assuring the asymptotic validity should be $Q_F\left(1-\tfrac{\alpha}{\left(\sum_{i\leqslant n_0}\omega_i\right)^\gamma+\sum_{i>n_0}\omega_i^\gamma}\right)$. To ensure the asymptotic validity, this threshold must be smaller than $Q_F\left({1-\tfrac{\alpha}{\kappaweight}}\right)$, which is the actual threshold used in the test. In other words,
\begin{equation*}
    \left(\sum_{i\leqslant n_0}\omega_i\right)^\gamma+\sum_{i>n_0}\omega_i^\gamma\leqslant\kappaweight\Rightarrow \sum_{i\leqslant n_0}\omega_i\leqslant\left(\sum_{i\leqslant n_0}\omega_i^\gamma\right)^\frac{1}{\gamma}.
\end{equation*}
Solving the inequality, we get $\gamma\leqslant1$.

Therefore, for two-sided p-values, under the assumptions of \Cref{thm:asymp_type_I_error_control} while allowing perfect correlations, the asymptotic validity of the combination test defined as \Cref{def:weighted_combination_test} is assured when the tail index $\gamma\leqslant1$.

\paragraph{Part II. One-sided p-values.} Without loss of generality, we assume that there are $n_1$ out of $n$ base test statistics $T_1,\dots,T_{n_1}$ are correlated with $T_1$ with a correlation 1, and $n_2$ out of $n$ base test statistics $T_{n_1+1},\dots,T_{n_1+n_2}$ are correlated with $T_1$ with a correlation $-1$. 

We first check the relationships between p-values with the correlations $\rho=\pm1$ respectively:
\begin{enumerate}[label=(\roman*).]
    \item When $\rho=1$, $T_1\overset{P}{=}T_i$ and hence 
$$P_1=1-\Phi(T_1)\overset{P}{=}1-\Phi(T_i)=P_i.$$
    \item When $\rho=-1$, $T_1\overset{P}{=}-T_i$ and hence 
$$P_1=1-\Phi(T_1)\overset{P}{=}1-\Phi(-T_i)=\Phi(T_i)=1-P_i.$$
\end{enumerate}

Then, the summation $\weightedsumnoation$ can be rewritten as
$$\weightedsumnoation=\weightedsum\overset{P}{=}\sum_{i\leqslant n_1}\omega_iX_1+\sum_{n_1<i\leqslant n_1+n_2}\omega_iX_{n_1+1}+\sum_{i>n_1+n_2}\omega_iX_i.$$
We now prove the asymptotic validity of the test when either condition of $F$ in \Cref{cor:weak_validity} holds: 

(i). $F$ is bounded below.
Denote $\kappa_1=\sum_{i\leqslant n_1}\omega_i$ and $\kappa_2=\sum_{n_1<i\leqslant n_1+n_2}\omega_i$. Without loss of generality, we assume all weights $\omega_i>0$ and hence both $\kappa_1$ and $\kappa_2$ are positive.
We first check the definition of quasi-asymptotic independence, \Cref{def:quasi_indep}, between $\kappa_1X_1$ and $\kappa_2X_{n_1+1}$:
\begin{equation*}
\begin{split}
&\lim_{x\to+\infty} \frac{\Pr\left(\kappa_1X^+_1>x,\kappa_2X^+_{n_1+1}>x\right)}{\Pr(X_1>x/\kappa_1)+\Pr(X_{n_1+1}>x/\kappa_2)}
=\lim_{x\to+\infty} \frac{\Pr\left(X_1>x/\kappa_1,X_{n_1+1}>x/\kappa_2\right)}{({\kappa_1}^\gamma+{\kappa_2}^\gamma)\bar{F}(x)}\\
=&\lim_{x\to+\infty} \frac{\Pr\left(P_1>F(x/\kappa_1),1-P_1>F(x/\kappa_2)\right)}{({\kappa_1}^\gamma+{\kappa_2}^\gamma)\bar{F}(x)}
=\lim_{x\to+\infty} \frac{\Pr\left(P_1>F(x/\kappa_1),P_1<1-F(x/\kappa_2)\right)}{({\kappa_1}^\gamma+{\kappa_2}^\gamma)\bar{F}(x)}\\
=&\ 0
\end{split}
\end{equation*}
The last equation is because for sufficiently large $x$, $F(x/\kappa_1)>1-F(x/\kappa_2)$ and hence $P_1$ cannot be both larger than $F(x/\kappa_1)$ and smaller than $1-F(x/\kappa_2)$, which makes the probability $\Pr\left(P_1>F(x/\kappa_1),P_1<1-F(x/\kappa_2)\right)=0$.
To finish the proof of quasi-asymptotic independence, we also check the pair $(\kappa_1X_1^+,\kappa_2X_{n_1+1}^-)$:
\begin{equation*}
\label{eq:quasi_indep_perfect_neg_corr}
\begin{split}
&\lim_{x\to+\infty} \frac{\Pr\left(\kappa_1X^+_1>x,\kappa_2X^-_{n_1+1}>x\right)}{\Pr(X_1>x/\kappa_1)+\Pr(X_{n_1+1}>x/\kappa_2)}
=\lim_{x\to+\infty} \frac{\Pr\left(X_1>x/\kappa_1,X_{n_1+1}<-x/\kappa_2\right)}{({\kappa_1}^\gamma+{\kappa_2}^\gamma)\bar{F}(x)}=0,
\end{split}
\end{equation*}
where the last equation is because when the heavy-tailed distribution $F$ is bounded below, $\Pr(X_{n_1+1}<-x/\kappa_2)=0$ for sufficient large $x$. Accordingly, the quasi-asymptotic independence holds and the convergence is uniform for unknown nuisance parameters $\rho_{ij}$. In this case, the rejection threshold should be estimated as
$Q_F\left({1-\tfrac{\alpha}{{\kappa_1}^\gamma+{\kappa_2}^\gamma+\sum_{i>n_1+n_2}\omega_i^\gamma}}\right)$ and it needs to be smaller than $Q_F\left({1-\frac{\alpha}{\sum_{i=1}^n\omega_i^\gamma}}\right)$ to ensure the validity of the combination test. This condition is equivalent to ${\kappa_1}^\gamma+{\kappa_2}^\gamma+\sum_{i>n_1+n_2}\omega_i^\gamma\leqslant \sum_{i=1}^n\omega_i^\gamma\Rightarrow{\kappa_1}^\gamma+{\kappa_2}^\gamma\leqslant\sum_{i\leqslant n_1+n_2}\omega_i^\gamma$. This is guaranteed since $\gamma\leqslant1$. 

(ii). $\bar{F}(x)=F(-x)$ for all $x\in\mathbb R$.
Since $\bar{F}(x)=F(-x)$ for all $x\in\mathbb R$, $X_1$ and $X_{n_1+1}$ further have the relationship that
$$X_1=Q_F(1-P_1)\overset{P}{=}Q_F(P_{n_1+1})=-Q_F(1-P_{n_1+1})=-X_{n_1+1}.$$
So, $\weightedsumnoation$ can be further simplified as 
$$\weightedsumnoation\overset{P}{=}(\kappa_1-\kappa_2)X_1+\sum_{i>n_1+n_2}\omega_iX_i.$$
Following a similar analysis for two-sided p-values, we get the condition for the asymptotic validity of the combination test:
\begin{equation}
\label{eq:one_sided_symmetric_condition}
|\kappa_1-\kappa_2|^\gamma+\sum_{i>n_1+n_2}\omega_i^\gamma\leqslant \sum_{i=1}^{n_1+n_2}\omega_i^\gamma.
\end{equation}
Since $\gamma\leqslant1$, 
$$|\kappa_1-\kappa_2|^\gamma\leqslant\left(\sum_{i\leqslant n_1+n_2}{\omega_i}\right)^\gamma\leqslant \sum_{i\leqslant n_1+n_2}\omega_i^\gamma.$$
\Cref{eq:one_sided_symmetric_condition} holds and hence the asymptotic validity is established.

Combining 1 and 2, we finish the proof.
\end{proof}

\subsection{Proof of \Cref{cor:weak_validity2}}
\label{sec:pf_validity_with_perfect_correlation_2}
\begin{proof}
We check the asymptotic validity for two-sided p-values and one-sided p-values separately.

For both two-sided and one-sided p-values, $\rho_{ij}=1$ will lead to equal p-values and hence equal transformed statistics $X_i$ and $X_j$. Then, we have
$$\weightedsumnoation=\weightedsum\overset{P}{=}\left(\sum_{i=1}^n\omega_i\right)X_1.$$
The tail probability of $\weightedsumnoation$ should be estimated as $\bar{F}(x/\sum_{i}^n\omega_i)$ and can be further estimated as 
$$\left(\sum_{i=1}^n\omega_i\right)^\gamma\bar{F}(x).$$
Hence, 
$$\lim_{\alpha\to0^+}\frac{\Pr_0\left(\phi_{\text{comb}}^F=1\right)}{\alpha}=\left(\sum_{i=1}^n\omega_i\right)^\gamma\lim_{\alpha\to0^+}\frac{\bar{F}\left(Q_F\left(1-\alpha/\sum_{i=1}^n\omega_i^\gamma\right)\right)}{\alpha}=\frac{\left(\sum_{i=1}^n\omega_i\right)^\gamma}{\sum_{i=1}^n\omega_i^\gamma}.$$
\end{proof}

\subsection{Proof of \Cref{thm:same_as_bonferroni}}
\label{sec:pf_same_as_bon}

\begin{lemma}
\label{lemma:neg_cor_pair_prob}
Let $X$ and $Y$ be random variables that are jointly normally distributed with a positive correlation $\rho$ such that $\rho\leqslant\rho_0< 1$, and with marginal variances equal to 1. Let weights $\omega_X, \omega_Y > 0$. Define $c_0 = \tfrac{3}{2} - \tfrac{1}{1+\rho_0}$ and $\delta_\alpha = (n-1) \tfrac{Q_F\left(1-\alpha^{c_0}\right)}{Q_F\left(1-\alpha\right)}$. Then, as $\alpha \to 0$, the following properties hold for $\delta_\alpha$:
\begin{enumerate}[label=(\roman*)]
\item $\delta_\alpha\to0$,
\item $\delta_\alpha Q_F\left({1-{\alpha}}\right)\to\infty$,
\item $\bar{F}\left((1+\delta_\alpha)Q_F\left(1-\alpha\right)\right)/\bar{F}\left(Q_F\left({1-{\alpha}}\right)\right)\to1$,
\item $\sup_{\rho\in[-\rho_0,\rho_0]}\Pr\left(Y\geqslant h\left(\frac{1}{\omega_Y}\frac{\delta_\alpha}{n-1}Q_F\left({1-{\alpha}}\right)\right)\mid X \geqslant h\left(\frac{1}{\omega_X}(1+\delta_\alpha)Q_F\left({1-{\alpha}}\right)\right)\right)\to0$,
\end{enumerate}
where $h(\cdot)=\Phi^{-1}(F(\cdot))$.
\end{lemma}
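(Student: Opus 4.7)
The plan is to handle (i)--(iii) by standard regular-variation arguments and to reduce (iv) to a single Gaussian conditional-tail estimate via Slepian's inequality. For (i)--(iii), use that $F \in \mathscr R_{-\gamma}$, so $Q_F$ is regularly varying at $0^+$ with index $-1/\gamma$, giving
$$\frac{\delta_\alpha}{n-1} = \frac{Q_F(1-\alpha^{c_0})}{Q_F(1-\alpha)} = \alpha^{(1-c_0)/\gamma}\,\ell(\alpha)$$
with $\ell$ slowly varying. Since $\rho_0 \in (0,1)$ forces $c_0 = 3/2 - 1/(1+\rho_0) \in (1/2,1)$ and hence $(1-c_0)/\gamma > 0$, Potter's bound (which dominates $\ell(\alpha)$ by $\alpha^{-\epsilon}$ for any $\epsilon > 0$) yields (i). Part (ii) is immediate from $\delta_\alpha Q_F(1-\alpha) = (n-1)Q_F(1-\alpha^{c_0}) \to \infty$, and (iii) follows from the uniform convergence theorem for $\bar F \in \mathscr R_{-\gamma}$ applied at $y = 1+\delta_\alpha \to 1$ and $x = Q_F(1-\alpha) \to \infty$.

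Part (iv) rests on the algebraic fact that $c_0$ is calibrated to give $\sqrt{c_0} > \rho_0$: a direct computation shows $c_0 - \rho_0^2 = (1+3\rho_0)/(2(1+\rho_0)) - \rho_0^2 > 0$ for all $\rho_0 \in [0,1)$. Let $t_1, t_2$ denote the arguments of $X$ and $Y$ in the conditional probability. Since $\bar\Phi(h(x)) = \bar F(x)$, combining (i), (iii), and the scaling $\bar F(cx) \sim c^{-\gamma}\bar F(x)$ gives $\bar\Phi(t_1) \sim \omega_X^\gamma\alpha$ and $\bar\Phi(t_2) \sim \omega_Y^\gamma\alpha^{c_0}$; inverting via $\bar\Phi^{-1}(u) \sim \sqrt{2\log(1/u)}$ then yields $t_1 \sim \sqrt{2\log(1/\alpha)}$ and $t_2 \sim \sqrt{2c_0\log(1/\alpha)}$. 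The key consequence is that $t_2/t_1 \to \sqrt{c_0} > \rho_0$, so uniformly in $\rho \in [-\rho_0,\rho_0]$ the quantity $(t_2 - \rho t_1)/\sqrt{1-\rho^2}$ diverges at rate $\sqrt{\log(1/\alpha)}$. Slepian's inequality implies $\Pr(X \geq t_1, Y \geq t_2)$ is nondecreasing in $\rho$, so the supremum over $\rho \in [-\rho_0,\rho_0]$ of $\Pr(Y \geq t_2 \mid X \geq t_1)$ is attained at $\rho = \rho_0$, reducing (iv) to a single limit.

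At $\rho = \rho_0$, decompose $Y = \rho_0 X + \sqrt{1-\rho_0^2}\,Z$ with $Z \perp X$ and write
$$\Pr(Y \geq t_2 \mid X \geq t_1) = \mathbb{E}\!\left[\bar\Phi\!\left(\tfrac{t_2 - \rho_0 X}{\sqrt{1-\rho_0^2}}\right)\,\bigg|\,X \geq t_1\right].$$
I would split the conditional expectation on $\{X - t_1 \leq u^*\}$ versus its complement for a cutoff $u^* \to 0$ satisfying $\rho_0 u^* = o(t_2 - \rho_0 t_1)$ and $u^* t_1 \to \infty$; both are feasible because $t_2 - \rho_0 t_1 \to \infty$ at rate $\sqrt{\log(1/\alpha)}$. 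On the first set the integrand is bounded by $\bar\Phi((t_2 - \rho_0 t_1)(1+o(1))/\sqrt{1-\rho_0^2}) \to 0$, while on the second set the conditional probability is at most $\bar\Phi(t_1 + u^*)/\bar\Phi(t_1) \leq e^{-u^* t_1} \to 0$ by Mill's ratio. The main obstacle is this last step: the excess $X - t_1$ given $X \geq t_1$ is of order $1/t_1$, so $\rho_0(X - t_1)$ is small but nonzero, and one must verify that it does not erode the margin $\sqrt{c_0} - \rho_0$. This is exactly what the explicit calibration $c_0 = 3/2 - 1/(1+\rho_0)$ is designed to guarantee, and it is the reason one cannot shrink $c_0$ any further without losing the result.
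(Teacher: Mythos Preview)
Your argument for (i)--(iii) is correct and essentially the same as the paper's: the paper proves (i) by contradiction and (iii) by sandwiching with a fixed $\epsilon>\delta_\alpha$, but these are equivalent to your direct appeal to regular variation of $Q_F(1-\cdot)$ and the uniform convergence theorem.

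Your route for (iv) is correct but genuinely different. The paper never invokes Slepian; instead it splits on the value of $Y$, writing
\[
\Pr(Y\geqslant t_2\mid X\geqslant t_1)\leqslant \Pr(Y\geqslant t_1\mid X\geqslant t_1)+\Pr(t_2\leqslant Y\leqslant t_1\mid X\geqslant t_1),
\]
kills the first term by the already-established Gaussian asymptotic independence, and bounds the second by an explicit integral that reduces to showing $\tfrac{c_0}{2}-\tfrac{\rho_0}{1+\rho_0}>0$, i.e.\ $c_0>\tfrac{2\rho_0}{1+\rho_0}$. You instead reduce to $\rho=\rho_0$ via Slepian and split on the overshoot $X-t_1$; your governing inequality is the weaker $c_0>\rho_0^2$. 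Your approach is cleaner and uses less computation; the paper's is more self-contained (no external comparison lemma) and makes the uniformity in $\rho$ explicit throughout. Two minor points: you tacitly assume zero means when writing $Y=\rho_0 X+\sqrt{1-\rho_0^2}\,Z$, whereas the lemma (and its use in the main theorem) allows arbitrary finite means---this costs only notation. And your closing remark that ``one cannot shrink $c_0$ any further'' is not right: since your argument only needs $c_0>\rho_0^2$ and the paper's only needs $c_0>2\rho_0/(1+\rho_0)$, both hold with strict slack at the chosen $c_0=\tfrac32-\tfrac{1}{1+\rho_0}$, so the specific value is a convenient choice rather than a sharp threshold.
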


\begin{restatable}{lemma}{CompBonPosDepPVals}
\label{lemma:diff_combination_bon_prob}
Suppose test statistics $\{T_i\}_{i=1}^n$ satisfy that for any pair $1 \leqslant i < j \leqslant n$, $(T_i, T_j)$ follows a bivariate normal distribution with unit marginal variance and correlation $\rho_{ij} \in [-\rho_0, \rho_0]$.
Define the transformed test statistics $\omega_i X_i = \omega_i Q_F(1 - P_i)$ and the weighted sum $S_{n, \vec{\omega}} = \sum_{i=1}^n \omega_i X_i$ with $\omega_i > 0$. Let the p-values be two-sided, given by $P_i = 2(1 - \Phi(|T_i|))$. Then, for any $i=1,2,\cdots,n$, the following holds:
\begin{equation}
\label{eq:diff_combination_bon_prob}
\lim_{\alpha\to0^+}\sup_{\rho_{ij}\in[-\rho_0,\rho_0]}\frac{\Pr\left(\omega_i X_i > Q_F(1 - \alpha),~S_{n, \vec{\omega}} \leqslant Q_F(1 - \alpha)\right)}{\sum_{i=1}^n \Pr\left(\omega_i X_i > Q_F(1 - \alpha)\right)}=0~.
\end{equation}
Additionally, if the distribution $F$ satisfies $\bar{F}(x) \geqslant F(-x)$ for all $x \in \mathbb{R}$, then the same conclusion holds for one-sided p-values, $P_i = 1 - \Phi(T_i)$.
\end{restatable}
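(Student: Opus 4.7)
The plan is to prove $\Pr(E_i) = o(\Pr(\omega_i X_i > Q_F(1-\alpha)))$ uniformly over $\rho_{ij} \in [-\rho_0, \rho_0]$, where $E_i := \{\omega_i X_i > Q_F(1-\alpha),\; S_{n,\vec{\omega}} \leqslant Q_F(1-\alpha)\}$. Since the denominator in \eqref{eq:diff_combination_bon_prob} bounds each single term $\Pr(\omega_i X_i > Q_F(1-\alpha))$ from above in the sense $\sum_j \Pr(\omega_j X_j > Q_F(1-\alpha)) \geqslant \Pr(\omega_i X_i > Q_F(1-\alpha))$, this immediately yields the stated limit. Letting $\delta_\alpha$ be as in \Cref{lemma:neg_cor_pair_prob}, I decompose
\begin{equation*}
E_i = E_i^{(B)} \cup E_i^{(A)},\quad E_i^{(B)} := \{Q_F(1-\alpha) < \omega_i X_i \leqslant (1+\delta_\alpha) Q_F(1-\alpha)\},
\end{equation*}
and $E_i^{(A)} := \{\omega_i X_i > (1+\delta_\alpha) Q_F(1-\alpha)\} \cap \{S_{n,\vec{\omega}} \leqslant Q_F(1-\alpha)\}$, separating a near-threshold piece from a substantial-excess piece.

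The bound on $E_i^{(B)}$ depends only on the marginal of $\omega_i X_i$, which by \Cref{lemma:normal_to_regularly_varying} lies in $\mathscr{R}_{-\gamma}$. Combining $\delta_\alpha\to 0$ from \Cref{lemma:neg_cor_pair_prob}(i) with Karamata's uniform convergence for slowly varying functions shows that the ratio $\Pr(\omega_i X_i > (1+\delta_\alpha)Q_F(1-\alpha))/\Pr(\omega_i X_i > Q_F(1-\alpha))$ tends to $1$, so $\Pr(E_i^{(B)}) = o(\Pr(\omega_i X_i > Q_F(1-\alpha)))$ with no dependence on $\rho_{ij}$. For $E_i^{(A)}$, the two defining constraints together force $\sum_{j\neq i}\omega_j X_j < -\delta_\alpha Q_F(1-\alpha)$, so by pigeonhole some $j\neq i$ satisfies $\omega_j X_j \leqslant -\delta_\alpha Q_F(1-\alpha)/(n-1)$. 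Union-bounding over $j$, it suffices to show that for each $j\neq i$,
\begin{equation*}
\Pr\!\left(\omega_i X_i > (1+\delta_\alpha) Q_F(1-\alpha),\; \omega_j X_j \leqslant -\tfrac{\delta_\alpha}{n-1} Q_F(1-\alpha)\right) = o\!\left(\Pr(\omega_i X_i > Q_F(1-\alpha))\right)
\end{equation*}
uniformly in $\rho_{ij}$.

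For one-sided p-values the transformation $X_k = Q_F(\Phi(T_k))$ rewrites the two events as $\{T_i > h((1+\delta_\alpha)Q_F(1-\alpha)/\omega_i)\}$ and $\{T_j \leqslant h(-\delta_\alpha Q_F(1-\alpha)/((n-1)\omega_j))\}$, where $h=\Phi^{-1}\circ F$. The hypothesis $\bar F(x)\geqslant F(-x)$ gives $h(-x)\leqslant -h(x)$, so the second event is contained in $\{-T_j \geqslant h(\delta_\alpha Q_F(1-\alpha)/((n-1)\omega_j))\}$. Since $(T_i,-T_j)$ is bivariate normal with correlation $-\rho_{ij}\in[-\rho_0,\rho_0]$ and unit variances, applying \Cref{lemma:neg_cor_pair_prob}(iv) with $(X,Y)=(T_i,-T_j)$, $\omega_X=\omega_i$, $\omega_Y=\omega_j$ makes the associated conditional probability vanish uniformly, while \Cref{lemma:neg_cor_pair_prob}(iii) allows the normalizing probability $\Pr(T_i > h((1+\delta_\alpha)Q_F(1-\alpha)/\omega_i))$ to be replaced by $\Pr(\omega_i X_i > Q_F(1-\alpha))$ at no asymptotic cost.

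The two-sided case is the main obstacle, because the event $\omega_j X_j \leqslant -\delta_\alpha Q_F(1-\alpha)/(n-1)$ now translates via $X_k > y \Leftrightarrow |T_k| > g(y)$, with $g(y):=\Phi^{-1}((F(y)+1)/2)$, into a \emph{central} event $|T_j|\leqslant c_\alpha$ where $c_\alpha:=g(-\delta_\alpha Q_F(1-\alpha)/((n-1)\omega_j))$, not a tail event, so \Cref{lemma:neg_cor_pair_prob}(iv) does not directly apply. If $F$ has support bounded below then $c_\alpha$ eventually lies below the essential infimum of $X_j$ and $E_i^{(A)}$ is empty for small $\alpha$; otherwise $g(y)\to 0$ as $y\to-\infty$ and \Cref{lemma:neg_cor_pair_prob}(ii) forces $-\delta_\alpha Q_F(1-\alpha)/(n-1)\to-\infty$, so $c_\alpha\to 0$. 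Decomposing $T_j = \rho_{ij} T_i + \sqrt{1-\rho_{ij}^2}\,Z$ with $Z\sim N(0,1)$ independent of $T_i$, the Gaussian density bound yields
\begin{equation*}
\Pr(|T_j|\leqslant c_\alpha \mid T_i) \;\leqslant\; \frac{2c_\alpha}{\sqrt{2\pi(1-\rho_{ij}^2)}} \;\leqslant\; \frac{2c_\alpha}{\sqrt{2\pi(1-\rho_0^2)}},
\end{equation*}
so the joint probability is at most $O(c_\alpha)\cdot\Pr(|T_i|>g((1+\delta_\alpha)Q_F(1-\alpha)/\omega_i)) = o(\Pr(\omega_i X_i > Q_F(1-\alpha)))$ uniformly in $\rho_{ij}$, since $c_\alpha\to 0$ and $1-\rho_{ij}^2$ is bounded away from $0$. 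Combining the bounds on $E_i^{(B)}$ and $E_i^{(A)}$ completes the proof.
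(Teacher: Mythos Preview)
Your proposal is correct and follows essentially the same route as the paper: the same $\delta_\alpha$-based split into a near-threshold piece (handled by regular variation of the marginal, via \Cref{lemma:neg_cor_pair_prob}(i)--(iii)) and an overshoot piece (handled by pigeonhole plus a pairwise bound), with the identical Gaussian conditional-density bound for the two-sided case. The only minor difference is in the one-sided case: the paper splits according to the sign of $\rho_{ij}$, using an elementary independence argument for $\rho_{ij}>0$ and invoking \Cref{lemma:neg_cor_pair_prob}(iv) only when $\rho_{ij}<0$ (so that $(T_i,-T_j)$ has positive correlation), whereas you invoke (iv) uniformly for $(T_i,-T_j)$ across all $\rho_{ij}\in[-\rho_0,\rho_0]$; this streamlining is consistent with the supremum in the stated conclusion of (iv), though the paper's version is slightly more self-contained given the positivity hypothesis in that lemma's preamble.
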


\begin{proof}[Proof of \Cref{thm:same_as_bonferroni}]
Without loss of generality, we assume $\sum_{i=1}^n\omega_i^\gamma=1$. Denote $S_{n, \vec{\omega}}=\sum_{i=1}^n\omega_iX_i$.
With respect to the definition of quantile function, $Q_F(t)>x\Leftrightarrow t>F(x)$ for all $x\in\mathbb R$. Then
\begin{equation*}
    \begin{aligned}
        \max_{i=1,\ldots,n}\omega_iX_i> Q_F\left({1-{\alpha}}\right) &\Leftrightarrow \bigcup_{i=1}^n \left\{w_iX_i>Q_F\left(1-\alpha\right)\right\}\\ &\Leftrightarrow \bigcup_{i=1}^n \left\{P_i<\bar F \left(\frac{1}{\omega_i}Q_F\left(1-\alpha\right)\right)\right\}\\
        &\Leftrightarrow \bigcup_{i=1}^n\left\{\frac{P_i}{\omega_i^\gamma}<\alpha\right\}\Leftrightarrow \min_{i=1,\ldots,n}\frac{P_i}{\omega_i^\gamma}<{\alpha}
    \end{aligned}
\end{equation*}

where the right-hand-side is exactly the weighted Bonferroni test $\phi_{\text{bon.}}^{{\tilde\omega}}$. 

We can rewrite the ratio of the probability of tests' difference and the probability of the tests as
\begin{align*}
&\lim_{\alpha\to0^+}\sup_{\rho\in[-\rho_0,\rho_0]}\frac{\Pr\left(\phi_{\text{wgt.}}^{F,\omega}\neq\phi_{\text{bon.}}^{{\tilde\omega}}\right)}{\min\left\{\Pr\left(\phi_{\text{wgt.}}^{F,\omega}=1\right),\Pr\left(\phi_{\text{bon.}}^{{\tilde\omega}}=1\right)\right\}}\\
=~&\lim_{\alpha\to0^+}\sup_{\rho\in[-\rho_0,\rho_0]}\left[\frac{\Pr\left(S_{n, \vec{\omega}}> Q_F\left({1-{\alpha}}\right),\max_{i=1,\ldots,n}\omega_iX_i\leqslant Q_F\left({1-{\alpha}}\right)\right)}{\min\left\{\Pr\left(S_{n, \vec{\omega}}> Q_F\left({1-{\alpha}}\right)\right),\Pr\left(\max_{i=1,\ldots,n}\omega_iX_i>Q_F\left({1-{\alpha}}\right)\right)\right\}}\right.\\
& \left.+\frac{\Pr\left(S_{n, \vec{\omega}}\leqslant Q_F\left(1-\alpha\right),\max_{i=1,\ldots,n}\omega_iX_i> Q_F\left(1-\alpha\right)\right)}{\min\left\{\Pr\left(S_{n, \vec{\omega}}> Q_F\left({1-{\alpha}}\right)\right),\Pr\left(\max_{i=1,\ldots,n}\omega_iX_i>Q_F\left({1-{\alpha}}\right)\right)\right\}}\right]\\ 
\leqslant~& \lim_{\alpha\to0^+}\sup_{\rho\in[-\rho_0,\rho_0]}\frac{\Pr\left(S_{n, \vec{\omega}}> Q_F\left(1-\alpha\right)\right)+\Pr\left(\max_{i=1,\ldots,n}\omega_iX_i> Q_F\left(1-\alpha\right)\right)}{\sum_{i=1}^n\Pr\left(\omega_iX_i> Q_F\left({1-{\alpha}}\right)\right)}\\
&-2\lim_{\alpha\to0^+}\inf_{\rho\in[-\rho_0,\rho_0]}\frac{\Pr\left(S_{n, \vec{\omega}}> Q_F\left(1-\alpha\right),\max_{i=1,\ldots,n}\omega_iX_i> Q_F\left(1-\alpha\right)\right)}{\sum_{i=1}^n\Pr\left(\omega_iX_i> Q_F\left({1-{\alpha}}\right)\right)}\\
 =~& 2-2\lim_{\alpha\to0^+}\inf_{\rho\in[-\rho_0,\rho_0]}\frac{\Pr\left(S_{n, \vec{\omega}}> Q_F\left(1-\alpha\right),\max_{i=1,\ldots,n}\omega_iX_i> Q_F\left(1-\alpha\right)\right)}{\sum_{i=1}^n\Pr\left(\omega_iX_i> Q_F\left({1-{\alpha}}\right)\right)} \ ,
\end{align*}
where the last two equations are based on \Cref{lemma:tail_prob_max_vs_sum_uniform_version}. Then, to prove asymptotic equivalence of two tests, it suffices to confirm
\begin{equation}
\label{eq:same_as_bon_equal_condition}
\lim_{\alpha\to0^+}\inf_{\rho\in[-\rho_0,\rho_0]}\frac{\Pr\left(S_{n, \vec{\omega}}> Q_F\left(1-\alpha\right),\max_{i=1,\ldots,n}\omega_iX_i> Q_F\left(1-\alpha\right)\right)}{\sum_{i=1}^n\Pr\left(\omega_iX_i> Q_F\left({1-{\alpha}}\right)\right)}=1 \ .
\end{equation}

Let $A_{i, \alpha}=\left\{\omega_iX_i > Q_F\left(1-\alpha\right), \sum_{k=1}^n \omega_kX_k > Q_F\left(1-\alpha\right)\right\}$ and $ A_\alpha=\bigcup_{i=1}^n A_{i, \alpha}$. 
Then, the probability in the numerator of \Cref{eq:same_as_bon_equal_condition} is just $\Pr(A_\alpha)$. By the Boole's and Bonferroni's inequalities, we have
\begin{equation*}
\sum_{i=1}^n \Pr\left(A_{i, \alpha}\right)-\sum_{1 \leqslant i<j \leqslant n} \Pr\left(A_{i, \alpha} \cap A_{j, \alpha}\right) \leqslant \Pr\left(A_\alpha\right) \leqslant \sum_{i=1}^n \Pr\left(A_{i, \alpha}\right)   
\end{equation*}
Since the bivariate normality condition guarantees the quasi-asymptotic independence between $\omega_iX_i$ and $\omega_jX_j$ based on \Cref{lemma:quasiindep}, for all $1\leqslant i<j\leqslant n$, we have
\begin{equation*}
\begin{split}
&\lim_{\alpha\to0^+}\sup_{\rho\in[-\rho_0,\rho_0]}\frac{\Pr\left(A_{i, \alpha} \cap A_{j, \alpha}\right)}{\sum_{i=1}^n\Pr\left(\omega_iX_i > Q_F\left(1-\alpha\right)\right)}\\
\leqslant~&\lim_{\alpha\to0^+}\sup_{\rho\in[-\rho_0,\rho_0]}\frac{\Pr\left(\omega_iX_i > Q_F\left(1-\alpha\right), \omega_jX_j > Q_F\left(1-\alpha\right)\right)}{\sum_{i=1}^n\Pr\left(\omega_iX_i > Q_F\left(1-\alpha\right)\right)}\\
\leqslant~&\lim_{\alpha\to0^+}\sup_{\rho\in[-\rho_0,\rho_0]}\frac{\Pr\left(\omega_iX_i > Q_F\left(1-\alpha\right), \omega_jX_j > Q_F\left(1-\alpha\right)\right)}{\Pr\left(\omega_iX_i > Q_F\left(1-\alpha\right)+\Pr\left(\omega_jX_j > Q_F\left(1-\alpha\right)\right)\right)}=0~.
\end{split}
\end{equation*}
Therefore,
\begin{equation*}
\lim_{\alpha\to0^+}\sup_{\rho\in[-\rho_0,\rho_0]}\frac{\sum_{1 \leqslant i<j\leqslant n} \Pr\left(A_{i, \alpha} \cap A_{j, \alpha}\right)}{\sum_{i=1}^n\Pr\left(\omega_iX_i > Q_F\left(1-\alpha\right)\right)}=0~.
\end{equation*}
Then, by the Squeeze Theorem, we know that
\begin{equation*}
\lim_{\alpha\to0^+}\inf_{\rho\in[-\rho_0,\rho_0]}\frac{\Pr\left(A_\alpha\right)}{\sum_{i=1}^n\Pr\left(\omega_iX_i > Q_F\left(1-\alpha\right)\right)}
=\lim_{\alpha\to0^+}\inf_{\rho\in[-\rho_0,\rho_0]}\frac{\sum_{i=1}^n\Pr\left(A_{i, \alpha}\right)}{\sum_{i=1}^n\Pr\left(\omega_iX_i > Q_F\left(1-\alpha\right)\right)}.
\end{equation*}

Since by \Cref{lemma:diff_combination_bon_prob} we have 
\begin{equation*}
\begin{split}
\lim_{\alpha\to0^+}\sup_{\rho\in[-\rho_0,\rho_0]}\frac{\Pr\left(\omega_iX_i > Q_F\left(1-\alpha\right), \sum_{k=1}^n \omega_kX_k \leqslant Q_F\left(1-\alpha\right)\right)}{\sum_{i=1}^n\Pr\left(\omega_iX_i> Q_F\left({1-{\alpha}}\right)\right)}=0~.
\end{split}
\end{equation*}
Plugging in
\begin{equation*}
\begin{split}
\Pr\left(A_{i,\alpha}\right)
=\Pr\left(\omega_iX_i > Q_F\left(1-\alpha\right)\right)
-\Pr\left(\omega_iX_i > Q_F\left(1-\alpha\right), \sum_{k=1}^n \omega_kX_k \leqslant Q_F\left(1-\alpha\right)\right)~,
\end{split}
\end{equation*}
we have
{\small
\begin{equation*}
\begin{split}
1\ge~&\lim_{\alpha\to0^+}\inf_{\rho\in[-\rho_0,\rho_0]}\frac{\Pr\left(S_{n, \vec{\omega}}> Q_F\left(1-\alpha\right),\max_{i=1,\ldots,n}\omega_iX_i> Q_F\left(1-\alpha\right)\right)}{\sum_{i=1}^n\Pr\left(\omega_iX_i > Q_F\left(1-\alpha\right)\right)}\\
=~&\lim_{\alpha\to0^+}\inf_{\rho\in[-\rho_0,\rho_0]}\frac{\sum_{i=1}^n\Pr\left(A_{i, \alpha}\right)}{\sum_{i=1}^n\Pr\left(\omega_iX_i > Q_F\left(1-\alpha\right)\right)}\\
=&\lim_{\alpha\to0^+}\inf_{\rho\in[-\rho_0,\rho_0]}\frac{\sum_{i=1}^n\Pr\left(\omega_iX_i > Q_F\left(1-\alpha\right)\right)-\sum_{i=1}^n\Pr\left(\omega_iX_i > Q_F\left(1-\alpha\right), \sum_{k=1}^n \omega_kX_k \leqslant Q_F\left(1-\alpha\right)\right)}{\sum_{i=1}^n\Pr\left(\omega_iX_i > Q_F\left(1-\alpha\right)\right)}\\
\geqslant&1-\lim_{\alpha\to0^+}\sup_{\rho\in[-\rho_0,\rho_0]}\frac{\sum_{i=1}^n\Pr\left(\omega_iX_i > Q_F\left(1-\alpha\right), \sum_{k=1}^n \omega_kX_k \leqslant Q_F\left(1-\alpha\right)\right)}{\sum_{i=1}^n\Pr\left(\omega_iX_i > Q_F\left(1-\alpha\right)\right)}
=1~,
\end{split}
\end{equation*}
}
and hence
\begin{equation*}
\lim_{\alpha\to0^+}\inf_{\rho\in[-\rho_0,\rho_0]}\frac{\Pr\left(S_{n, \vec{\omega}}> Q_F\left(1-\alpha\right),\max_{i=1,\ldots,n}\omega_iX_i> Q_F\left(1-\alpha\right)\right)}{\sum_{i=1}^n\Pr\left(\omega_iX_i > Q_F\left(1-\alpha\right)\right)}=1~.
\end{equation*}
This guarantees the asymptotic equivalence
\begin{equation*}
    \lim_{\alpha\to0^+}\sup_{\rho\in[-\rho_0,\rho_0]}\frac{\Pr\left(\phi_{\text{wgt.}}^{F,\omega}\neq\phi_{\text{bon.}}^{{\tilde\omega}}\right)}{\min\left\{\Pr\left(\phi_{\text{wgt.}}^{F,\omega}=1\right),\Pr\left(\phi_{\text{bon.}}^{{\tilde\omega}}=1\right)\right\}}=0~.
\end{equation*}
\end{proof}

\subsection{Proof of Lemmas}
\label{sec:proof_of_lemmas}
\begin{proof}[Proof of \Cref{lemma:normal_to_regularly_varying}]
We prove the results for $X_1=Q_F\left(\Phi(Z)\right)$ and $X_2=Q_F\left(2\Phi(|Z|)-1\right)$ separately in part I and II.

\paragraph{Part I.} The tail probability of $X_1$ is
\begin{equation*}
\Pr\left(X_1>x\right)=\Pr\left(\Phi(Z)>F(x)\right)=\Pr\left(Z>\Phi^{-1}(F(x))\right)=1-\Phi\left(\Phi^{-1}\left(F(x)\right)-\mu\right).
\end{equation*}
Then check the definition of the regularly varying tailed distribution:
\begin{equation*}
\begin{split}
&\lim_{x\to+\infty}\frac{\Pr\left(X_1>xy\right)}{\Pr\left(X_1>x\right)}
=\lim_{x\to+\infty}\frac{1-\Phi\left(\Phi^{-1}\left(F(xy)\right)-\mu\right)}{1-\Phi\left(\Phi^{-1}\left(F(x)\right)-\mu\right)}\\
=&\lim_{x\to+\infty}\frac{\Phi^{-1}\left(F(x)\right)-\mu}{\Phi^{-1}\left(F(xy)\right)-\mu}\times\frac{\phi\left(\Phi^{-1}\left(F(xy)\right)-\mu\right)}{\phi\left(\Phi^{-1}\left(F(x)\right)-\mu\right)}\\
=&\lim_{x\to+\infty}\frac{\Phi^{-1}\left(F(x)\right)}{\Phi^{-1}\left(F(xy)\right)}\times\frac{\phi\left(\Phi^{-1}(F(xy))\right)}{\phi\left(\Phi^{-1}(F(x))\right)}\times\exp\left[\mu\left\{\Phi^{-1}\left(F(xy)\right)-\Phi^{-1}\left(F(x)\right)\right\}\right]\\
=&\lim_{x\to+\infty}\frac{1-F(xy)}{1-F(x)}\times\lim_{x\to+\infty}\exp\left[\mu\left\{\Phi^{-1}\left(F(xy)\right)-\Phi^{-1}\left(F(x)\right)\right\}\right]\\
=&y^{-\gamma}\times\lim_{x\to+\infty}\frac{\exp\left[\mu\left\{-2\log\left(\bar{F}(xy)\right)-\log\log\left(1/\bar{F}(xy)\right)-\log(4\pi)+o(1)\right\}^{\frac{1}{2}}\right]}{\exp\left[\mu\left\{-2\log\left(\bar{F}(x)\right)-\log\log\left(1/\bar{F}(x)\right)-\log(4\pi)+o(1)\right\}^{\frac{1}{2}}\right]}=y^{-\gamma}\ ,
\end{split}
\end{equation*}
where the second and fourth equation are due to $\lim_{x\to+\infty}\frac{1-\Phi(x)}{\phi(x)/x}=1$, and the second to last equation is because $\lim_{x\to0}\frac{\Phi^{-1}(1-x)}{\sqrt{-2\log x-\log\log(1/x)-\log(4\pi)+o(1)}}=1$.
Hence, the distribution of $X_1$ is still in class $\mathscr{R}$.

In particular, when $\mu=0$,
\begin{equation*}
\Pr\left(X_1\leqslant x\right)=\Phi\left(\Phi^{-1}\left(F(x)\right)\right)=F(x).
\end{equation*}
That is, $X_1\sim F$.

\paragraph{Part II.} The tail probability of $X_2$ is
\begin{equation}
\label{eq:tail_two_sided}
\begin{split}
&\Pr\left(X_2>x\right)=\Pr\left(2\Phi(|Z|)-1>F(x)\right)=\Pr\left(|Z|>\Phi^{-1}\left(\frac{F(x)+1}{2}\right)\right)\\
=&1-\Phi\left(\Phi^{-1}\left(\frac{F(x)+1}{2}\right)-\mu\right)+\Phi\left(\Phi^{-1}\left(-\frac{F(x)+1}{2}\right)-\mu\right)\\
=&1-\Phi\left(\Phi^{-1}\left(\frac{F(x)+1}{2}\right)-\mu\right)+1-\Phi\left(\Phi^{-1}\left(\frac{F(x)+1}{2}\right)+\mu\right).
\end{split}
\end{equation}
For any $\mu$,
\begin{equation*}
\begin{split}
&\lim_{x\to+\infty}\frac{1-\Phi\left(\Phi^{-1}\left(\frac{F(xy)+1}{2}\right)-\mu\right)}{1-\Phi\left(\Phi^{-1}\left(\frac{F(x)+1}{2}\right)-\mu\right)}
=\lim_{x\to+\infty}\frac{\Phi^{-1}\left(\frac{F(x)+1}{2}\right)-\mu}{\Phi^{-1}\left(\frac{F(xy)+1}{2}\right)-\mu}\times\frac{\phi\left(\Phi^{-1}\left(\frac{F(xy)+1}{2}\right)-\mu\right)}{\phi\left(\Phi^{-1}\left(\frac{F(x)+1}{2}\right)-\mu\right)}\\
=&\lim_{x\to+\infty}\frac{\Phi^{-1}\left(\frac{F(x)+1}{2}\right)}{\Phi^{-1}\left(\frac{F(xy)+1}{2}\right)}\times\frac{\phi\left(\Phi^{-1}\left(\frac{F(xy)+1}{2}\right)\right)}{\phi\left(\Phi^{-1}\left(\frac{F(x)+1}{2}\right)\right)}\times\exp\left\{\mu\left(\Phi^{-1}\left(\frac{F(xy)+1}{2}\right)-\Phi^{-1}\left(\frac{F(x)+1}{2}\right)\right)\right\}\\
=&\lim_{x\to+\infty}\frac{1-F(xy)}{1-F(x)}\times\lim_{x\to+\infty}\exp\left\{\mu\left(\Phi^{-1}\left(\frac{F(xy)+1}{2}\right)-\Phi^{-1}\left(\frac{F(x)+1}{2}\right)\right)\right\}\\
=&y^{-\gamma}\times\lim_{x\to+\infty}\frac{\exp\left\{\mu\sqrt{-2\log\left(\bar{F}(xy)/2\right)-\log\log\left(2/\bar{F}(xy)\right)-\log(4\pi)+o(1)}\right\}}{\exp\left\{\mu\sqrt{-2\log\left(\bar{F}(x)/2\right)-\log\log\left(2/\bar{F}(x)\right)-\log(4\pi)+o(1)}\right\}}=y^{-\gamma}\ ,
\end{split}
\end{equation*} 
where the second and fourth equation are due to $\lim_{x\to+\infty}\frac{1-\Phi(x)}{\phi(x)/x}=1$, and the second to last equation is because $\Phi^{-1}(1-x)=\sqrt{-2\log x-\log\log(1/x)-\log(4\pi)+o(1)}$.

Without loss of generality, assume $\mu>0$, then
\begin{equation*}
\lim_{y\to\infty}\frac{1-\Phi^{-1}(y+\mu)}{1-\Phi^{-1}(y-\mu)}=\lim_{y\to\infty}\frac{y-\mu}{y+\mu}\frac{\phi(y+\mu)}{\phi(y-\mu)}=\lim_{y\to\infty}\exp\left(-2\mu y\right)=0\ .
\end{equation*}
Consequently, plug in \eqref{eq:tail_two_sided},
\begin{equation*}
\begin{split}
&\lim_{x\to+\infty}\frac{\Pr\left(X_2>xy\right)}{\Pr\left(X_2>x\right)}\\
=&\lim_{x\to+\infty}\frac{1-\Phi\left(\Phi^{-1}\left(\frac{F(xy)+1}{2}\right)-\mu\right)+1-\Phi\left(\Phi^{-1}\left(\frac{F(xy)+1}{2}\right)+\mu\right)}{1-\Phi\left(\Phi^{-1}\left(\frac{F(x)+1}{2}\right)-\mu\right)+1-\Phi\left(\Phi^{-1}\left(\frac{F(x)+1}{2}\right)+\mu\right)}\\
=&\lim_{x\to+\infty}\frac{1-\Phi\left(\Phi^{-1}\left(\frac{F(xy)+1}{2}\right)-\mu\right)+1-\Phi\left(\Phi^{-1}\left(\frac{F(xy)+1}{2}\right)+\mu\right)}{1-\Phi\left(\Phi^{-1}\left(\frac{F(x)+1}{2}\right)-\mu\right)}\\
=&\lim_{x\to+\infty}\frac{1-\Phi\left(\Phi^{-1}\left(\frac{F(xy)+1}{2}\right)-\mu\right)}{1-\Phi\left(\Phi^{-1}\left(\frac{F(x)+1}{2}\right)-\mu\right)}\\
&\quad+\lim_{x\to+\infty}\frac{1-\Phi\left(\Phi^{-1}\left(\frac{F(x)+1}{2}\right)+\mu\right)}{1-\Phi\left(\Phi^{-1}\left(\frac{F(x)+1}{2}\right)-\mu\right)}\times\lim_{x\to+\infty}\frac{1-\Phi\left(\Phi^{-1}\left(\frac{F(xy)+1}{2}\right)+\mu\right)}{1-\Phi\left(\Phi^{-1}\left(\frac{F(x)+1}{2}\right)+\mu\right)}\\
=&y^{-\gamma}+0\times y^{-\gamma}=y^{-\gamma}\ ,
\end{split}
\end{equation*}

Again, when $\mu=0$,
\begin{equation*}
\begin{split}
&\Pr\left(X_2\leqslant x\right)=\Pr\left(2\Phi(|Z|)-1\leqslant F(x)\right)=\Pr\left(|Z|\leqslant\Phi^{-1}\left(\frac{F(x)+1}{2}\right)\right)\\
=&\Phi\left(\Phi^{-1}\left(\frac{F(x)+1}{2}\right)\right)-\Phi\left(\Phi^{-1}\left(-\frac{F(x)+1}{2}\right)\right)=F(x).
\end{split}
\end{equation*}

Summarize two parts. The lemma follows.
\end{proof}

\begin{proof}[Proof of \Cref{lemma:quasiindep}]
We first prove that for any bivariate normal random vector $(X,Y)$ with an unknown correlation $\rho$ and a common marginal variance 1, it holds that
\begin{equation}
\label{eq:normal_asym_indep}
\lim _{t\rightarrow +\infty}\sup_{\rho\in[-\rho_0,\rho_0]}\Pr\left(Y>t\mid X>t\right)=0,~
\lim _{t\rightarrow +\infty}\sup_{\rho\in[-\rho_0,\rho_0]}\Pr\left(X>t\mid Y>t\right)=0~. 
\end{equation}
Without loss of generality, we only need to prove the first equation. Suppose the mean vector is $(\mu_1,\mu_2)$ where $\max_i|\mu_i|<\infty$, and the correction is $\rho$ where $|\rho|\leqslant\rho_0<1$. Denote $\phi_Y(\cdot)$ and $\phi_{XY}(\cdot,\cdot)$ as the densities of $Y$ and $(X, Y)$. \Cref{eq:normal_asym_indep} can be rewritten as
\begin{equation*}
\begin{split}
&\lim _{t\rightarrow +\infty}\sup_{\rho\in[-\rho_0,\rho_0]}\frac{\Pr\left(X>t,Y>t\right)}{\Pr\left(X>t\right)}
=\lim _{t\rightarrow +\infty}\sup_{\rho\in[-\rho_0,\rho_0]}\frac{\Pr\left(X>t,\tfrac{Y-\mu_2-\rho(X-\mu_1)}{\sqrt{1-\rho^2}}>\tfrac{t-\mu_2-\rho(X-\mu_1)}{\sqrt{1-\rho^2}}\right)}{\Pr\left(X>t\right)}\\
=&\lim _{t\rightarrow +\infty}\sup_{\rho\in[-\rho_0,\rho_0]} \frac{E\left[1_{\{X>t\}}\bar{\Phi}\left(\tfrac{t-\mu_2-\rho(X-\mu_1)}{\sqrt{1-\rho^2}}\right)\right]}{\Pr\left(X>t\right)}
=\lim _{t\rightarrow +\infty}\sup_{\rho\in[-\rho_0,\rho_0]} \frac{E\left[1_{\{X>t\}}\Phi\left(\tfrac{\rho(X-\mu_1)-(t-\mu_2)}{\sqrt{1-\rho^2}}\right)\right]}{\Pr\left(X>t\right)}\\
\leqslant&\lim _{t\rightarrow +\infty}\frac{E\left[1_{\{X>t\}}\Phi\left(\tfrac{\rho_0(X-\mu_1)-(t-\mu_2)}{\sqrt{1-\rho_0^2}}\right)\right]}{\Pr\left(X>t\right)}
\le\lim _{t\rightarrow +\infty}\Phi\left(\frac{\rho_0(t-\mu_1)-(t-\mu_2)}{\sqrt{1-\rho_0^2}}\right)=0.
\end{split}
\end{equation*}
The first and second equation is based on the property of a bivariate normal distribution. That is, there exists a standard normal random variable $Z$, independent of $X$, such that $Y-\mu_2=\rho(X-\mu_1)+\sqrt{1-\rho^2}Z$. The third equation utilizes the fact $1-\Phi(x)=\Phi(-x)$ for all $x\in\mathbb{R}$. The first inequality is derived as follows:
\begin{align*}
    \text{Define}~f(\rho)=\frac{\rho(x-\mu_1)-(t-\mu_2)}{\sqrt{1-\rho^2}}~\Rightarrow~f'(\rho)=\frac{x-\rho(t-\mu_2)}{(1-\rho^2)^{\tfrac{2}{3}}}
\end{align*}
Since $t$ goes to $\infty$, we consider $t\geqslant\max(\mu_1,\mu_2)$. Moreover, we consider $t$ such that $\frac{t-\mu_1}{t-\mu_2}>\rho_0$. This is possible since $\frac{t-\mu_1}{t-\mu_2}\to1$ as $t\to\infty$. Then
\begin{equation*}
    f'(\rho)=\frac{x-\rho(t-\mu_2)}{(1-\rho^2)^{\tfrac{2}{3}}}>\frac{(t-\mu_1)-\rho(t-\mu_2)}{(1-\rho^2)^{\tfrac{2}{3}}}\geqslant\frac{(t-\mu_1)-\rho_0(t-\mu_2)}{(1-\rho^2)^{\tfrac{2}{3}}}>0~.
\end{equation*}
Hence, $f(\rho)$ is increasing and $f(\rho)\leqslant f(\rho_0)$.

Now we prove the pairwise asymptotic independence of transformed statistics for two-sided and one-sided p-values separately by checking the definition:

\paragraph{Part I. Transformed statistics from two-sided p-values $X_i = Q_F(2\Phi(|T_i|)-1)$.} Without loss of generalization, we assume $\omega_i\geqslant\omega_j>0$. We start with $\left(\omega_iX_{i}^{+}, \omega_jX_{j}^{+}\right)$:
\begin{equation}
\label{eq:quasi_indep_pos_neg_limit_part_I}
\begin{aligned}
&\lim _{x \rightarrow +\infty}\sup_{\rho_{ij}\in[-\rho_0,\rho_0]}\frac{\Pr\left(\omega_iX_{i}^+>x,\omega_jX_{j}^+>x\right)}{\Pr\left(\omega_iX_i>x\right)+\Pr\left(\omega_jX_j>x\right)}
\leqslant \lim _{x \rightarrow +\infty}\sup_{\rho_{ij}\in[-\rho_0,\rho_0]}\frac{\Pr\left(X_i>\tfrac{x}{\omega_i},X_j>\tfrac{x}{\omega_j}\right)}{\Pr\left(X_i>\tfrac{x}{\omega_i}\right)}\\
\leqslant&\lim _{x \rightarrow +\infty}\sup_{\rho_{ij}\in[-\rho_0,\rho_0]} \frac{\Pr\left(X_i>\tfrac{x}{\omega_i},X_j>\tfrac{x}{\omega_i}\right)}{\Pr\left(X_i>\tfrac{x}{\omega_i}\right)}\\
=&\lim _{x \rightarrow +\infty}\sup_{\rho_{ij}\in[-\rho_0,\rho_0]}\Pr\left(2\Phi(|T_i|)-1>F\left(\tfrac{x}{\omega_i}\right)~\big|~2\Phi(|T_i|)-1>F\left(\tfrac{x}{\omega_i}\right)\right),
\end{aligned}
\end{equation}
where the second inequality is due to $\omega_i\geqslant\omega_j>0$, and the equation is based on the fact that $Q_F(t)> x\Leftrightarrow t>F(x)$ for any $x\in\mathbb{R}$ according to the definition of the quantile function. Define $t=\Phi^{-1}\left(\frac{F(x/\omega_i)+1}{2}\right)$, \Cref{eq:quasi_indep_pos_neg_limit_part_I} can be rewritten as
{\small
\begin{equation}
\label{eq:quasi_indep_pos_neg_limit_part_II}
\begin{aligned}
&\lim _{x \rightarrow +\infty} \sup_{\rho_{ij}\in[-\rho_0,\rho_0]}\frac{\Pr\left(\omega_iX_{i}^+>x,\omega_jX_{j}^+>x\right)}{\Pr\left(\omega_iX_i>x\right)+\Pr\left(\omega_jX_j>x\right)} 
\leqslant\lim _{t \rightarrow +\infty}\sup_{\rho_{ij}\in[-\rho_0,\rho_0]}\Pr\left(|T_i|>t\ |\ |T_j|>t\right)\\
=&\lim _{t\rightarrow +\infty}\sup_{\rho_{ij}\in[-\rho_0,\rho_0]}\frac{\Pr\left(T_i>t,T_j>t\right)+ \Pr\left(-T_i>t,T_j>t\right)+\Pr\left(T_i>t,-T_j>t\right)+ \Pr\left(-T_i>t,-T_j>t\right)}{\Pr\left(T_j>t\right)+\Pr\left(-T_j>t\right)}\\
\leqslant&\lim _{t \rightarrow +\infty}\sup_{\rho_{ij}\in[-\rho_0,\rho_0]}\Pr\left(T_i>t\mid T_j>t\right)+\Pr\left(-T_i>t\mid T_j>t\right)+\Pr\left(T_i>t\mid -T_j>t\right)+ \Pr\left(-T_i>t\mid -T_j>t\right)
=0\ .
\end{aligned}
\end{equation}
}
where the last equation utilizes \eqref{eq:normal_asym_indep} together with the fact that $(-T_i,T_j),(T_i,-T_j),(-T_i,-T_j)$ are also bivariate-normally distributed given the normality of $(T_i,T_j)$.

Next, we will check $\left(\omega_iX_{i}^{+}, \omega_jX_{j}^{-}\right)$ and $\left(\omega_iX_{i}^{-}, \omega_jX_{j}^{+}\right)$. Without loss of generality, we only consider $\left(\omega_iX_{i}^{+}, \omega_jX_{j}^{-}\right)$, and $\left(\omega_iX_{i}^{-}, \omega_jX_{j}^{+}\right)$ follows exactly the same proof.
\begin{equation}
\label{eq:quasi_indep_pos_neg_limit_step2}
\begin{split}
&\lim_{x \rightarrow +\infty}\sup_{\rho_{ij}\in[-\rho_0,\rho_0]}\frac{\Pr\left(\omega_iX_{i}^+>x,\omega_jX_{j}^->x\right)}{\Pr(\omega_iX_i>x)+\Pr(\omega_jX_j>x)} 
\leqslant\lim_{x \rightarrow +\infty}\sup_{\rho_{ij}\in[-\rho_0,\rho_0]}\frac{\Pr\left(X_{i}>\tfrac{x}{\omega_i},X_{j}<-\tfrac{x}{\omega_i}\right)}{\Pr(X_i>\tfrac{x}{\omega_i})}\\
\leqslant&\lim _{x \rightarrow +\infty}\sup_{\rho_{ij}\in[-\rho_0,\rho_0]}\Pr\left(X_{j}<-\tfrac{x}{\omega_i}\mid  X_{i}>\tfrac{x}{\omega_i}\right)
\leqslant\lim _{x \rightarrow +\infty}\sup_{\rho_{ij}\in[-\rho_0,\rho_0]}\Pr\left(X_{j}\leqslant-\tfrac{x}{\omega_i}\mid  X_{i}>\tfrac{x}{\omega_i}\right)\\
=& \lim _{x \rightarrow +\infty}\sup_{\rho_{ij}\in[-\rho_0,\rho_0]}\Pr\left(2\Phi(|T_{j}|)-1\leqslant F\left(-\tfrac{x}{\omega_i}\right)\mid  2\Phi(|T_{j}|)-1>F\left(\tfrac{x}{\omega_i}\right)\right)
\end{split} 
\end{equation}
where the second inequality is due to $\omega_i\geqslant\omega_j$, and the last equation is based on the fact that $Q_F(t)\leqslant x\Leftrightarrow t\leqslant F(x)$ and $Q_F(t)> x\Leftrightarrow t>F(x)$ for any $x\in\mathbb{R}$ according to the definition of the quantile function. Consider the change of variable: $t_1(x)=\Phi^{-1}\left(\frac{F\left(-x/\omega_i\right)+1}{2}\right)$ and $t_2(x)=\Phi^{-1}\left(\frac{F\left(x/\omega_i\right)+1}{2}\right)$. Then, 
\begin{equation}
\label{eq:quasi_indep_pos_neg_limit_step3}
\eqref{eq:quasi_indep_pos_neg_limit_step2}= \lim _{\substack{t_1 \to0\\ t_2 \to+\infty}}\sup_{\rho_{ij}\in[-\rho_0,\rho_0]}\Pr\left(|T_j|<t_1\mid|T_i|>t_2\right), 
\end{equation}
if the latter limit exists.
We further set $Z=\frac{(T_j-\mu_j)-\rho(T_i-\mu_i)}{\sqrt{1-\rho^2}}\sim\mathbb{N}\left(0,1\right)$ ($\rho=\rho_{ij}=\rho_{ji}$), and $Z$ and $T_i$ are independent by construction. Therefore, 
\begin{equation}
\label{eq:quasi_indep_pos_neg_limit_step4}
\begin{split}
\Pr\left(|T_j|<t_1\mid |T_i|>t_2\right) & = \Pr\left(-\frac{t_1+\rho T_i+\mu_j-\rho\mu_i}{\sqrt{1-\rho^2}}<Z<\frac{t_1-\rho T_i-\mu_j+\rho\mu_i}{\sqrt{1-\rho^2}}\ \bigg|\ |T_i|>t_2\right)\\
& = \frac{E\left[E\left(1_{\left\{-\frac{t_1+\rho T_i+\mu_j-\rho\mu_i}{\sqrt{1-\rho^2}}<Z<\frac{t_1-\rho T_i-\mu_j+\rho\mu_i}{\sqrt{1-\rho^2}}\right\}}1_{\{|T_i|>t_2\}}\ \bigg|\ T_i\right)\right]}{\Pr\left(|T_i|>t_2\right)}\\
& = \frac{E\left[\left(\Phi(\frac{t_1-\rho T_i-\mu_j+\rho\mu_i}{\sqrt{1-\rho^2}})-\Phi(\frac{-t_1-\rho T_i-\mu_j+\rho\mu_i}{\sqrt{1-\rho^2}})\right)1_{\{|T_i|>t_2\}}\right]}{\Pr\left(|T_i|>t_2\right)}\\
& \leqslant \max_{t}\phi(t)\times\frac{2t_1}{\sqrt{1-\rho^2}}=\sqrt{\frac{2}{\pi}}\frac{t_1}{\sqrt{1-\rho^2}}
\leqslant\sqrt{\frac{2}{\pi}}\frac{t_1}{\sqrt{1-\rho_0^2}}~,
\end{split}
\end{equation}
where the inequality applies the mean value theorem. Plug in \eqref{eq:quasi_indep_pos_neg_limit_step4}, \eqref{eq:quasi_indep_pos_neg_limit_step3} can be extended as
\begin{equation*}
\eqref{eq:quasi_indep_pos_neg_limit_step2}
= \lim _{\substack{t_1 \to0\\ t_2 \to+\infty}}\sup_{\rho_{ij}\in[-\rho_0,\rho_0]}\Pr\left(|T_j|<t_1\mid|T_i|>t_2\right)
\leqslant\lim _{t_1 \to0}\sqrt{\frac{2}{\pi}}\frac{t_1}{\sqrt{1-\rho_0^2}}=0
\end{equation*}
Accordingly,
\begin{equation}
\label{eq:quasi_indep_pos_neg_limit_final2}
\lim_{x \rightarrow +\infty}\sup_{\rho_{ij}\in[-\rho_0,\rho_0]}\frac{\Pr\left(\omega_iX_{i}^+>x,\omega_jX_{j}^->x\right)}{\Pr(\omega_iX_i>x)+\Pr(\omega_jX_j>x)} = 0
\end{equation}
Following exactly the same derivation, we also have
\begin{equation}
\label{eq:quasi_indep_pos_neg_limit_final3}
\lim_{x \rightarrow +\infty}\sup_{\rho_{ij}\in[-\rho_0,\rho_0]}\frac{\Pr\left(\omega_iX_{i}^->x,\omega_jX_{j}^+>x\right)}{\Pr(\omega_iX_i>x)+\Pr(\omega_jX_j>x)} = 0
\end{equation}

The first part is done by combining \eqref{eq:quasi_indep_pos_neg_limit_part_II}, \eqref{eq:quasi_indep_pos_neg_limit_final2} and \eqref{eq:quasi_indep_pos_neg_limit_final3}.

\paragraph{Part II. Transformed statistics from one-sided p-values $X_i = Q_F(\Phi(T_i))$.}
We start by checking $(\omega_iX_i^+,\omega_iX_j^+)$.
\begin{equation}
\label{eq:asymp_indep_one_sided_p}
\begin{split}
&\lim _{x \rightarrow +\infty}\sup_{\rho_{ij}\in[-\rho_0,\rho_0]}\frac{\Pr\left(\omega_iX_{i}^+>x,\omega_jX_{j}^+>x\right)}{\Pr\left(\omega_iX_i>x\right)+\Pr\left(\omega_jX_j>x\right)}
\leqslant\lim _{x \rightarrow +\infty}\sup_{\rho_{ij}\in[-\rho_0,\rho_0]}\frac{\Pr\left(X_i>\tfrac{x}{\omega_i},X_j>\tfrac{x}{\omega_j}\right)}{\Pr\left(X_i>\tfrac{x}{\omega_i}\right)} \\
\leqslant& \lim _{x \rightarrow +\infty}\sup_{\rho_{ij}\in[-\rho_0,\rho_0]} \frac{\Pr\left(X_i>\tfrac{x}{\omega_i},X_j>\tfrac{x}{\omega_i}\right)}{\Pr\left(X_i>\tfrac{x}{\omega_i}\right)}
=\lim _{x \rightarrow +\infty}\sup_{\rho_{ij}\in[-\rho_0,\rho_0]}\Pr\left(\Phi(T_{i})>F\left(\tfrac{x}{\omega_i}\right)~\big|~\Phi(T_{j})>F\left(\tfrac{x}{\omega_i}\right)\right)\\
=&\lim_{x\to+\infty}\sup_{\rho_{ij}\in[-\rho_0,\rho_0]}\Pr\left(T_i>\Phi^{-1}\left(F\left(\tfrac{x}{\omega_i}\right)\right)~\big|~ T_j>\Phi^{-1}\left(F\left(\tfrac{x}{\omega_i}\right)\right)\right)\\
=&\lim_{t\to+\infty}\sup_{\rho_{ij}\in[-\rho_0,\rho_0]}\Pr\left(T_i>t\mid T_j>t\right)=0,
\end{split}
\end{equation}
where the first equality is again based on the fact that $Q_F(t)> x\Leftrightarrow t>F(x)$ for any $x\in\mathbb{R}$ according to the definition of the quantile function, and the last two equations use $t=\Phi^{-1}\left(F\left(x/{\omega_i}\right)\right)$ and \eqref{eq:normal_asym_indep} respectively.

Then we check $(\omega_iX_i^+,\omega_jX_j^-)$ and $(\omega_iX_i^-,\omega_jX_j^+)$.
Since $F$ satisfies $F(-x)\leqslant 1-F(x)$ for sufficiently large $x$. Then, denote $t=\Phi^{-1}(F\left(x/\omega_i\right))$ and we have
{\small
\begin{equation}
\label{eq:asymp_indep_one_sided_p_pos}
\begin{split}
&\lim_{x \rightarrow +\infty}\sup_{\rho_{ij}\in[-\rho_0,\rho_0]}\frac{\Pr\left(\omega_iX_{i}^+>x,\omega_jX_{j}^->x\right)}{\Pr(\omega_iX_i>x)+\Pr(\omega_jX_j>x)} \leqslant\lim_{x \rightarrow +\infty}\sup_{\rho_{ij}\in[-\rho_0,\rho_0]}\frac{\Pr\left(X_{i}>\tfrac{x}{\omega_i},X_{j}<-\tfrac{x}{\omega_j}\right)}{\Pr(X_i>\tfrac{x}{\omega_i})}\\
\leqslant&\lim_{x \rightarrow +\infty}\sup_{\rho_{ij}\in[-\rho_0,\rho_0]}\frac{\Pr\left(X_{i}>\tfrac{x}{\omega_i},X_{j}\leqslant-\tfrac{x}{\omega_i}\right)}{\Pr(X_i>\tfrac{x}{\omega_i})}
=\lim_{x\to +\infty}\sup_{\rho_{ij}\in[-\rho_0,\rho_0]}\frac{\Pr\left(\Phi(T_i)>F\left(\tfrac{x}{\omega_i}\right),\ \Phi(T_j)\leqslant F\left(-\tfrac{x}{\omega_i}\right)\right)}{\Pr\left(\Phi(T_i)>F\left(\tfrac{x}{\omega_i}\right)\right)}\\
\leqslant&\lim_{x\to +\infty}\sup_{\rho_{ij}\in[-\rho_0,\rho_0]}\frac{\Pr\left(\Phi(T_i)>F\left(\tfrac{x}{\omega_i}\right),\ \Phi(T_j)\leqslant 1-F\left(\tfrac{x}{\omega_i}\right)\right)}{\Pr\left(\Phi(T_i)>F\left(\tfrac{x}{\omega_i}\right)\right)}\\
=&\lim_{x\to+\infty}\sup_{\rho_{ij}\in[-\rho_0,\rho_0]}\frac{\Pr\left(\Phi(T_i)>F\left(\tfrac{x}{\omega_i}\right),\ \Phi(-T_j)\geqslant F\left(\tfrac{x}{\omega_i}\right)\right)}{\Pr\left(\Phi(T_i)>F\left(\tfrac{x}{\omega_i}\right)\right)}
=\lim _{t \to+\infty}\sup_{\rho_{ij}\in[-\rho_0,\rho_0]}\Pr\left(-T_j>t\mid T_i>t\right) = 0
\end{split}
\end{equation}
}
where the first equality is based on the definition of the quantile function, the third inequality is based on $F\left(-\tfrac{x}{\omega_i}\right)\leqslant1-F\left(\tfrac{x}{\omega_i}\right)$ for sufficiently large $x$, and the last equality utilizes \Cref{eq:normal_asym_indep} together with the fact that $(-T_j,T_i)$ is also bivariate-normally distributed given the normality of $(T_i,T_j)$. 

Similarly, we can get 
\begin{equation}
\label{eq:asymp_indep_one_sided_p_3}
\lim_{x \rightarrow +\infty}\sup_{\rho_{ij}\in[-\rho_0,\rho_0]} \frac{\Pr\left(\omega_iX_{i}^->x,\omega_jX_{j}^+>x\right)}{\Pr(\omega_iX_i>x)+\Pr(\omega_jX_j>x)}=0.
\end{equation}
The second part is done by combining \cref{eq:asymp_indep_one_sided_p,eq:asymp_indep_one_sided_p_pos,eq:asymp_indep_one_sided_p_3}.

The lemma follows by summarizing the results of part I and II.
\end{proof}

\begin{proof}[Proof of \Cref{lemma:tail_prob_max_vs_sum_uniform_version}]
First, by \Cref{lemma:normal_to_regularly_varying}, $X_i$ belongs to $\mathscr{R}$. Hence $\omega_iX_i$ also belongs to $\mathscr{R}$. Further, on basis of \Cref{lemma:quasiindep}, the transformed weighted statistics $\omega_iX_i$ are pairwise quasi-asymptotically independent with any choice of $\rho_{ij}\in[-\rho_0,\rho_0]$:
\begin{align*}
    &\lim_{x \rightarrow +\infty} \sup_{\rho_{ij}\in[-\rho_0,\rho_0]}\frac{\Pr\left(\omega_iX_{i}^+>x,\omega_jX_{j}^+>x\right)}{\Pr(\omega_iX_i>x)+\Pr(\omega_jX_j>x)}=0,\\
    &\lim_{x \rightarrow +\infty} \sup_{\rho_{ij}\in[-\rho_0,\rho_0]}\frac{\Pr\left(\omega_iX_{i}^+>x,\omega_jX_{j}^->x\right)}{\Pr(\omega_iX_i>x)+\Pr(\omega_jX_j>x)}=0,\\
    &\lim_{x \rightarrow +\infty} \sup_{\rho_{ij}\in[-\rho_0,\rho_0]}\frac{\Pr\left(\omega_iX_{i}^->x,\omega_jX_{j}^+>x\right)}{\Pr(\omega_iX_i>x)+\Pr(\omega_jX_j>x)}=0. 
\end{align*}

Then, by \Cref{thm:asymp_type_I_error_control}, the right tail probability of the distribution of $S_{n, \vec{\omega}}=\sum_{i=1}^n\omega_iX_i$ has the following property:
\begin{equation}
\label{eq:weighted_sum_tail_uniform_convergence_lower_bound}
    \lim _{x \rightarrow +\infty} \sup_{\rho_{ij}\in[-\rho_0,\rho_0]}\frac{\Pr\left(S_{n, \vec{\omega}}>x\right)}{\sum_{i=1}^n\Pr(\omega_iX_i>x)}
    \geqslant\sup_{\rho_{ij}\in[-\rho_0,\rho_0]}\lim _{x \rightarrow +\infty} \frac{\Pr\left(S_{n, \vec{\omega}}>x\right)}{\sum_{i=1}^n\Pr(\omega_iX_i>x)}=1.
\end{equation}
For arbitrary fixed $0<\epsilon<1$,
\begin{align*}
    &\Pr\left(S_{n, \vec{\omega}}>x\right)\le\Pr\big(\cup_{i=1}^n\left\{\omega_iX_i>(1-\epsilon)x\right\}\big)+\Pr\big(S_{n, \vec{\omega}}>x,\cap_{i=1}^n\left\{\omega_iX_i\le(1-\epsilon)x\right\}\big)\\
    &\le\sum_{i=1}^n\Pr(\omega_iX_i>(1-\epsilon)x)+\sum_{i=1}^n\Pr\big(\omega_iX_i>x/n,S_{n, \vec{\omega}}-\omega_iX_i>\epsilon x\big)\\
    &\le\sum_{i=1}^n\Pr(\omega_iX_i>(1-\epsilon)x)+\sum_{1\le i\neq j\le n}^n\Pr\left(\omega_iX_i>\frac{x}{n}\wedge\frac{\epsilon x}{n-1},\omega_jX_j>\frac{x}{n}\wedge\frac{\epsilon x}{n-1}\right).
\end{align*}
Hence, plugging in \eqref{eq:uniform_asymp_indep}, it holds that
\begin{equation}
\label{eq:weighted_sum_tail_uniform_convergence_upper_bound}
    \lim _{x \rightarrow +\infty} \sup_{\rho_{ij}\in[-\rho_0,\rho_0]}\frac{\Pr\left(S_{n, \vec{\omega}}>x\right)}{\sum_{i=1}^n\Pr(\omega_iX_i>x)}
    \le (1-\epsilon)^{-\gamma}+0=(1-\epsilon)^{-\gamma}~.
\end{equation} 

It follows from \eqref{eq:weighted_sum_tail_uniform_convergence_lower_bound} and \eqref{eq:weighted_sum_tail_uniform_convergence_upper_bound} with $\epsilon\to0$ that
\begin{equation*}
    \lim _{x \rightarrow +\infty} \sup_{\rho_{ij}\in[-\rho_0,\rho_0]}\frac{\Pr\left(S_{n, \vec{\omega}}>x\right)}{\sum_{i=1}^n\Pr(\omega_iX_i>x)}=1.
\end{equation*}

For the tail probability of the maximum, we follow a similar proof to that of \Cref{cor:tail_prob_max_vs_sum}:
\begin{equation*}
    \sum_{i=1}^n\Pr(\omega_iX_i>x)-\sum_{i\neq j}\Pr(\omega_iX_i>x,\omega_jX_j>x)
    \leqslant \Pr(\max_{i=1,\ldots,n}\omega_iX_i>x)
    \leqslant \sum_{i=1}^n\Pr(\omega_iX_i>x)
\end{equation*}
By \Cref{lemma:quasiindep},
\begin{equation*}
    \lim_{x\to+\infty}\sup_{\rho\in[-\rho_0,\rho_0]}\frac{\sum_{i\neq j}\Pr(\omega_iX_i>x,\omega_jX_j>x)}{\sum_{i=1}^n\Pr(\omega_iX_i>x)}=0~,
\end{equation*}
and hence
\begin{equation*}
    \lim_{x\to+\infty}\sup_{\rho\in[-\rho_0,\rho_0]}\frac{\Pr(\max_{i=1,\ldots,n}\omega_iX_i>x)}{\sum_{i=1}^n\Pr(\omega_iX_i>x)}=1~.
\end{equation*}
\end{proof}

\begin{proof}[Proof of \Cref{lemma:neg_cor_pair_prob}]
(i) We prove by contradiction. Suppose $\delta_\alpha$ does not converge to 0 as $\alpha\to0^+$, namely, there exists a constant $c>0$ such that for sufficiently small $\alpha$, $Q_F\left(1-\alpha^{c_0}\right)\geqslant cQ_F\left(1-\alpha\right)$.

On one hand,
\begin{equation*}
    \lim_{\alpha\to0^+}\frac{\bar{F}\left(Q_F\left(1-\alpha^{c_0}\right)\right)}{\bar{F}\left(Q_F\left(1-\alpha\right)\right)}=\lim_{\alpha\to0^+}\alpha^{c_0-1}=\infty \ .
\end{equation*}

On the other hand,
\begin{equation*}
    \lim_{\alpha\to0^+}\frac{\bar{F}\left(Q_F\left(1-\alpha^{c_0}\right)\right)}{\bar{F}\left(Q_F\left(1-\alpha\right)\right)}\leqslant\lim_{\alpha\to0^+}\frac{\bar{F}\left(cQ_F\left(1-\alpha\right)\right)}{\bar{F}\left(Q_F\left(1-\alpha\right)\right)}=c^{-\gamma}<\infty \ ,
\end{equation*}
where $\gamma$ is the tail index of $F$. This leads to contradiction and thus $\delta_\alpha\to0$ must hold.

(ii) It is straightforward to see the conclusion by noting that $\delta_\alpha Q_F\left(1-\alpha\right)=(n-1)Q_F\left(1-\alpha^{c_0}\right)$.

(iii) Since $\delta_\alpha\to0$ as $\alpha\to0^+$, for any $\epsilon>0$, there exists a $c_\epsilon>0$ such that for all $\alpha<c_\epsilon$, $\delta_\alpha<\epsilon$. Accordingly, for all $\alpha<c_\epsilon$, $\bar{F}\left((1+\delta_\alpha) Q_F\left(1-\alpha\right)\right)\geqslant\bar{F}\left((1+\epsilon) Q_F\left(1-\alpha\right)\right)$, and hence
\begin{equation*}
\label{eq:term_I_fraction_is_1}
1\geqslant\lim_{\alpha\to0^+}\frac{\bar{F}\left((1+\delta_\alpha) Q_F\left(1-\alpha\right)\right)}{\bar{F}\left(Q_F\left(1-\alpha\right)\right)}\geqslant\lim_{\alpha\to0^+}\frac{\bar{F}\left((1+\epsilon) Q_F\left(1-\alpha\right)\right)}{\bar{F}\left(Q_F\left(1-\alpha\right)\right)}=(1+\epsilon)^{-\gamma}
\end{equation*}
And let $\epsilon\to0$, we prove (iii).

(iv) As the first step of the proof, we will show that 
\begin{equation*}
\label{eq:two_h_ratio_limit}
\lim_{\alpha\to0^+}\frac{h\left(\frac{1}{\omega_X}(1+\delta_\alpha)Q_F\left(1-\alpha\right)\right)}{h\left(\frac{1}{\omega_Y}\frac{\delta_\alpha}{n-1}Q_F\left(1-\alpha\right)\right)}=\frac{1}{\sqrt{c_0}}>1 \ .
\end{equation*}

Based on the fact that $\lim_{x\to1}\frac{\Phi^{-1}(x)}{\sqrt{-2\log(1-x)}}=1$, we have
\begin{equation*}
\begin{aligned}
    &\lim_{\alpha\to0^+}\frac{h\left(\frac{1}{\omega_X}(1+\delta_\alpha)Q_F\left(1-\alpha\right)\right)}{h\left(\frac{1}{\omega_Y}\frac{\delta_\alpha}{n-1}Q_F\left(1-\alpha\right)\right)}=\lim_{\alpha\to0^+}\frac{h\left(\frac{1}{\omega_X}Q_F\left(1-\alpha\right)+\frac{1}{\omega_X}(n-1)Q_F\left(1-\alpha^{c_0}\right)\right)}{h\left(\frac{1}{\omega_Y}Q_F\left(1-\alpha^{c_0}\right)\right)}\\
=&\lim_{\alpha\to0^+}\sqrt{ \frac{\log\left(1-F\left(\frac{1}{\omega_X}Q_F\left(1-\alpha\right)+\frac{1}{\omega_X}(n-1)Q_F\left(1-\alpha^{c_0}\right)\right)\right)}{\log\left(1-F\left(\frac{1}{\omega_Y}Q_F\left(1-\alpha^{c_0}\right)\right)\right)} }
\end{aligned}
\end{equation*}

Note that $c_0=\frac{3}{2}-\frac{1}{1+\rho_0}<1$, and
\begin{align*}
&\lim_{\alpha\to0^+}{ \frac{\log\left(1-F\left(\frac{1}{\omega_X}Q_F\left(1-\alpha\right)+\frac{1}{\omega_X}(n-1)Q_F\left(1-\alpha^{c_0}\right)\right)\right)}{\log\left(1-F\left(\frac{1}{\omega_Y}Q_F\left(1-\alpha^{c_0}\right)\right)\right)}}\\
=&\lim_{\alpha\to0^+}\frac{\log\left(\bar{F}\left(\frac{1}{\omega_X}Q_F\left(1-\alpha\right)+\frac{1}{\omega_X}(n-1)Q_F\left(1-\alpha^{c_0}\right)\right)\right)}{\log({\omega_Y}^\gamma)+ c_0\log\left({\alpha}\right)}\\
= & \lim_{\alpha\to0^+}\frac{\log\left(\frac{\bar{F}\left(\frac{1}{\omega_X}Q_F\left(1-\alpha\right)+\frac{1}{\omega_X}(n-1)Q_F\left(1-\alpha^{c_0}\right)\right)}{\bar{F}(Q_F\left(1-\alpha\right))}\right)+\log({\alpha})}{\log({\omega_Y}^\gamma)+c_0\log\left({\alpha}\right)}
= \lim_{\alpha\to0^+}\frac{\log(\omega_X^\gamma)+\log({\alpha})}{\log({\omega_Y}^\gamma)+c_0\log\left({\alpha}\right)} = \frac{1}{c_0}
\end{align*}
where the third equality utilizes part (iii) and thus
\begin{equation*}
    \lim_{\alpha\to0}\frac{h\left(\frac{1}{\omega_X}(1+\delta_\alpha)Q_F\left(1-\alpha\right)\right)}{h\left(\frac{1}{\omega_Y}\frac{\delta_\alpha}{n-1}Q_F\left(1-\alpha\right)\right)} = \frac{1}{\sqrt{c_0}}>1.
\end{equation*}

Now we are ready to prove part (iv). Denote $\mu_1$, $\mu_2$ the mean of $X$ and $Y$. To simplify the notation, denote
\begin{equation*}
\begin{split}
&\tilde{h}_1(\alpha)=h\left(\frac{1}{\omega_X}\left(1+\delta_\alpha\right) Q_F\left(1-\alpha\right)\right)=h\left(\frac{1}{\omega_X}Q_F\left(1-\alpha\right)+\frac{1}{\omega_X}(n-1)Q_F\left(1-\alpha^{c_0}\right)\right),\\
&\tilde{h}_2(\alpha)=h\left(\frac{1}{\omega_Y}\frac{\delta_\alpha}{n-1}  Q_F\left(1-\alpha\right)\right)=h\left(\frac{1}{\omega_Y}Q_F\left(1-\alpha^{c_0}\right)\right). 
\end{split}
\end{equation*}

Then we have
\begin{align}\label{eq:iv-1}
\begin{split}
&\lim_{\alpha\to0^+}\sup_{\rho\in[-\rho_0,\rho_0]}\Pr\left(Y\geqslant\tilde{h}_2(\alpha)\mid X\geqslant\tilde{h}_1(\alpha)\right)\\
\leqslant&\lim_{\alpha\to0^+}\sup_{\rho\in[-\rho_0,\rho_0]}\Pr\left(Y\geqslant\tilde{h}_1(\alpha)\mid X\geqslant\tilde{h}_1(\alpha)\right)+\lim_{\alpha\to0^+}\sup_{\rho\in[-\rho_0,\rho_0]}\Pr\left(\tilde{h}_2(\alpha)\leqslant Y\leqslant\tilde{h}_1(\alpha)\mid X\geqslant\tilde{h}_1(\alpha)\right)\\
{=}&\lim_{\alpha\to0^+}\sup_{\rho\in[-\rho_0,\rho_0]}\Pr\left(\tilde{h}_2(\alpha)\leqslant Y\leqslant\tilde{h}_1(\alpha)\mid X\geqslant\tilde{h}_1(\alpha)\right)
\end{split}
\end{align}
where the last equality uses \Cref{eq:normal_asym_indep}.

Furthermore, it holds that
{\small
\begin{equation}\label{eq:iv-2}
\begin{split}
&\lim_{\alpha\to0^+}\sup_{\rho\in[-\rho_0,\rho_0]}\Pr\left(\tilde{h}_2(\alpha)\leqslant Y\leqslant\tilde{h}_1(\alpha)\mid X\geqslant\tilde{h}_1(\alpha)\right)\\
=&\lim_{\alpha\to0^+}\sup_{\rho\in[-\rho_0,\rho_0]}\frac{\Pr\left(\tilde{h}_2(\alpha)\leqslant Y\leqslant \tilde{h}_1(\alpha),\ X\geqslant\tilde{h}_1(\alpha)\right)}{\Pr\left(X\geqslant\tilde{h}_1(\alpha)\right)}\\
=&\lim_{\alpha\to0^+}\sup_{\rho\in[-\rho_0,\rho_0]}\frac{E\left[\Pr\left(X\geqslant\tilde{h}_1(\alpha)|\ Y\right)1_{\{\tilde{h}_2(\alpha)\leqslant Y\leqslant\tilde{h}_1(\alpha)\}}\right]}{\Pr\left(X\geqslant\tilde{h}_1(\alpha)\right)}\\
=&\lim_{\alpha\to0^+}\sup_{\rho\in[-\rho_0,\rho_0]}\int_{\tilde{h}_2(\alpha)}^{\tilde{h}_1(\alpha)}\frac{1-\Phi\left(\frac{\tilde{h}_1(\alpha)-\rho y-\mu_1+\rho\mu_2}{\sqrt{1-\rho^2}}\right)}{1-\Phi\left(\tilde{h}_1(\alpha)-\mu_1\right)}\phi(y-\mu_2)dy\\
\leqslant&\lim_{\alpha\to0^+}\sup_{\rho\in[-\rho_0,\rho_0]}\frac{1-\Phi\left(\sqrt{\frac{1-\rho}{1+\rho}}\tilde{h}_1(\alpha)-\frac{\mu_1-\rho\mu_2}{\sqrt{1-\rho^2}}\right)}{1-\Phi\left(\tilde{h}_1(\alpha)-\mu_1\right)}\left[\Phi(\tilde{h}_1(\alpha)-\mu_2)-\Phi(\tilde{h}_2(\alpha)-\mu_2)\right]\\
=&\underbrace{\lim_{\alpha\to0^+}\sup_{\rho\in[-\rho_0,\rho_0]}\frac{1-\Phi\left(\sqrt{\frac{1-\rho}{1+\rho}}\tilde{h}_1(\alpha)-\frac{\mu_1-\rho\mu_2}{\sqrt{1-\rho^2}}\right)}{1-\Phi\left(\tilde{h}_1(\alpha)-\mu_1\right)}\times\left(1-\Phi(\tilde{h}_2(\alpha)-\mu_2)\right)}_{\text{I}}\times\underbrace{\lim_{\alpha\to0^+}\left[1-\frac{1-\Phi(\tilde{h}_1(\alpha)-\mu_2)}{1-\Phi(\tilde{h}_2(\alpha)-\mu_2)}\right]}_{\text{II}}
\end{split}
\end{equation}
}
where the third equality follows from the bivariate normality of $X$ and $Y$.

Since $\lim_{x\to+\infty}\frac{1-\Phi(x)}{\phi(x)/x}=1$ where $\phi(x)$ is the density of the standard normal, 
\begin{equation}\label{eq:iv-3}
    \lim_{\alpha\to0}\frac{1-\Phi(\tilde{h}_1(\alpha)-\mu_2)}{1-\Phi(\tilde{h}_2(\alpha)-\mu_2)}=\lim_{\alpha\to0}\frac{\tilde{h}_2(\alpha)-\mu_2}{\tilde{h}_1(\alpha)-\mu_2}\exp\left\{-\frac{1}{2}\left(\tilde{h}_1(\alpha)^2-\tilde{h}_2(\alpha)^2\right)+\mu_2\left(\tilde{h}_1(\alpha)-\tilde{h}_2(\alpha)\right)\right\}=0 \ .
\end{equation}
Accordingly, term II goes to 1. And for term I, 
\begin{equation}\label{eq:iv-4}
\begin{split}
&\lim_{\alpha\to0^+}\sup_{\rho\in[-\rho_0,\rho_0]} \frac{1-\Phi\left(\sqrt{\frac{1-\rho}{1+\rho}}\tilde{h}_1(\alpha)-\frac{\mu_1-\rho\mu_2}{\sqrt{1-\rho^2}}\right)}{1-\Phi\left(\tilde{h}_1(\alpha)-\mu_1\right)}\times\left(1-\Phi(\tilde{h}_2(\alpha)-\mu_2)\right)\\
\leqslant&\lim_{\alpha\to0^+}\sup_{\rho\in[-\rho_0,\rho_0]} \frac{1-\Phi\left(\sqrt{\frac{1-\rho_0}{1+\rho_0}}\tilde{h}_1(\alpha)-\frac{\mu_1+|\mu_2|}{\sqrt{1-\rho_0^2}}\right)}{1-\Phi\left(\tilde{h}_1(\alpha)-\mu_1\right)}\times\left(1-\Phi(\tilde{h}_2(\alpha)-\mu_2)\right)\\
=&c_1\lim_{\alpha\to0^+}\frac{\tilde{h}_1(\alpha)-\mu_1}{\sqrt{\frac{1-\rho_0}{1+\rho_0}}\tilde{h}_1(\alpha)-\frac{\mu_1+|\mu_2|}{\sqrt{1-\rho_0^2}}}\times\frac{1}{\tilde{h}_2(\alpha)}\times\exp\left\{\frac{\rho_0}{1+\rho_0}\tilde{h}_1(\alpha)^2-\frac{\mu_1+|\mu_2|}{1+\rho_0}\tilde{h}_1(\alpha)-\frac{1}{2}\tilde{h}_2(\alpha)^2+\mu_2\tilde{h}_2(\alpha)\right\}\\
=&c_1\sqrt{\frac{1+\rho_0}{c_0(1-\rho_0)}}\lim_{\alpha\to0^+}\frac{1}{\tilde{h}_1(\alpha)}\times\exp\left\{\frac{\rho_0}{1+\rho_0}\tilde{h}_1(\alpha)^2-\frac{\mu_1+|\mu_2|}{1+\rho_0}\tilde{h}_1(\alpha)-\frac{1}{2}\tilde{h}_2(\alpha)^2+\mu_2\tilde{h}_2(\alpha)\right\}=0
\end{split}
\end{equation}
where
\begin{equation*}
c_1=\frac{1}{\sqrt{2\pi}}\exp\left\{-\frac{\rho_0^2{\mu_1}^2+2\mu_1|\mu_2|+(2-\rho_0^2){\mu_2}^2}{2(1-\rho^2)}\right\}
\end{equation*}
The first equality again uses $\lim_{x\to+\infty}\frac{1-\Phi(x)}{\phi(x)/x}=1$ and the last one is from 
\begin{equation*}
\frac{c_0}{2}-\frac{\rho_0}{1+\rho_0}=\frac{1}{2}\left(\frac{3}{2}-\frac{1}{1+\rho_0}-\frac{2\rho_0}{1+\rho_0}\right)=\frac{3}{4}-\frac{\rho_0+1/2}{1+\rho_0}=\frac{1}{2}\left(\frac{1}{1+\rho_0}-\frac{1}{2}\right)>0,
\end{equation*}
and hence
\begin{equation*}
\begin{split}
&\lim_{\alpha\to0^+}\frac{\rho_0}{1+\rho_0}\tilde{h}_1(\alpha)^2-\frac{\mu_1+|\mu_2|}{1+\rho_0}\tilde{h}_1(\alpha)-\frac{1}{2}\tilde{h}_2(\alpha)^2+\mu_2\tilde{h}_2(\alpha)-\log\tilde{h}_1(\alpha)\\
=&\lim_{\alpha\to0^+}\tilde{h}_1(\alpha)^2\times\lim_{\alpha\to0^+}\frac{\rho_0}{1+\rho_0}-\frac{\mu_1+|\mu_2|}{(1+\rho_0)\tilde{h}_1(\alpha)}-\frac{1}{2}\frac{\tilde{h}_2(\alpha)^2}{\tilde{h}_1(\alpha)^2}+\mu_2\frac{\tilde{h}_2(\alpha)}{\tilde{h}_1(\alpha)^2}-\frac{\log\tilde{h}_1(\alpha)}{\tilde{h}_1(\alpha)^2}\\
=&\lim_{\alpha\to0^+}\tilde{h}_1(\alpha)^2\times\lim_{\alpha\to0^+}\frac{\rho_0}{1+\rho_0}-0-\frac{1}{2}c_0+\mu_2\times c_0\times 0-0\\
=&\lim_{\alpha\to0^+}-\tilde{h}_1(\alpha)^2\times\left(\frac{c_0}{2}-\frac{\rho_0}{1+\rho_0}\right)=-\infty.
\end{split}
\end{equation*}

Combine \Cref{eq:iv-1,eq:iv-2,eq:iv-3,eq:iv-4}, we reach
\begin{equation*}
\lim_{\alpha\to0^+}\sup_{\rho\in[-\rho_0,\rho_0]} \Pr\left(Y\geqslant\tilde{h}_2(\alpha)\mid X\geqslant\tilde{h}_1(\alpha)\right)
=\lim_{\alpha\to0^+}\sup_{\rho\in[-\rho_0,\rho_0]} \Pr\left(\tilde{h}_2(\alpha)\leqslant Y\leqslant\tilde{h}_1(\alpha)\mid X\geqslant\tilde{h}_1(\alpha)\right) = 0
\end{equation*}
which finishes the proof.
\end{proof}

\begin{proof}[Proof of \Cref{lemma:diff_combination_bon_prob}]
Denote $c_0=\frac{3}{2}-\frac{1}{1+\rho_0}$ and $\delta_\alpha=(n-1)\frac{Q_F\left(1-\alpha^{c_0}\right)}{Q_F\left(1-\alpha\right)}$, which are the same choices in \Cref{lemma:neg_cor_pair_prob}. Thus, $\frac{|\rho_{ij}|}{1+|\rho_{ij}|}\le\frac{\rho_0}{1+\rho_0}<c_0<1$.

Denote 
\begin{equation*}
\begin{split}
&\text{I}=\Pr\left(Q_F\left(1-\alpha\right) < \omega_iX_i\leqslant \left(1+\delta_\alpha\right) Q_F\left(1-\alpha\right), \sum_{k=1}^n \omega_kX_k<Q_F\left(1-\alpha\right)\right),\\
&\text{II}=\Pr\left(\omega_iX_i >\left(1+\delta_\alpha\right) Q_F\left(1-\alpha\right), \sum_{k=1}^n \omega_kX_k<Q_F\left(1-\alpha\right)\right),
\end{split}
\end{equation*}
and thus
\begin{equation}\label{eq:prob_of_individual_Ai_two_terms}
\Pr\left(\omega_iX_i > Q_F\left(1-\alpha\right), \sum_{k=1}^n \omega_kX_k\leqslant Q_F\left(1-\alpha\right)\right) = \text{I} + \text{II}
\end{equation}
In the following, we prove both terms I and II satisfy
\begin{equation*}
    \lim_{\alpha\to0^+}\sup_{\rho_{ij}\in[-\rho_0,\rho_0]}\frac{\triangle}{\sum_{i=1}^n\Pr\left(\omega_iX_i>Q_F(1-\alpha)\right)} = 0
\end{equation*}
where $\triangle$ can be either I or II. Then, combining with \eqref{eq:prob_of_individual_Ai_two_terms} finishes the proof.

(a). Estimate $\text{I}$.
\begin{equation*}
\label{eq:prob_of_Ai_term_I}
\begin{split}
\text{I} &\leqslant \Pr\left(Q_F\left(1-\alpha\right) < \omega_i X_i\leqslant(1+\delta_\alpha) Q_F\left(1-\alpha\right)\right) \\
&=\Pr\left(\omega_iX_i>Q_F\left(1-\alpha\right)\right)-\Pr\left(\omega_iX_i>\left(1+\delta_\alpha\right) Q_F\left(1-\alpha\right)\right)\\
&= \Pr\left(\omega_iX_i>Q_F\left(1-\alpha\right)\right) \left(1-\frac{\Pr\left(\omega_iX_i>(1+\delta_\alpha) Q_F\left(1-\alpha\right)\right)}{\Pr\left(\omega_iX_i>Q_F\left(1-\alpha\right)\right)}\right)
\end{split}
\end{equation*}
Since $\omega_iX_i$ is still regularly-varying distributed, with the same choice of $\delta_\alpha$ as in \Cref{lemma:neg_cor_pair_prob}, we have
\begin{equation*}
\lim_{\alpha\to0^+}\sup_{\rho_{ij}\in[-\rho_0,\rho_0]}\frac{\text{I}}{\Pr\left(\omega_iX_i>Q_F\left(1-\alpha\right)\right)}
\leqslant 1-\lim_{\alpha\to0^+}\frac{\Pr\left(\omega_iX_i>(1+\delta_\alpha) Q_F\left(1-\alpha\right)\right)}{\Pr\left(\omega_iX_i>Q_F\left(1-\alpha\right)\right)}=0~.
\end{equation*}
Accordingly,
\begin{equation}
\label{eq:prob_of_Ai_term_I_o(alpha)}
\lim_{\alpha\to0^+}\sup_{\rho_{ij}\in[-\rho_0,\rho_0]}\frac{\text{I}}{\sum_{i=1}^n\Pr\left(\omega_iX_i>Q_F\left(1-\alpha\right)\right)}=0~.
\end{equation}

(b). Estimate $\text{II}$. By union bound, the following upper bound holds for $\text{II}$:
\begin{equation}
\label{eq:prob_of_Ai_term_II}
\begin{split}
\text{II}\leqslant & \Pr\left(\omega_iX_i>(1+\delta_\alpha)Q_F\left(1-\alpha\right), \bigcup_{j \neq i}\left\{\omega_jX_j\leqslant -\frac{\delta_\alpha}{n-1} Q_F\left(1-\alpha\right)\right\}\right)\\
\leqslant &\sum_{j \neq i} \Pr\left(\omega_iX_i >\left(1+\delta_\alpha\right) Q_F\left(1-\alpha\right), \omega_jX_j\leqslant-\frac{\delta_\alpha}{n-1} Q_F\left(1-\alpha\right)\right)~.
\end{split}
\end{equation}
To get
\begin{equation}
\label{eq:prob_of_Ai_term_II_o(alpha)}
\lim_{\alpha\to0^+}\sup_{\rho_{ij}\in[-\rho_0,\rho_0]}\frac{\text{II}}{\sum_{i=1}^n\Pr\left(\omega_iX_i>Q_F\left(1-\alpha\right)\right)}=0~,
\end{equation}
it suffices to prove that for any $i\neq j$,
\begin{equation}
\label{eq:prob_I2_ij_pair}
\lim_{\alpha\to0^+}\sup_{\rho_{ij}\in[-\rho_0,\rho_0]}\frac{\Pr\left(\omega_iX_i >\left(1+\delta_\alpha\right) Q_F\left(1-\alpha\right), \omega_jX_j\leqslant-\frac{\delta_\alpha}{n-1} Q_F\left(1-\alpha\right)\right)}{\sum_{i=1}^n\Pr\left(\omega_iX_i>Q_F\left(1-\alpha\right)\right)}=0~.
\end{equation}

Case 1: $X_i$ and $X_j$ are transformed from two-sided p-values.
Define $g(x)=\Phi^{-1}\left(\frac{F(x)+1}{2}\right)$. Then, based on the definition of the quantile function, we have the following equivalence:
\begin{equation*}
\label{eq:condition_X_i_to_Z_i}
\begin{split}
\omega_iX_i >(1+\delta_\alpha) Q_F\left(1-\alpha\right)&\Leftrightarrow\left|T_i\right| > g\left(\frac{1}{\omega_i}(1+\delta_\alpha) Q_F\left(1-\alpha\right)\right) \\
\omega_jX_j\leqslant-\frac{\delta_\alpha}{n-1} Q_F\left(1-\alpha\right)&\Leftrightarrow\left|T_j\right|\leqslant g\left(-\frac{1}{\omega_j}\frac{\delta_\alpha}{n-1} Q_F\left(1-\alpha\right)\right) \\
\end{split}
\end{equation*}

Denote $\mu_i$ and $\mu_j$ the mean of $T_i$ and $T_j$.
Due to the bivariate normality assumption and $\left|\rho_{ji}\right|\leqslant\rho_0<1$, we can write $T_j-\mu_j=\rho_{ji} (T_i-\mu_i)+\gamma_{ji} Z_{ji}$, where ${\rho_{ji}}^2+{\gamma_{ji}}^2=1$ and $\sqrt{1-{\rho_0}^2}\leqslant\gamma_{ji}\leqslant1$, and $Z_{ji}$ is independent of $T_i$ and distributed from a standard normal. Then,
\begin{equation}
\begin{split}
\label{eq:prob_I2_ij_pair_two_sided_rewrite}
& \Pr\left(\omega_iX_i >\left(1+\delta_\alpha\right) Q_F\left(1-\alpha\right), \omega_jX_j\leqslant-\frac{\delta_{\alpha}}{n-1} Q_F\left(1-\alpha\right)\right)\\
=&\Pr\left(\left|T_i\right| > g\left(\frac{1}{\omega_i}(1+\delta_\alpha) Q_F\left(1-\alpha\right)\right) , \left|T_j\right|\leqslant g\left(-\frac{1}{\omega_j}\frac{\delta_\alpha}{n-1} Q_F\left(1-\alpha\right)\right)\right)\\
=&\Pr\left(\left|T_i\right| > g\left(\frac{1}{\omega_i}(1+\delta_\alpha) Q_F\left(1-\alpha\right)\right) , \left|\mu_j+\rho_{ji}(T_i-\mu_i)+\gamma_{ji}Z_{ji}\right|\leqslant g\left(-\frac{1}{\omega_j}\frac{\delta_\alpha}{n-1} Q_F\left(1-\alpha\right)\right)\right)\\
=&E\Biggl [1_{\left(\left|T_i\right| > g\left(\frac{1}{\omega_i}(1+\delta_\alpha) Q_F\left(1-\alpha\right)\right)\right)}\Biggl(\Phi\left(\frac{g\left(-\frac{1}{\omega_j}\frac{\delta_\alpha}{n-1} Q_F\left(1-\alpha\right)\right)-\mu_j-\rho_{ji}(T_i-\mu_i)}{\gamma_{ji}}\right)-\\
&\Phi\left(\frac{-g\left(-\frac{1}{\omega_j}\frac{\delta_\alpha}{n-1} Q_F\left(1-\alpha\right)\right)-\mu_j-\rho_{ji}(T_i-\mu_i)}{\gamma_{ji}}\right)\Biggl)\Biggl]\\
\leqslant&\sqrt{\frac{2}{\pi}}\frac{g\left(-\frac{1}{\omega_j}\frac{\delta_\alpha}{n-1} Q_F\left(1-\alpha\right)\right)}{\gamma_{ji}}\Pr\left(\left|T_i\right| > g\left(\frac{1}{\omega_i}(1+\delta_\alpha) Q_F\left(1-\alpha\right)\right)\right)\\
\leqslant&\sqrt{\frac{2}{\pi}}\frac{g\left(-\frac{1}{\omega_j}\frac{\delta_\alpha}{n-1} Q_F\left(1-\alpha\right)\right)}{\sqrt{1-\rho_0^2}}\Pr\left(\omega_iX_i >\left(1+\delta_\alpha\right) Q_F\left(1-\alpha\right)\right)\ ,
\end{split}
\end{equation}
where the inequality applies the mean value theorem and the fact that the density of the standard normal is upper bounded by $\frac{1}{\sqrt{2\pi}}$.

Since 
\begin{equation*}
\begin{split}
\lim_{\alpha\to0^+}g\left(-\frac{1}{\omega_j}\frac{\delta_\alpha}{n-1}  Q_F\left(1-\alpha\right)\right)=0
\end{split}
\end{equation*}
($\delta_\alpha Q_F\left(1-\alpha\right)\to\infty$, see \Cref{lemma:neg_cor_pair_prob}), \Cref{eq:prob_I2_ij_pair} can be verified by the following inequalities:
\begin{equation*}
\label{eq:prob_I2_ij_pair_dep_case}
\begin{split}
& \lim _{\alpha \rightarrow 0^+}\sup_{\rho_{ij}\in[-\rho_0,\rho_0]} \frac{\Pr\left(X_i >\frac{1}{\omega_i}\left(1+\delta_\alpha\right) Q_F\left(1-\alpha\right), X_j\leqslant-\frac{1}{\omega_j}\frac{\delta_\alpha}{n-1}  Q_F\left(1-\alpha\right)\right)}{\sum_{i=1}^n\Pr\left(\omega_iX_i >Q_F\left(1-\alpha\right)\right)} \\
\leqslant~& \lim _{\alpha \rightarrow 0^+}\sup_{\rho_{ij}\in[-\rho_0,\rho_0]}\frac{\Pr\left(X_i >\frac{1}{\omega_i}\left(1+\delta_\alpha\right) Q_F\left(1-\alpha\right), X_j\leqslant-\frac{1}{\omega_j}\frac{\delta_\alpha}{n-1}  Q_F\left(1-\alpha\right)\right)}{\Pr\left(\omega_iX_i >Q_F\left(1-\alpha\right)\right)} \\
\leqslant~& \lim _{\alpha \rightarrow 0^+} \frac{\Pr\left(X_i >\frac{1}{\omega_i}\left(1+\delta_\alpha\right) Q_F\left(1-\alpha\right)\right)}{\Pr\left(\omega_iX_i >Q_F\left(1-\alpha\right)\right)}\times\sqrt{\frac{2}{\pi}}\frac{g\left(-\frac{1}{\omega_j}\frac{\delta_\alpha}{n-1} Q_F\left(1-\alpha\right)\right)}{\sqrt{1-{\rho_0}^2}}\\
=~&\lim _{\alpha \rightarrow 0^+}\sqrt{\frac{2}{\pi}}\frac{g\left(-\frac{1}{\omega_j}\frac{\delta_\alpha}{n-1} Q_F\left(1-\alpha\right)\right)}{\sqrt{1-{\rho_0}^2}}=0
\end{split}
\end{equation*}
where the second inequality utilizes \Cref{eq:prob_I2_ij_pair_two_sided_rewrite}.

Case 2: $X_i$ and $X_j$ are transformed from one-sided p-values.
Define $h(x)=\Phi^{-1}\left(F(x)\right)$. Then, the following equivalence holds based on the definition of the quantile function:
\begin{equation*}
\label{eq:condition_X_i_to_Z_i_one}
\begin{split}
\omega_iX_i >(1+\delta_\alpha) Q_F\left(1-\alpha\right)&\Leftrightarrow T_i > h\left(\frac{1}{\omega_i}(1+\delta_\alpha) Q_F\left(1-\alpha\right)\right) \\
\omega_jX_j\leqslant-\frac{\delta_\alpha}{n-1} Q_F\left(1-\alpha\right)&\Leftrightarrow T_j\leqslant h\left(-\frac{1}{\omega_j}\frac{\delta_\alpha}{n-1} Q_F\left(1-\alpha\right)\right)\leqslant-h\left(\frac{1}{\omega_j}\frac{\delta_\alpha}{n-1} Q_F\left(1-\alpha\right)\right),\\
\end{split}
\end{equation*}
where the last inequality follows from the assumption that $\bar{F}(x)\geqslant F(-x)$ for sufficiently large $x$.
Due to the bivariate normality assumption, we can write $T_j-\mu_j=\rho_{ji} (T_i-\mu_i)+\gamma_{ji} Z_{ji}$, where ${\rho_{ji}}^2+{\gamma_{ji}}^2=1$, $|\rho_{ji}|\leqslant\rho_0$, and $Z_{ji}$ is independent of $T_i$ and distributed from a standard normal.

Without loss of generality, we can assume $\gamma_{ji}>0$ and hence $\sqrt{1-{\rho_0}^2}\leqslant\gamma_{ji}\leqslant1$. 

If $0<\rho_{ji}\leqslant\rho_0$, when $\alpha$ is sufficiently small,
\begin{equation*}
\rho_{ji}h\left(\frac{1}{\omega_i}(1+\delta_\alpha) Q_F\left(1-\alpha\right)\right)-\rho_{ji}\mu_i+\mu_j>0.
\end{equation*}
Then,
\begin{equation*}
\label{eq:rewrite_Z_ji_one}
Z_{ji}=\frac{T_j-\mu_j-\rho_{ji} \left(T_i-\mu_i\right)}{\gamma_{ji}}
<-\frac{h\left(\frac{1}{\omega_j}\frac{\delta_\alpha}{n-1} Q_F\left(1-\alpha\right)\right)}{\gamma_{ji}}
\leqslant-h\left(\frac{1}{\omega_j}\frac{\delta_\alpha}{n-1} Q_F\left(1-\alpha\right)\right).
\end{equation*}
As $\alpha\to0^+$, since $h\left(\frac{1}{\omega_j}\frac{\delta_\alpha}{n-1}  Q_F\left(1-\alpha\right)\right)\to\infty$, $-h\left(\frac{1}{\omega_j}\frac{\delta_\alpha}{n-1} Q_F\left(1-\alpha\right)\right)$ goes to $-\infty$.
Then,
\begin{equation*}
\label{eq:prob_I2_ij_pair_dep_case_one}
\begin{split}
& \lim _{\alpha \rightarrow 0^+}\sup_{\rho_{ij}\in[-\rho_0,\rho_0]} \frac{\Pr\left(\omega_iX_i >\left(1+\delta_\alpha\right) Q_F\left(1-\alpha\right),\ \omega_jX_j\leqslant-\frac{\delta_\alpha}{n-1}  Q_F\left(1-\alpha\right)\right)}{\sum_{i=1}^n\Pr\left(\omega_iX_i >Q_F\left(1-\alpha\right)\right)} \\
\leqslant~& \lim _{\alpha \rightarrow 0^+}\sup_{\rho_{ij}\in[-\rho_0,\rho_0]} \frac{\Pr\left(\omega_iX_i >\left(1+\delta_\alpha\right) Q_F\left(1-\alpha\right),\ \omega_jX_j\leqslant-\frac{\delta_\alpha}{n-1}  Q_F\left(1-\alpha\right)\right)}{\Pr\left(\omega_iX_i >Q_F\left(1-\alpha\right)\right)} \\
\leqslant~& \lim _{\alpha \rightarrow 0^+}\sup_{\rho_{ij}\in[-\rho_0,\rho_0]}  \frac{\Pr\left(\omega_iX_i >\left(1+\delta_\alpha\right) Q_F\left(1-\alpha\right),\ Z_{ji}<-h\left(\frac{1}{\omega_j}\frac{\delta_\alpha}{n-1} Q_F\left(1-\alpha\right)\right)\right)}{\Pr\left(\omega_iX_i > Q_F\left(1-\alpha\right)\right)}\\
=~&\lim _{\alpha \rightarrow 0^+} \frac{\Pr\left(\omega_iX_i >\left(1+\delta_\alpha\right) Q_F\left(1-\alpha\right)\right)}{\Pr\left(\omega_iX_i >Q_F\left(1-\alpha\right)\right)} \Pr\left(Z_{ji}<-h\left(\frac{1}{\omega_j}\frac{\delta_\alpha}{n-1} Q_F\left(1-\alpha\right)\right)\right)\\
=&\lim _{\alpha \rightarrow 0^+} \Pr\left(Z_{ji}<-h\left(\frac{1}{\omega_j}\frac{\delta_\alpha}{n-1} Q_F\left(1-\alpha\right)\right)\right)=0\ .
\end{split}
\end{equation*}

If $-1<\rho_{ji}<0$, with $1 = \lim _{\alpha \rightarrow 0^+}\frac{\Pr\left(X_i >\left(1+\delta_\alpha\right) Q_F\left(1-\alpha\right)\right)}{\alpha}=\lim _{\alpha \rightarrow 0^+} \frac{\Pr\left(T_i > h\left(\left(1+\delta_\alpha\right) Q_F\left(1-\alpha\right)\right)\right)}{\alpha}$, we have

\begin{equation*}
\begin{split}
& \lim _{\alpha \rightarrow 0^+}\sup_{\rho_{ij}\in[-\rho_0,\rho_0]} \frac{\Pr\left(\omega_iX_i >\left(1+\delta_\alpha\right) Q_F\left(1-\alpha\right),\ \omega_jX_j\leqslant-\frac{\delta_\alpha}{n-1}  Q_F\left(1-\alpha\right)\right)}{\sum_{i=1}^n\Pr\left(\omega_iX_i >Q_F\left(1-\alpha\right)\right)} \\
\leqslant~& \lim _{\alpha \rightarrow 0^+}\sup_{\rho_{ij}\in[-\rho_0,\rho_0]} \frac{\Pr\left(\omega_iX_i >\left(1+\delta_\alpha\right) Q_F\left(1-\alpha\right),\ \omega_jX_j\leqslant-\frac{\delta_\alpha}{n-1}  Q_F\left(1-\alpha\right)\right)}{\Pr\left(\omega_iX_i >Q_F\left(1-\alpha\right)\right)} \\
=~& \lim _{\alpha \rightarrow 0^+}\sup_{\rho_{ij}\in[-\rho_0,\rho_0]} \frac{\Pr\left(T_i > h\left(\frac{1}{\omega_i}\left(1+\delta_\alpha\right) Q_F\left(1-\alpha\right)\right),-T_j\geqslant h\left(\frac{1}{\omega_j}\frac{\delta_\alpha}{n-1}  Q_F\left(1-\alpha\right)\right)\right)}{\Pr\left(T_i > h\left(\frac{1}{\omega_j}\left(1+\delta_\alpha\right) Q_F\left(1-\alpha\right)\right)\right)}\\ 
=~& \lim _{\alpha \rightarrow 0^+}\sup_{\rho_{ij}\in[-\rho_0,\rho_0]}\Pr\left(-T_j\geqslant h\left(\frac{1}{\omega_j}\frac{\delta_\alpha}{n-1}  Q_F\left(1-\alpha\right)\right)~\big|~ T_i > h\left(\frac{1}{\omega_i}\left(1+\delta_\alpha\right) Q_F\left(1-\alpha\right)\right)\right)= \ 0 \ ,
\end{split}
\end{equation*}

where the last equality is due to the part (iv) in \Cref{lemma:neg_cor_pair_prob} and $T_i$ and $-T_j$ are positively dependent and bivariate-normally distributed. Hence, \eqref{eq:prob_I2_ij_pair} also holds for this case.

Combining Case 1 and 2, \eqref{eq:prob_I2_ij_pair} holds, and accordingly, \eqref{eq:diff_combination_bon_prob} holds by aggregating \eqref{eq:prob_of_individual_Ai_two_terms},\eqref{eq:prob_of_Ai_term_I_o(alpha)}, and \eqref{eq:prob_of_Ai_term_II_o(alpha)}.
\end{proof}

\subsection{Other theoretical results}
\begin{restatable}{proposition}{LeftTruncationCDF}
\label{prop:left_truncated_cauchy_cdf}
The left-truncated t distribution belongs to the regularly varying tailed class $\mathscr{R}$. Furthermore, its tail index $\gamma$ equals the degree of freedom of original t distribution.
\end{restatable}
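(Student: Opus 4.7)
The plan is to reduce the claim to the well-known fact that the Student's $t$ distribution with degree of freedom $\gamma$ belongs to $\mathscr{R}_{-\gamma}$. The key observation is that left-truncation only multiplies the survival function by a constant (depending on $c$, not $x$), so the limit defining regular variation is unaffected.

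First, I would write down the survival function of the left-truncated $t$ distribution explicitly. Starting from the CDF
\[
F(x) = \frac{F_{t,\gamma}(x) - F_{t,\gamma}(c)}{1 - F_{t,\gamma}(c)}, \quad x \geqslant c,
\]
one obtains
\[
\bar{F}(x) = 1 - F(x) = \frac{\bar{F}_{t,\gamma}(x)}{\bar{F}_{t,\gamma}(c)}, \quad x \geqslant c,
\]
where $\bar{F}_{t,\gamma} = 1 - F_{t,\gamma}$. Since $c$ is fixed, $\bar{F}_{t,\gamma}(c)$ is a positive constant that cancels when we form ratios of $\bar{F}$ evaluated at different arguments.

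Next, I would verify the definition of $\mathscr{R}_{-\gamma}$ from \Cref{def:classR}. For any $y > 0$, since $xy \geqslant c$ and $x \geqslant c$ for $x$ sufficiently large,
\[
\lim_{x \to +\infty} \frac{\bar{F}(xy)}{\bar{F}(x)} = \lim_{x \to +\infty} \frac{\bar{F}_{t,\gamma}(xy)/\bar{F}_{t,\gamma}(c)}{\bar{F}_{t,\gamma}(x)/\bar{F}_{t,\gamma}(c)} = \lim_{x \to +\infty} \frac{\bar{F}_{t,\gamma}(xy)}{\bar{F}_{t,\gamma}(x)} = y^{-\gamma},
\]
where the last equality uses that the $t$ distribution with degree of freedom $\gamma$ is in $\mathscr{R}_{-\gamma}$ (which follows from its density being asymptotically proportional to $|x|^{-(\gamma+1)}$, hence its tail is proportional to $|x|^{-\gamma}$ at infinity). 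This simultaneously establishes that the left-truncated $t$ distribution lies in $\mathscr{R}$ and that its tail index equals $\gamma$.

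There is no substantive obstacle here since truncation from the left does not alter behavior at $+\infty$ beyond a multiplicative constant; the only thing to be mildly careful about is ensuring $xy \geqslant c$ when taking the limit (automatically true for all $x$ large enough), and invoking the standard tail asymptotics of the $t$ distribution, which can be cited or derived in one line via the explicit density $f_{t,\gamma}(x) \propto (1 + x^2/\gamma)^{-(\gamma+1)/2}$.
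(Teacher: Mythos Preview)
Your proposal is correct and follows essentially the same approach as the paper: both compute $\bar F(x)=\bar F_{t,\gamma}(x)/\bar F_{t,\gamma}(c)$, note that the constant $\bar F_{t,\gamma}(c)$ cancels in the ratio $\bar F(xy)/\bar F(x)$, and then invoke the regular variation of the original $t$ distribution to obtain the limit $y^{-\gamma}$. Your version adds the minor but helpful remark that $xy\geqslant c$ for large $x$ and a one-line justification of why $F_{t,\gamma}\in\mathscr{R}_{-\gamma}$, but the substance is identical.
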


\begin{proof}
Denote $X$ a random variable distributed from the student t distribution with degree of freedom $\gamma$ and $F_{t, \gamma}(x)$ its cumulative distribution function. Denote $c$ the lower bound of $X$, then the left-truncated t distribution has a cumulative distribution function:
\begin{equation*}
F(x) = \Pr(X\leqslant x \mid X\geqslant c)=\frac{F_{t, \gamma}(x) - F_{t, \gamma}(c)}{1-F_{t, \gamma}(c)},\quad x\geqslant c.
\end{equation*}
By the definition of the regularly varying tailed class,
\begin{equation*}
\begin{split}
\lim_{x\to+\infty}\frac{\bar F(xy)}{\bar F(x)}=\lim_{x\to+\infty}\frac{1-\frac{F_{t, \gamma}(xy) - F_{t, \gamma}(c)}{1-F_{t, \gamma}(c)}}{1-\frac{F_{t, \gamma}(x) - F_{t, \gamma}(c)}{1-F_{t, \gamma}(c)}}=\lim_{x\to+\infty}\frac{\bar F_{t, \gamma}(xy)}{\bar F_{t, \gamma}(x)}=y^{-\gamma}.
\end{split}
\end{equation*}
Then the proposition follows.
\end{proof}

\begin{restatable}{proposition}{PositivePValues}
\label{prop:two_sided_p_vals_pos_dep}
Suppose $(X,Y)$ is distributed from a bivariate normal with mean $\mu=(0,0)$ and covariance matrix 
$$\Sigma=\left(
\begin{array}{cc}
   1  &  \rho\\
    \rho & 1
\end{array}\right).$$
Then, the following hold:
\begin{enumerate}[label=(\roman*)]
\item \textrm{cov}$(p_1(X),p_1(Y))$ has the same sign as $\rho$ with $p_1(\cdot)=1-\Phi(\cdot)$
\item \textrm{cov}$(p_2(X),p_2(Y))\geqslant0$ with $p_2(\cdot)=2\left(1-\Phi(|\cdot|)\right)$
\end{enumerate}
\end{restatable}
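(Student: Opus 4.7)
The plan is to handle (i) by an exact computation via Sheppard's orthant-probability identity and to handle (ii) by a conditioning argument that exploits the evenness of $p_2$.

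For (i), I would introduce two auxiliary independent $Z_1, Z_2 \sim N(0,1)$, independent of $(X,Y)$, so that $p_1(X) = \Pr(Z_1 > X \mid X)$ and $p_1(Y) = \Pr(Z_2 > Y \mid Y)$. Then
\begin{equation*}
E[p_1(X)\, p_1(Y)] \;=\; \Pr(Z_1 - X > 0,\ Z_2 - Y > 0).
\end{equation*}
The vector $(Z_1 - X,\ Z_2 - Y)$ is centered bivariate normal with marginal variance $2$ and covariance $\rho$, hence with correlation $\rho/2$. By Sheppard's orthant formula,
\begin{equation*}
\Pr(Z_1 - X > 0,\ Z_2 - Y > 0) \;=\; \tfrac{1}{4} + \tfrac{1}{2\pi}\arcsin(\rho/2).
\end{equation*}
Since $E[p_1(X)] = E[p_1(Y)] = 1/2$, this yields $\mathrm{cov}(p_1(X), p_1(Y)) = \arcsin(\rho/2)/(2\pi)$, whose sign matches that of $\rho$.

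For (ii), I would first exploit that $p_2$ is even. Since $(X, -Y)$ is bivariate normal with correlation $-\rho$ and $p_2(-Y) = p_2(Y)$, the covariance is invariant under $\rho \mapsto -\rho$, so it suffices to treat $\rho \in [0, 1]$. For such $\rho$, decompose
\begin{equation*}
X \;=\; \sqrt{\rho}\, W + \sqrt{1-\rho}\, U, \qquad Y \;=\; \sqrt{\rho}\, W + \sqrt{1-\rho}\, V,
\end{equation*}
with $W, U, V$ independent standard normals. Conditional on $W$, the variables $X$ and $Y$ are independent and identically distributed as $N(\sqrt{\rho}\, W,\ 1 - \rho)$, so $E[p_2(X) \mid W] = E[p_2(Y) \mid W] =: \psi(W)$. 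The law of total covariance then gives
\begin{equation*}
\mathrm{cov}(p_2(X), p_2(Y)) \;=\; E\bigl[\mathrm{cov}(p_2(X), p_2(Y) \mid W)\bigr] + \mathrm{cov}(\psi(W), \psi(W)) \;=\; 0 + \mathrm{var}(\psi(W)) \;\geqslant\; 0.
\end{equation*}

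The main subtlety in (i) is recognizing the reduction to Sheppard's orthant formula via the auxiliary randomization; an alternative route would be to observe that $E[p_1(Y) \mid X]$ is a monotone function of $X$ whose direction of monotonicity is opposite to $\mathrm{sign}(\rho)$, so that $p_1(X)$ and $E[p_1(Y) \mid X]$ are co-monotone precisely when $\rho > 0$, and then invoke that two monotone functions of a single scalar have a covariance of definite sign. For (ii), the key insight is the evenness-driven reduction to $\rho \geqslant 0$; once this is in place, the conditional-iid structure immediately expresses the covariance as a variance, which is manifestly nonnegative.
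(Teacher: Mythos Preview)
Your proof is correct for both parts, and in both cases you take a genuinely different route from the paper.

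For part~(i), the paper proceeds by writing $Y=\rho X+\sqrt{1-\rho^2}\,Z$, defining $\Lambda(x)=E[\Phi(Y)\mid X=x]$, and then using the ``copy trick'' $\mathrm{cov}(\Phi(X),\Lambda(X))=\tfrac12 E[(\Phi(X)-\Phi(W))(\Lambda(X)-\Lambda(W))]$ together with monotonicity of $\Phi$ and $\Lambda$ to determine the sign. This is exactly the alternative you sketch at the end. Your primary argument via Sheppard's orthant formula is cleaner: it delivers the exact value $\arcsin(\rho/2)/(2\pi)$, from which the sign is immediate, whereas the paper's argument only establishes the sign.

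For part~(ii), the paper uses Hoeffding's covariance identity to write $\mathrm{cov}(\Phi(|X|),\Phi(|Y|))$ as an integral of $\Pr(|X|\le a,|Y|\le b)-\Pr(|X|\le a)\Pr(|Y|\le b)$, and then invokes the Gaussian correlation inequality (for symmetric convex slabs) to conclude each integrand is nonnegative. Your approach---reduce to $\rho\ge 0$ by evenness, use the factor representation $X=\sqrt{\rho}\,W+\sqrt{1-\rho}\,U$, $Y=\sqrt{\rho}\,W+\sqrt{1-\rho}\,V$, and apply the law of total covariance to exhibit the covariance as $\mathrm{var}(\psi(W))$---is considerably more elementary: it avoids both Hoeffding's identity and the Gaussian correlation inequality, relying only on conditional independence and the nonnegativity of a variance. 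The paper's route is more general in spirit (it would extend to other even functions and other elliptical distributions where the correlation inequality is available), but for the stated proposition your argument is shorter and self-contained.
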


\begin{proof}
(i) We first rewrite the covariance of $p_1(X)$ and $p_1(Y)$ as follows:
\begin{equation}
\label{eq:cov_p_val_oneside}
\mathrm{cov}\left(p_1(X),p_1(Y)\right)=\mathrm{cov}\left(1-\Phi(X),1-\Phi(Y)\right)\\
=\mathrm{cov}\left(\Phi(X),\Phi(Y)\right)
\end{equation}

When $\rho=0$, $X$ and $Y$ are independent, and hence, $p_1(X)$ and $p_1(Y)$ are independent. Then $\textrm{cov}\left(p_1(X),p_1(Y)\right)=0$. 

When $\rho\neq0$, we rewrite $Y$ as :
$$Y=\rho X+\sqrt{1-\rho^2}Z,$$
where $Z$ is a standard normally distributed random variable independent of $X$. Define $\Lambda(X)=E\left(\Phi(\rho X+\sqrt{1-\rho^2}Z)\mid X\right)$. Then,
\begin{equation*}
\begin{split}
\mathrm{cov}\left(p_1(X),p_1(Y)\right)
=&\mathrm{cov}\left(\Phi(X),\Phi(Y)\right)=\mathrm{cov}\left(\Phi(X),\Phi(\rho X+\sqrt{1-\rho^2}Z)\right)\\
=&E\left(\Phi(X)\Phi(\rho X+\sqrt{1-\rho^2}Z)\right)-\frac{1}{4}\\
=&E\left[\Phi(X)\times E\left(\Phi(\rho X+\sqrt{1-\rho^2}Z)\mid X\right)\right]-\frac{1}{4}\\
=&E\left(\Phi(X)\Lambda(X)\right)-\frac{1}{4}=\mathrm{cov}\left(\Phi(X),\Lambda(X)\right)
\end{split}
\end{equation*}
Suppose $W$ is another random variable sampled independently from the identical distribution of $X$. Then the covariance between $p_1(X)$ and $p_1(Y)$ can be rewritten as:
\begin{equation*}
\mathrm{cov}\left(p_1(X),p_1(Y)\right)=\mathrm{cov}\left(\Phi(X),\Lambda(X)\right)=\frac{1}{2}E\left[\left(\Phi(X)-\Phi(W)\right)\times\left(\Lambda(X)-\Lambda(W)\right)\right].
\end{equation*}
When $\rho$ is positive, both $\Phi(\cdot)$ and $\Lambda(\cdot)$ are increasing. Then, $\left(\Phi(X)-\Phi(W)\right)\times\left(\Lambda(X)-\Lambda(W)\right)$ is always non-negative. Accordingly, the covariance is always positive. When $\rho$ is negative, $\Phi(\cdot)$ is increasing and $\Lambda(\cdot)$ is decreasing and hence $\left(\Phi(X)-\Phi(W)\right)\times\left(\Lambda(X)-\Lambda(W)\right)$ is always non-positive. As a result, the covariance is always negative. In a word, the covariance shares the same sign as $\rho$, which finishes the proof of (i).

(ii) Notice that the covariance of $p_2(X)$ and $p_2(Y)$ can be rewritten as
\begin{equation}
\label{eq:cov_p_val}
\begin{split}
\mathrm{cov}\left(p_2(X),p_2(Y)\right)=& \ \mathrm{cov}\left(2\left(1-\Phi(|X|)\right),2\left(1-\Phi(|Y|)\right)\right)\\
=& \ 4\mathrm{cov}\left(\Phi(|X|),\Phi(|Y|)\right).
\end{split}
\end{equation}
Hence, it suffices to consider the sign of $\mathrm{cov}(\Phi(|X|),\Phi(|Y|))$ (which is of the same sign with  $\mathrm{cov}(p_2(X),p_2(Y))$). By Hoeffding's covariance identity, this covariance can be rewritten as:
{\small
\begin{equation*}
\begin{split}
\mathrm{cov}(\Phi(|X|),\Phi(|Y|))=&\int_0^1\int_0^1\left(\Pr\left(\Phi(|X|)\leqslant u,\Phi(|Y|)\leqslant v\right)-\Pr\left(\Phi(|X|)\leqslant u\right)\Pr\left(\Phi(|Y|)\leqslant v\right)\right)dudv\\
=&\int_0^1\int_0^1\left(\Pr\left(|X|\leqslant \Phi^{-1}(u),|Y|\leqslant \Phi^{-1}(v)\right)-\Pr\left(|X|\leqslant \Phi^{-1}(u)\right)\Pr\left(|Y|\leqslant \Phi^{-1}(v)\right)\right)dudv.
\end{split}
\end{equation*}
}
{Since for any fixed $u$ and fixed $v$, sets $G=\{(x,y)\in\mathbb{R}^2:-\Phi^{-1}(u)\leqslant x\leqslant\Phi^{-1}(u)\}$ and $F=\{(x,y)\in\mathbb{R}^2:-\Phi^{-1}(v)\leqslant y\leqslant\Phi^{-1}(v)\}$ are convex and symmetric about the origin, on the basis of the Gaussian correlation inequality, it holds that
\begin{equation}
\label{eq:gaussian_correlation_ineq}
\mu\left(G\cup F\right)\geqslant\mu(G)\times\mu(F),   
\end{equation}
where $\mu$ is the probability measure defined by the bivariate normal distribution of $(X,Y)$. \Cref{eq:gaussian_correlation_ineq} is equivalent to 
$$\Pr\left(|X|\leqslant \Phi^{-1}(u),|Y|\leqslant \Phi^{-1}(v)\right)-\Pr\left(|X|\leqslant \Phi^{-1}(u)\right)\Pr\left(|Y|\leqslant \Phi^{-1}(v)\right)\geqslant0.$$
Hence, \Cref{eq:cov_p_val} is also non-negative and so is the covariance of $p_2(X)$ and $p_2(Y)$.}
\end{proof}

\newpage
\section{Supplementary figures and tables}
\label{sec:more_sim_results}

\begin{figure}[ht]
     \centering
     \begin{subfigure}[h]{\textwidth}
         \centering
         \includegraphics[width=\textwidth]{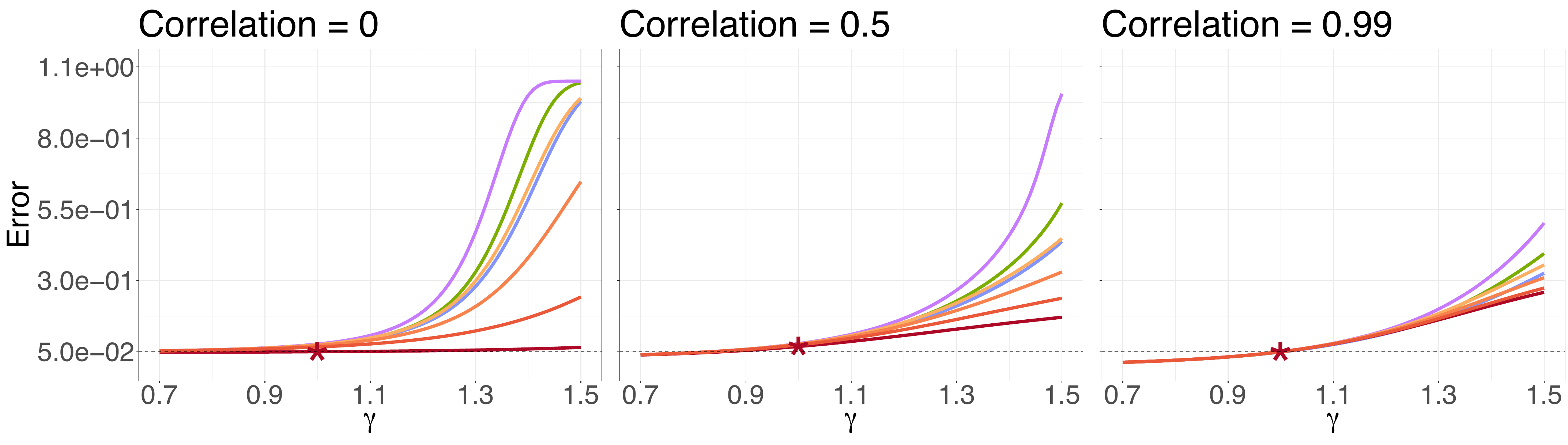}
         \caption{$\alpha=0.05$}
         \label{fig:error_100_0.05}
     \end{subfigure}
     \hfill
     \begin{subfigure}[h]{\textwidth}
         \centering
         \includegraphics[width=\textwidth]{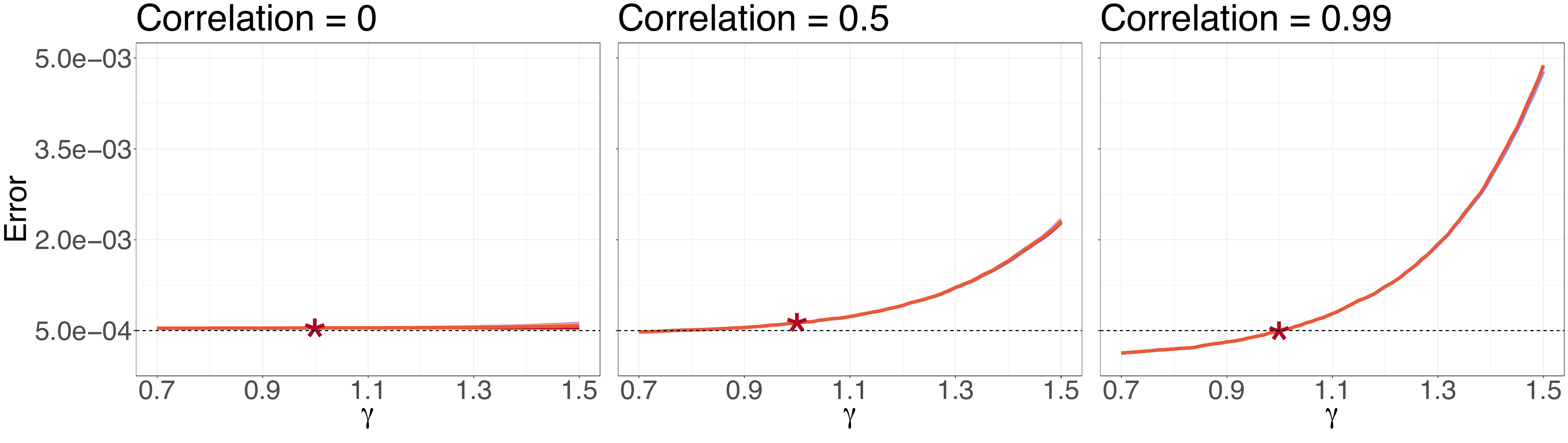}
         \caption{$\alpha=5\times10^{-4}$}
         \label{fig:error_100_5e-4}
     \end{subfigure}
\caption{The type-I error of the combination test when $n = 100$ with different distributions: Cauchy (star point), inverse Gamma (blue), Fr\'echet (green), Pareto (purple), student t (red), left-truncated t with truncation threshold $p_0=0.9$ (dark orange), left-truncated t with truncation threshold $p_0=0.7$ (orange), left-truncated t with truncation treshold $p_0=0.5$ (light orange). The vertical axis represents the empirical type-I error, and the horizontal axis stands for the tail index $\gamma$.}
\label{fig:n_100_gamma_vs_error}
\end{figure}

\vspace*{\fill}

\newpage
\vspace*{\fill}
\begin{figure}[ht!]
     \centering
     \begin{subfigure}[h]{\textwidth}
         \centering
         \includegraphics[width=\textwidth]{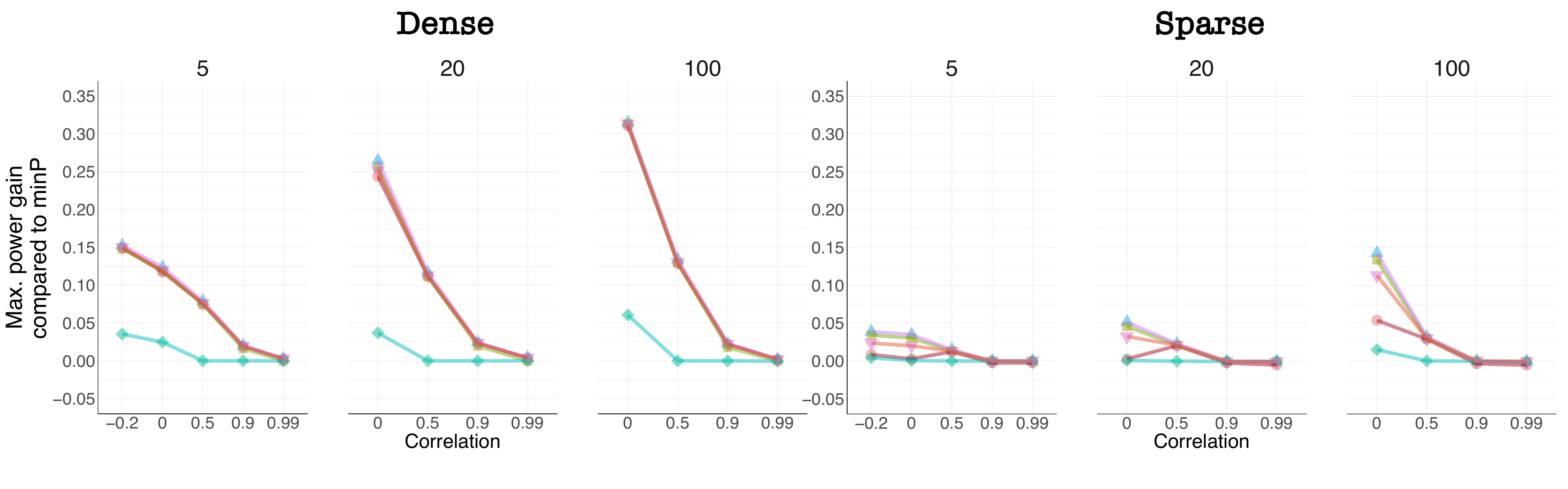}
         \caption{$\alpha=0.05$}
         \label{fig:power_0.05_with_correction}
     \end{subfigure}
     \hfill
     \begin{subfigure}[h]{\textwidth}
         \centering
         \includegraphics[width=\textwidth]{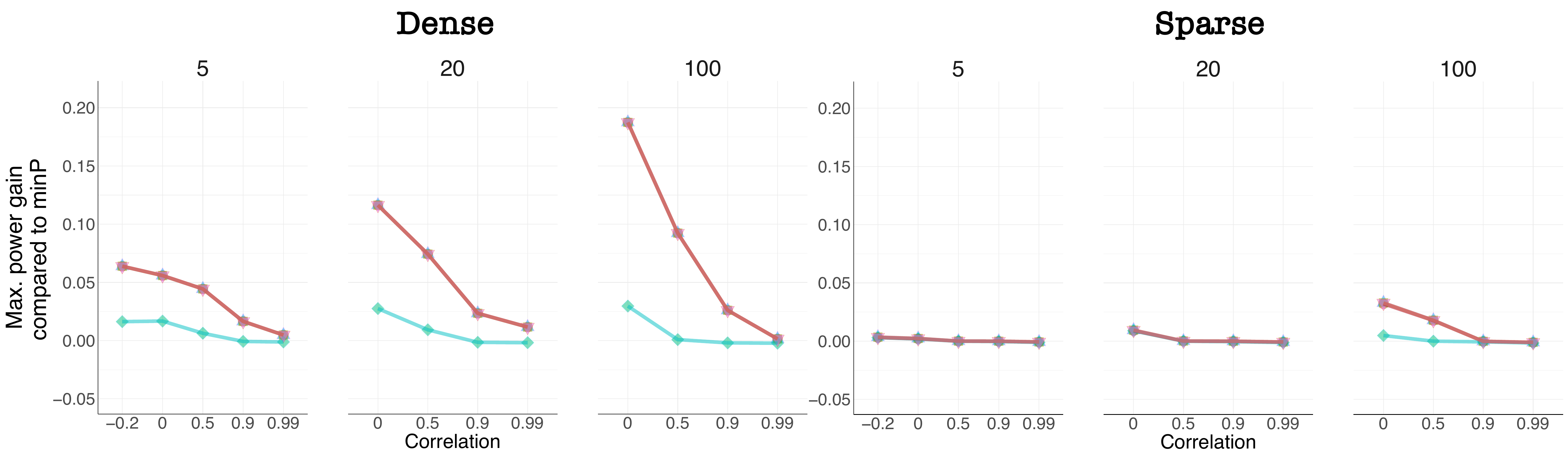}
         \caption{$\alpha=5\times10^{-4}$}
         \label{fig:power_5e-4_with_correction}
     \end{subfigure}
\caption{Power comparison with the minP test of the combination test with different distributions: Cauchy (red with round dot), Fr\'echet $\gamma=1$ (green with square dot), Pareto $\gamma=1$ (purple with triangular dot), left-truncated $t_1$ with truncation threshold $p_0=0.9$ (dark orange with inverted-triangle dot). Left plots correspond to dense signals, and right plots correspond to sparse signals. The maximum power gain is defined as the maximum of the empirical power difference between the proposed test and the Bonferroni test over all possible values of $\mu$.}
\label{fig:power_gain_with_correction}
\end{figure}
\vspace*{\fill}

\newpage
\vspace*{\fill}
\begin{figure}[ht!]
    \centering
    \includegraphics[width=\textwidth]{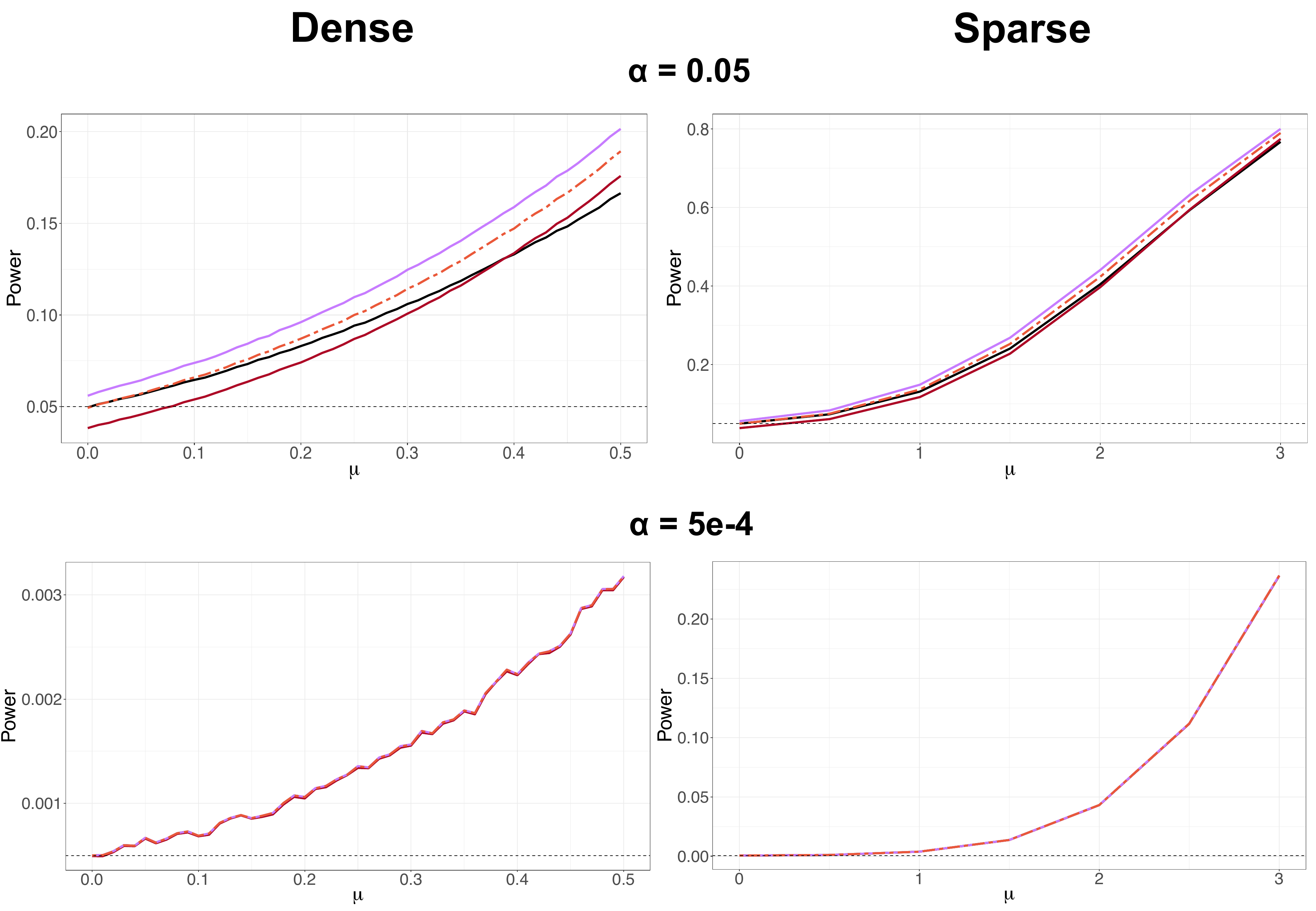} 
    \caption{Comparison of power (recall) when type-I error is controlled at level $\alpha=0.05$ and $5\times 10^{-4}$ of different methods: Bonferroni's test (black solid), Cauchy combination test (red solid), left-truncated $t_1$ with truncation level $p_0=0.9$ combination test (red dotted), and Pareto or Fr\'echet $\gamma=1$ combination test (purple solid). The number of base hypotheses is 5. Base p-values are one-sided p-values converted from multivariate z-scores with the mean $(\vec{0}_4,\mu)$ (to simulate sparse signals) and the mean $\vec{\mu}_5$ (to simulate dense signals). The common correlation $\rho=-0.2$.}
\label{fig:sparse_and_dense_power}
\end{figure}
\vspace*{\fill}

\newpage
\vspace*{\fill}
\begin{figure}[ht]
    \centering
    \includegraphics[width=0.8\textwidth]{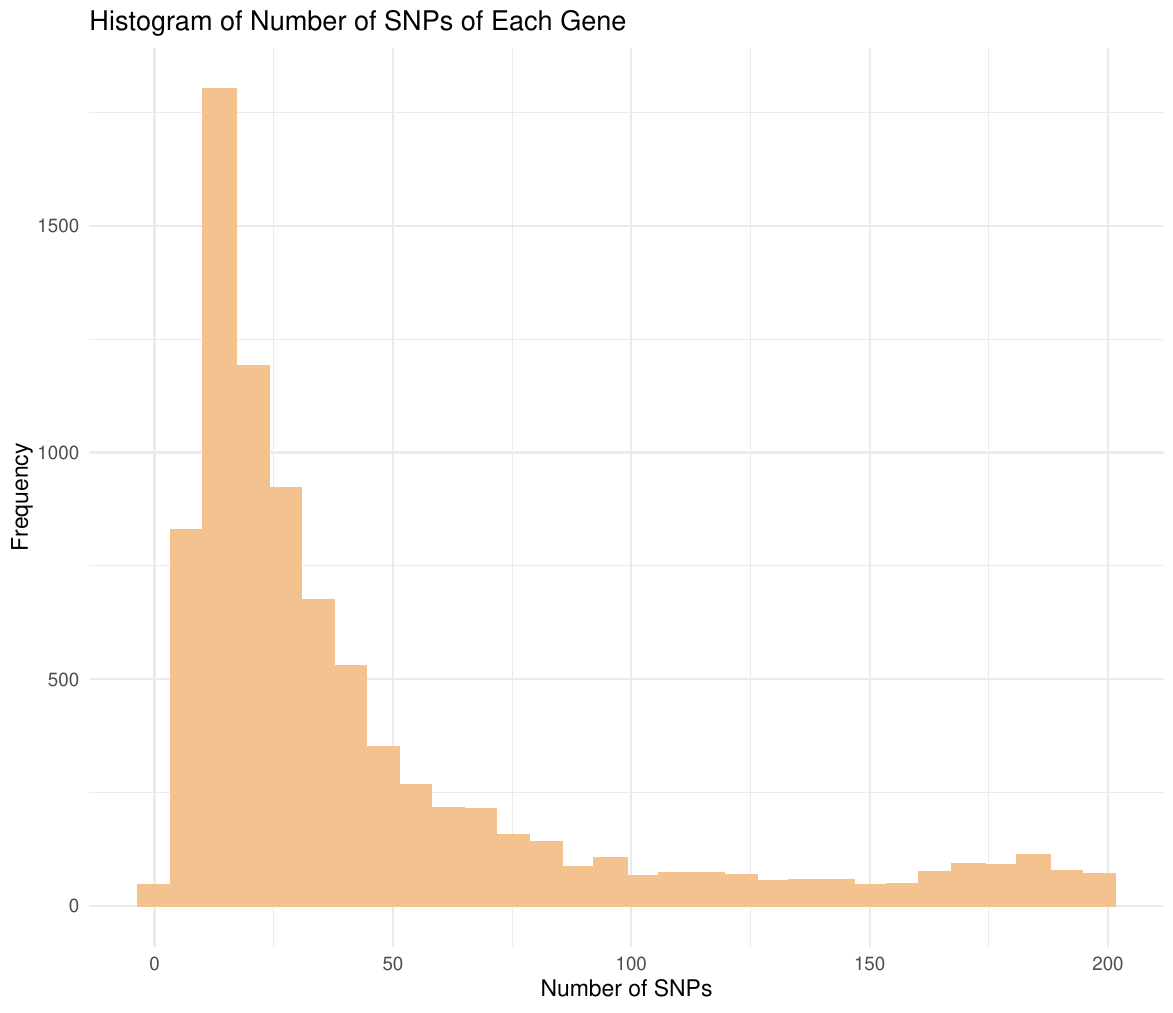} 
    \caption{The numbers of SNPs of all genes are smaller than 200. More than half of these numbers are smaller than 50.}
\label{fig:snps_number}
\end{figure}
\vspace*{\fill}

\newpage
\vspace*{\fill}
\begin{figure}[ht]
\centering
\includegraphics[width=\textwidth]{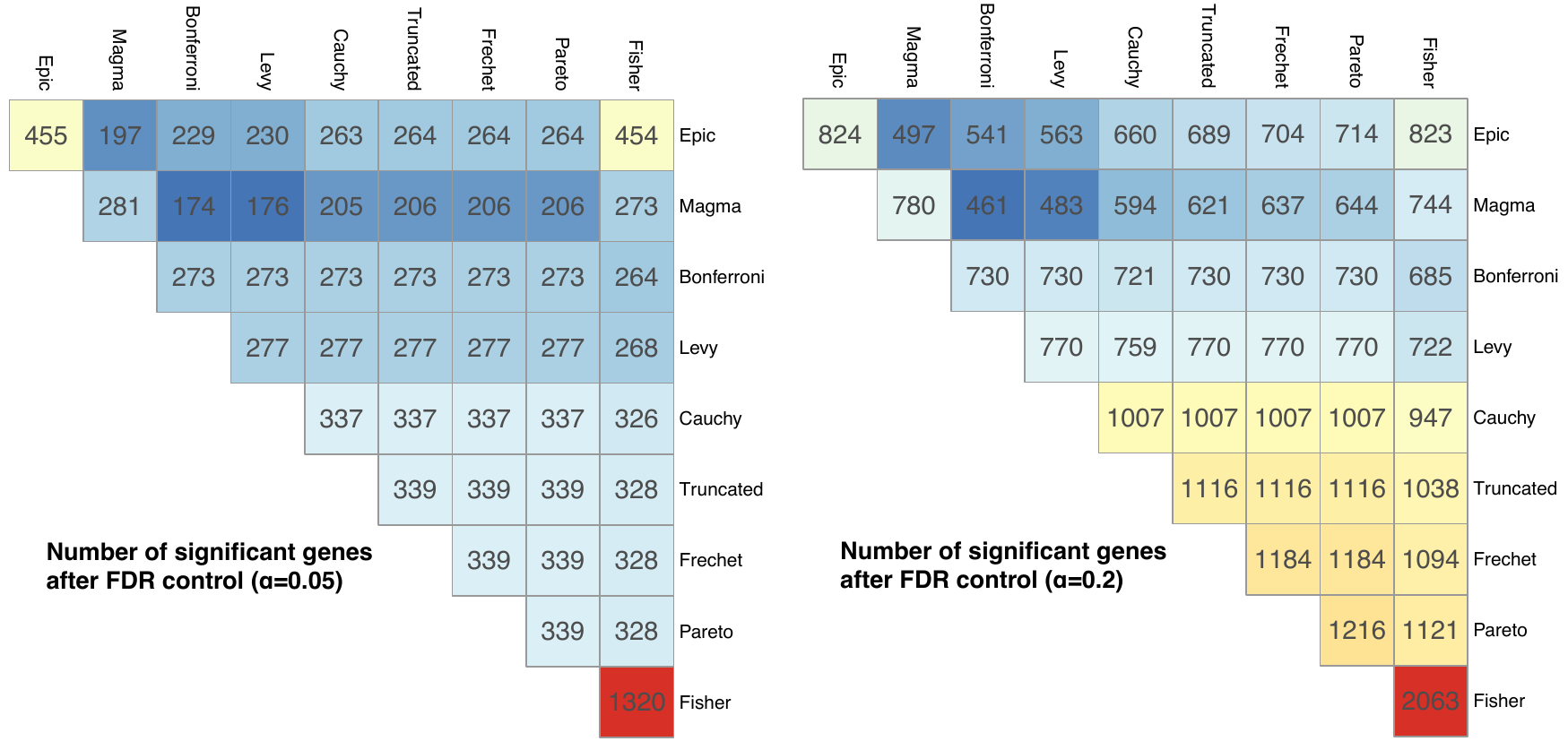} 
\caption{Number of significant genes for gene-level association testing combining SNP-level p-values when considering the subset of genes with at most $50$ associated SNPs. Diagonal values indicate the number of significant genes identified by each method; upper-triangular values indicate the number of overlapping discoveries between each pair of methods. Background colors correspond to the logarithms of the numbers. ``Truncated'' refers to the truncated $t_1$ distribution with truncation threshold $p_0 = 0.9$. For Fr\'echet and Pareto distributions, the tail index is set to $\gamma = 1$.}
\label{fig:gwas_sig_gene_less_snps}
\end{figure}
\vspace*{\fill}

\newpage
\vspace*{\fill}
\begin{figure}[ht]
\centering
\includegraphics[width=\textwidth]{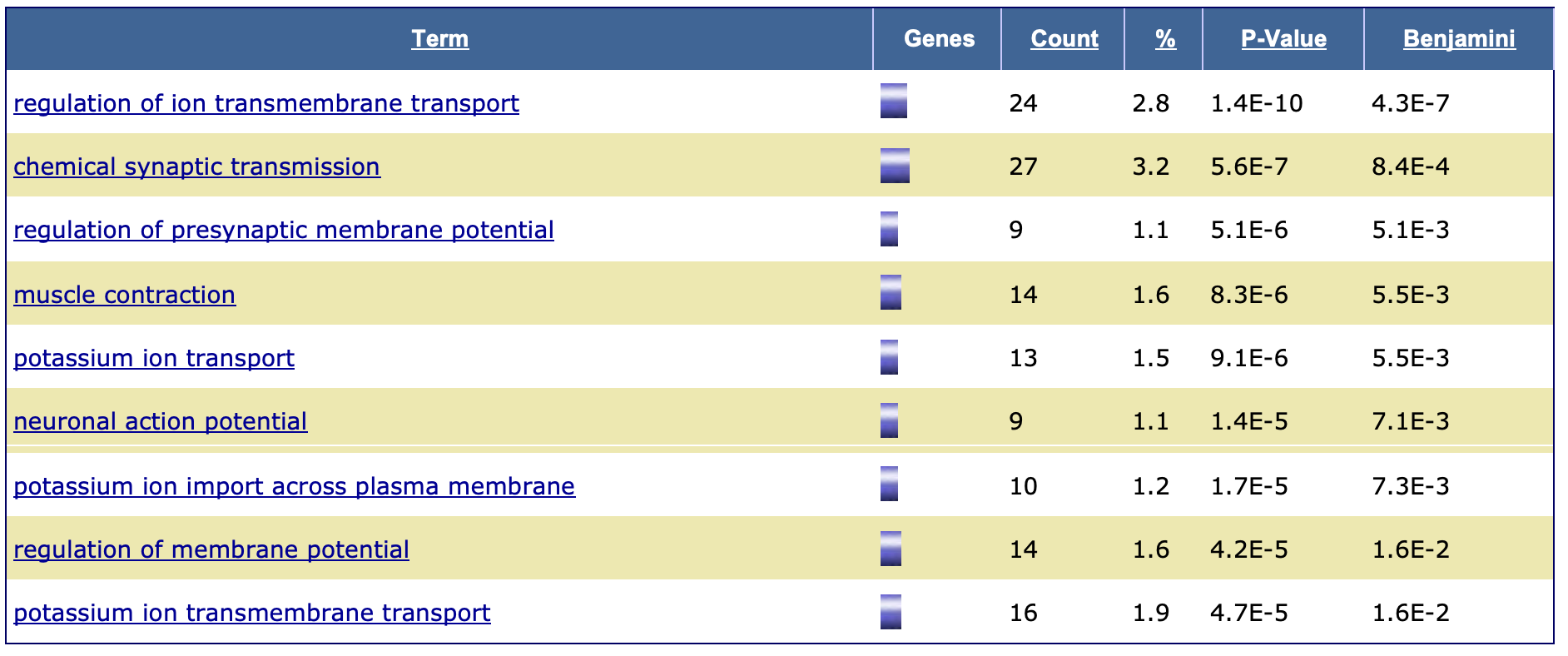} 
\scriptsize
\caption{Gene set enrichment analysis using genes significantly associated with schizophrenia (SCZ) from GWAS. Because a relatively large number of genes are needed to reach significance from the gene set enrichment analysis, we set the genome-wide FDR significance threshold to be 0.2 and included 939 genes that are detected by Cauchy/Fr\'echet/Pareto but not by Bonferroni. The enriched gene ontology terms are in agreement with previous studies and reports on SCZ: ion transporter pathway \citep{liu2022impact}, synaptic transmission \citep{favalli2012role}, and potassium ion transmembrane transport \citep{romme2017connectome}.}
\label{tab:heavy_tail_class_with_dep_condition}
\end{figure}
\vspace*{\fill}

\newpage
\vspace*{\fill}
\begin{table}[ht]
\caption{Type-I error control of the combination tests when test statistics follow multivariate t distribution when $n = 100$. Values inside the parentheses
are the corresponding standard errors. For the Fr\'echet
and Pareto distributions, they are the corresponding distribution with tail index $\gamma = 1$}
\begin{subtable}{\linewidth}
\centering
\scriptsize
\caption{$\alpha=5\times10^{-2}$}
\scalebox{1}{\begin{tabular}{cc|ccccccc}
\hline
& & \multicolumn{7}{c}{Distributions} \\
$\rho$ & $C^t_{\nu.\rho}$ & \multicolumn{1}{c}{Cauchy} & \multicolumn{1}{c}{Pareto} & \multicolumn{1}{c}{Truncated $t_1$} & \multicolumn{1}{c}{Fr\'echet} & \multicolumn{1}{c}{Levy} & \multicolumn{1}{c}{Bonferroni} & \multicolumn{1}{c}{Fisher} \\ \hline
&  & \multicolumn{1}{c}{}  & \multicolumn{1}{c}{}  & \multicolumn{1}{c}{} & \multicolumn{1}{c}{} & \multicolumn{1}{c}{} & \multicolumn{1}{c}{} & \multicolumn{1}{c}{} \\
0 & 0.18 & 7.07E-03 & 5.36E-02 & 4.58E-02 & 5.19E-02 & 1.18E-02 & 6.30E-03 & 1.77E-01\\
& & (8.38E-05) & (2.25E-04) & (2.09E-04) & (2.22E-04) & (1.08E-04) & (7.91E-05) & (3.82E-04) \\
& & & & & & & & \\
0.5 & 0.39 & 4.20E-02 & 5.29E-02 & 5.00E-02 & 5.15E-02 & 8.53E-03 & 3.74E-03 & 2.92E-01\\
& & (2.01E-04) & (2.24E-04) & (2.18E-04) & (2.21E-04) & (9.20E-05) & (6.10E-05) & (4.55E-04)\\
& & & & & & & &\\
0.9 & 0.72 & 5.01E-02 & 5.11E-02 & 5.07E-02 & 4.98E-02 & 5.84E-03 & 1.51E-03 & 3.08E-01\\
& & (2.18E-04) & (2.20E-04) & (2.19E-04) & (2.18E-04) & (7.62E-05) & (3.89E-05) & (4.62E-04)\\
& & & & & & & & \\
0.99 & 0.91 & 5.03E-02 & 5.03E-02 & 5.03E-02 & 4.91E-02 & 5.16E-03 & 7.75E-04 & 3.10E-01\\
& & (2.18E-04) & (2.19E-04) & (2.19E-04) & (2.16E-04) & (7.16E-05) & (2.78e-05) & (4.63E-04)\\ \hline
\end{tabular}}

\end{subtable}
\hfill
\begin{subtable}{\linewidth}
\centering
\scriptsize
\caption{$\alpha=5\times10^{-4}$}
\scalebox{1}{
\begin{tabular}{cc|ccccccc}
\hline
 & & \multicolumn{7}{c}{Distributions} \\
$\rho$ & $C^t_{\nu,\rho}$ & \multicolumn{1}{c}{Cauchy} & \multicolumn{1}{c}{Pareto} & \multicolumn{1}{c}{Truncated $t_1$} & \multicolumn{1}{c}{Fr\'echet} & \multicolumn{1}{c}{Levy} & \multicolumn{1}{c}{Bonferroni} & \multicolumn{1}{c}{Fisher} \\ \hline
\multicolumn{1}{l}{} & \multicolumn{1}{l|}{} & & & & & & & \\
0 & 0.18 & 7.00E-05 & 4.91E-04 & 4.89E-04 & 4.91E-04 & 1.26E-04 & 7.70E-05 & 9.38E-02\\
& & (8.37E-06) & (2.22E-05) & (2.21E-05) & (2.22E-05) & (1.12E-05) & (8.77E-06) & (2.92E-04)\\
\multicolumn{1}{l}{} & \multicolumn{1}{l|}{} & & & & & & & \\
0.5 & 0.39 & 3.87E-04 & 4.79E-04 & 4.79E-04  & 4.79E-04 & 8.60E-05  & 4.10E-05 & 2.14E-01\\
&  & (1.97E-05) & (2.19E-05) & (2.19E-05) & (2.19E-05) & (9.27E-06)  & (6.40E-06) & (4.10E-04) \\
& & & & & & & &\\
0.9 & 0.72 & 5.36E-04 & 5.42E-04 & 5.42E-04 & 5.42E-04  & 4.90E-05   & 1.00E-05 & 2.49E-01 \\
& & (2.31E-05) & (2.33E-05) & (2.33E-05) & (2.33E-05) & (7.00E-06)   & (3.16E-06) & (4.33E-04) \\
& & & & & & & &\\
0.99 & 0.91 & 5.42E-04  & 5.43E-04 & 5.43E-04 & 5.43E-04 & 4.40E-05 & 7.00E-06 & 2.56E-01 \\
& & (2.33E-05) & (2.33E-05) & (2.33E-05) & (2.33E-05) & (6.63E-06)   & (2.64E-06) & (4.36E-04) \\ \hline
\end{tabular}}
\end{subtable}
\label{tab:t_copula_100}
\end{table}

\begin{table}[ht]
\centering
\caption{Cutoff ratio between minP and the Bonferroni test. The nominal significance level for the global test is $\alpha=0.05$ or $5\times10^{-4}$.}
\footnotesize{
\begin{tabular}{c|cccccc}
\hline
& \multicolumn{3}{c}{$\alpha=0.05$} & \multicolumn{3}{c}{$\alpha=5\times10^{-4}$}\\
& & & & & & \\
$\rho$ & $n=5$ & $n=20$ & $n=100$ & $n=5$ & $n=20$ & $n=100$ \\ \hline
& & & & & & \\
0    & 1.02 & 1.03 & 1.03 & 0.97 & 0.98 & 0.99 \\
& & & & & & \\
0.5  & 1.26 & 1.63 & 2.3 & 1.05 & 1.08 & 1.27 \\
& & & & & & \\
0.9  & 2.46 & 5.90 & 17.87 & 1.83 & 3.36 & 7.60 \\
& & & & & & \\
0.99 & 3.94 & 13.46	& 58.90	& 3.32 & 9.69 & 42.18 \\
& & & & & & \\
\hline
\end{tabular}
}
\label{tab:corrected_error}
\end{table}
\vspace*{\fill}


\end{document}